\newtheorem{thm}{Theorem}
 \numberwithin{dummy}{thm}
\newtheorem{fact}{Fact}
\newtheorem{lem}{Lemma}
\newtheorem{cor}{Corollary}[thm]
\newtheorem{remark}{Remark}
\newcommand{\Tr}[1]{\operatorname{Tr} \left\{ #1 \right\}}
\DeclareMathAlphabet{\gcal}{OMS}{cmsy}{m}{n}
\newcommand{\EE}{\mathbb{E}}
\DeclareMathOperator*{\argmax}{argmax}
\newcommand{\K}{\mathbf{K}}
\newcommand{\bK}{\mathbf{K}}
\newcommand{\R}{\mathcal{R}}
\newcommand{\OO}{\mathcal{O}}
\newcommand{\myS}{{\cal S}}
\newcommand\norm[1]{\left\lVert#1\right\rVert}
\DeclareMathOperator{\Var}{\textrm{Var}}
\DeclareMathOperator{\T}{\intercal}
\def\bx{{\mathbf x}}
\newtheorem{definition}{Definition}
\def\subg{\mathbf{\tt SubG}}
\def\P{\mathcal P}
\def\N{\mathbb N}
\def\D{\mathcal D}
\def\X{{\mathcal X}}
\def\bx{{\mathbf x}}
\def\bv{{\mathbf v}}
\def\psdset{{\mathbb S^{m}_{+}}}
\def\RR{{\mathbb R}}
\def\c{{\mathrm c}}
\def\Prob{\mathrm {Prob}}
\def\L1{\mathrm {L1}}
\def\prob{\mathrm {Prob}}
\def\sign{\mathrm {sgn}}
\def\Acc{\mathrm {Acc}}
\def\test{\mathrm {test}}
\def\rob{\mathrm {rob}}
\def\train{\mathrm {train}}
\def\emp{\mathrm {emp}}
\def\trials{\mathrm {trials}}
\def\target{\mathrm {target}}
\def\sv{\mathrm {sv}}
\def\IE{\textrm{IE}}
\def\bin{\textrm{Bin}}
\def\theory{\mathrm {sg}}
\def\practical{\mathrm {practical}}
\begin{document}
\title{Shot-frugal and Robust quantum kernel classifiers}

\author{Abhay Shastry }
\altaffiliation{Department of Computer Science and Automation, Indian Institute of Science, Bangalore 560012}
\email{abhayshastry@arizona.edu}
\author{Abhijith Jayakumar}
\altaffiliation{Theoretical Division, Los Alamos National Lab, New Mexico, US 87545}
\author{Apoorva D. Patel}
\altaffiliation{Center for High Energy Physics, Indian Institute of Science, Bangalore 560012}
\author{Chiranjib Bhattacharyya}
\altaffiliation{Department of Computer Science and Automation, Indian Institute of Science, Bangalore 560012}

\begin{abstract}
Quantum kernel methods are a candidate for quantum speed-ups in supervised machine learning.
The number of quantum measurements $N$ required for a reasonable kernel estimate is a critical resource, both from complexity considerations and because of the constraints of near-term quantum hardware.
We emphasize that for classification tasks, the aim is {\em reliable} classification and {\em not precise} kernel evaluation, and demonstrate that the former is far more resource efficient.
Furthermore, it is shown that the accuracy of classification is not a suitable performance metric in the presence of noise and we motivate a new metric that characterizes the reliability of classification. We then obtain a bound for $N$ which ensures, with high probability, that classification errors over a dataset are bounded by the margin errors of an idealized quantum kernel classifier.
Using chance constraint programming and the subgaussian bounds of quantum kernel distributions, we derive several Shot-frugal and Robust (ShofaR) programs starting from the primal formulation of the Support Vector Machine. This significantly reduces the number of quantum measurements needed and is robust to noise by construction.
Our strategy is applicable to uncertainty in quantum kernels arising from {\em any} source of unbiased noise.
\end{abstract}

\maketitle

\section{Introduction}

Kernel methods have been used extensively in a variety of machine learning tasks like classification, regression, and dimensionality reduction \cite{ScholkopfBook}.
The core ingredient of such algorithms is the so-called kernel trick, which implicitly embeds a datapoint into a higher (may be infinite) dimensional inner product space, known as the \emph{feature space}.
The embedding is implicit because the vectors in the feature space are never explicitly computed. Instead, a \emph{kernel function} is used to map any two input data points to the  inner product between their embeddings in the feature space.
The key insight that allows kernel methods to be used with quantum computers is the identification of a vector in the feature space with the density matrix of a many-body quantum system \cite{Schuld2019,Schuld2021}.

Quantum computing allows us to efficiently compute the inner products of exponentially large density matrices without the explicit tomography of the density matrices themselves.
This has prompted researchers to search for a quantum advantage in kernel methods  \cite{Kubler2021inductive,Havlicek2019,Liu2021rigorous,Huang2021data,Wang2021nisq}. However, quantum kernels introduce new challenges that are not present in the classical setting. In a major departure from classical kernels, quantum kernels cannot be explicitly evaluated. They can only be estimated by repeatedly preparing and measuring quantum logic circuits. Quantum measurements introduce a fundamental stochasticity (sampling noise) into the problem, even for an ideal device.
This is a crucial difference as the number of circuit evaluations one can perform is a finite resource in all available quantum computing platforms. In particular, the number of circuit evaluations $N$ required for an accurate kernel estimate up to a precision $\varepsilon$ scales as $\varepsilon^{-2}$. Exponential precision (in terms of the number of bits of $\varepsilon$) in the kernel estimate thus imposes exponentially large requirements on $N$, thereby making it hard to claim a quantum advantage \cite{Kubler2021inductive,thanasilp2022exponential}. A focus on this requirement is especially relevant for the current noisy intermediate-scale quantum (NISQ) era \cite{Bharti2022noisy} quantum devices, where the number of measurement shots are a resource to be used frugally \cite{Kubler2020frugal,Arrasmith2020frugal,Gu2021}.

In this work we address binary classification using a \textit{Support Vector Machine} (SVM) along with quantum kernels, emphasizing the number of circuit evaluations ($N$) required for classification. After choosing a suitable feature map, an SVM training algorithm builds a separating hyperplane between the classes of datapoints by solving a convex optimization problem.
In an SVM, the boundary between the classes carries the information relevant to the classifier and is described by a set of support vectors (of size $m_{\sv}$).
Together with the kernel trick, SVMs efficiently construct highly nonlinear classifiers in the original data space $\X$ \cite{Cortes1995support,Cervantes2020comprehensive,Burges1998tutorial}.
Suppose a kernel function $K:\X\times\X\to\RR$, which may be classically hard to compute, has to be evaluated on a quantum machine. How many measurements would be needed to construct a reliable classifier?
Crucially, our goal is {\em not} to accurately evaluate the kernel function itself
but to construct a reliable classifier with it.

\subsection{Problem Setup}

We work in the hybrid quantum-classical setup \cite{Havlicek2019,Schuld2019,Blank2020}.
\begin{enumerate}
\item The training phase requires kernel function evaluations between each pair of input datapoints (training kernel matrix).
This matrix specifies the SVM optimization problem which is solved on a classical machine. Its solution gives the optimal parameters $\beta,b$ of the SVM classifier.

\item A datapoint $\bx\in\X$ is then classified by the classifier function $f(\bx;\beta,b)$ which requires the estimation of kernel functions for each pair $K(\bx,\bx_{i}),\ i\in[m_{\sv}],$ consisting of the given datapoint and the set of support vectors of the SVM. Kernel evaluations are done on a quantum machine by repeated measurements of a quantum logic circuit. This measurement process introduces stochasticity (sampling noise) to the kernel estimates.

\item Using $N$ measurements {\em per kernel estimate}, we denote the stochastic kernel classifier (\ref{eq:skc*}, Definition \ref{def:skc}) by $f^{(N)}(\bx;\beta,b)$. Our aim is to study how reliably it reproduces the same label as the ideal or exact kernel classifier (\ref{eq:ekc*}, Definition \ref{def:ekc}) $f^{*}(\bx;\beta,b)$, which is notionally obtained when $N\to\infty$.
\end{enumerate}
Our study concerns the effect of noise during classification. Noise may or may not be present during training.

\subsection{Contributions}
In Section \ref{sec:q_samp_correction}, we highlight the problem of unreliability of quantum classifiers and show that classification accuracy (Definition \ref{def:accuracy}) is not a suitable error metric in the presence of noise (see Fig.\ \ref{fig:prob_statement}). We motivate
the metric of {\em reliability} (Definition \ref{def:reliability}) which is the {\em probability} with which the \ref{eq:skc*} agrees with \ref{eq:ekc*}.

In Section \ref{sec:q_circuits}, two quantum circuits are given to estimate the kernel, namely, the SWAP test and the GATES test. We show in Appendix \ref{sec:subgaussian} that the SWAP test always leads to greater variance for the same $N$. In Definition \ref{lemma:cf}, we quantify this by introducing the circuit factor.

In Theorem \ref{thm:N_bound}, we derive a lower bound on $N$ above which the SKC reliably classifies a given datapoint (i.e.\, with high probability agrees with EKC). This result sets up the margin of classification $\gamma$ as the relevant parameter to determine the number of measurement shots required as opposed to the precision $\varepsilon$ of a kernel estimate.

In this context, our main contributions, given in Sections \ref{sec:expt_1} and \ref{sec:expt_2}, are summarized below.\\
{\em Assumption 1a}: The training kernel matrix is assumed to be exact (noiseless training phase; see Appendix \ref{sec:more_rel} for an empirical justification).

\begin{enumerate}

\item Theorem \ref{thm:emp_risk} characterizes the number of measurement shots needed to reliably reproduce the labels of an ideal classifier (EKC) over any dataset. Based on concentration bounds of subgaussian distributions, Theorem \ref{thm:emp_risk} gives a lower bound $ N_{\theory}$, which scales as $ m_{\sv}\log M/\gamma^{2}$ for a dataset of size $M$. For any $N\geq N_{\theory}$,
the SKC has a classification error which is bounded by the $\gamma$-margin error of the EKC.
The margin $\gamma$, unlike the precision $\varepsilon$, is {\em not} vanishingly small for a well-chosen kernel.
Precise kernel evaluations, on the other hand, lead to a shots requirement which scales as $m_{\sv}M/\varepsilon^{2}$ (see Appendix \ref{sec:precise} for a derivation).

This result paves the way to adaptively set the shots requirement based on the dataset and kernel choice.

\item In Section \ref{sec:expt_2}, we take the chance constraint approach to handle the uncertainty in the training kernel matrix starting from the {\em primal formulation} of the SVM. In Theorem \ref{thm:rob}, we derive a convex approximation to the chance constraint and introduce the Shot-frugal and Robust (\ref{eq:rob_primal}) program. A robust stochastic kernel classifier (RSKC) is constructed using this program which reliably reproduces EKC with far fewer measurement shots than would be needed for the nominal SKC. This program results in a classifier which is robust to noise by construction.
\end{enumerate}
{\em Assumption 1b:} We now relax the Assumption 1a for the Robust program but the nominal SVM program still has access to the exact (noiseless) training kernel matrix.
\begin{enumerate}[resume]
\item In Section \ref{sec:stat_est}, we consider the setting where the training kernel matrix is also inferred from repeated measurements.
In Theorem \ref{thm:rob_est}, we derive a refined Shot-frugal and Robust \ref{eq:J_rob_l2_est} program that takes as input an estimated training kernel matrix in a principled manner.
In Corrolaries \ref{thm:rob_l1} and \ref{thm:rob_est_l1}, we consider the L1-norm relaxation of the two ShoFaR programs,   \ref{eq:J_rob_l1} and \ref{eq:J_rob_l1_est} respectively, in order to reduce the set of support vectors. We compare the performance of all these programs to the nominal \ref{opt:p} SVM and show significant savings (see Table\ \ref{tab:compare_robust}), {\em even though} the nominal SVM was given access to the exact training kernel matrix. 
\end{enumerate}
{\em Assumption 1c:} We now relax the Assumption 1b for both the nominal SVM and the Robust programs and additionally include a device noise for both training and classification phases.
\begin{enumerate}[resume]
\item In Appendix \ref{sec:depol}, we work under the physically realisitic assumption above. The device noise is modeled using the depolarizing noise model and the noise affects both the classification and training phases. Without error mitigation during training, we note that the accuracy of EKC suffers for the nominal SVM. However, the accuracy of the Robust SVM is not affected with the same noise. It is noteworthy that the Robust program \ref{eq:J_rob_l2_est}, without any modifications or error mitigation, outperforms the nominal SVM (both on accuracy and reliability) even if the latter has undergone error mitigation.
\end{enumerate}

Together, these contributions highlight the problem of unreliability of quantum kernel classifiers and develop the methods to construct robust classifiers which use quantum resources frugally \footnote{The robust classifiers are constructed by solving a second-order cone program rather than a quadratic program and may thus require more classical resources.}.

\subsection{Datasets}
We present our results over the following datasets which have been frequently used in the quantum machine learning literature:
\begin{enumerate}
\item Circles ({\tt make_circles} in {\tt sklearn}) \cite{Schuld2019, vedaie2020quantum, Watkins2023, Yang2023}
\item Moons ({\tt make_moons} in {\tt sklearn}) \cite{Lloyd2020,Simonetti2022,Watkins2023,Banchi2021}
\item Havlicek dataset \cite{Havlicek2019}, generated over 2 qubits
\item Checkerboard dataset \cite{Hubregtsen2021}
\end{enumerate}
All the figures in the article correspond to the Circles dataset.
Our results in the main Sections \ref{sec:expt_1}, \ref{sec:chance} and Appendix \ref{sec:depol} are presented over the training dataset itself.
This choice starkly highlights the problem of unreliable classification. It shows that
SKC is incapable of reliably reproducing {\em even the training labels} when $N$ is not large enough.
In Section \ref{sec:stat_est} and Appendices \ref{sec:datasets} and \ref{sec:more_rel}, we show the performance of our Robust programs over an independent test set.

\section{Preliminaries}
\label{sec:prelims}
In this Section, we introduce the relevant notation and provide useful preliminaries.

{\em Notation:} The set of natural numbers is denoted by $\N$.
For any $m \in \N$, $[m]$ denotes the set $\{1,2,\ldots,m\}$.
$\RR^d$ denotes a $d$-dimensional real vector space.
For any $\bv \in \RR^d$, the euclidean norm is $\|\bv\| = \sqrt{\bv^\top \bv}$.
A random variable $X \in \RR^d$ distributed according to probability distribution ~$\P$ is denoted as $X \sim \P$, and its expectation value denoted as $\EE(X)$.
The binomial random variable is denoted by $\bin(N,p)$, where $N \in \N$ is the number of independent Bernoulli trials each with success probability $0 \le p \le 1$.
The number of quantum measurements or circuit evaluations (per kernel estimation) is denoted by $N$.
The size of the training set is denoted by $m$.
$\psdset$ denotes the space of $m\times m$ positive semidefinite and symmetric matrices. $K$ represents kernel functions and bold font $\K$ represents the kernel matrix over the
training data. $f$ and $h$ denote classifier functions.

\begin{definition}[Subgaussian random variable]
A random variable $X\sim \subg(\sigma^2)$ is subgaussian, if $\EE(X) = 0$ and
\begin{equation}
\EE(e^{sX}) \le e^{\frac{1}{2}s^2\sigma^2} \quad \forall s \in \RR.
\end{equation}
The smallest such $\sigma$ is called the subgaussian norm of $X$.
\end{definition}
The following facts about subgaussian random variables are useful.
\begin{fact}\label{fact:bin}
Define $Z = \frac{1}{N} X - p$, where $X \sim \bin(N,p)$.
Then $Z \sim \subg(\sigma^2)$  for any $\sigma^2 \ge \texttt{Var}(Z)$.
\end{fact}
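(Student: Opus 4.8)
The plan is to verify the moment generating function (MGF) bound in the definition of $\subg(\sigma^2)$ directly, after two reductions. First, centering is immediate: since $\EE(X)=Np$ we have $\EE(Z)=\tfrac1N\EE(X)-p=0$, and $\Var(Z)=\tfrac1{N^2}\Var(X)=p(1-p)/N$. Second, the phrase ``for any $\sigma^2\ge\Var(Z)$'' follows from the single boundary case $\sigma^2=\Var(Z)$ by monotonicity of the Gaussian MGF in $\sigma^2$: if $\EE(e^{sZ})\le e^{\frac12 s^2\Var(Z)}$ for all $s$, then the same bound holds with $\Var(Z)$ replaced by any larger $\sigma^2$, since $e^{\frac12 s^2 \Var(Z)}\le e^{\frac12 s^2\sigma^2}$. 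So it suffices to establish $\EE(e^{sZ})\le e^{\frac12 s^2 p(1-p)/N}$ for every $s\in\RR$.

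Next I would pass to the per-trial cumulant generating function. Writing $Z=\tfrac1N\sum_{i=1}^N Y_i$ with $Y_i=B_i-p$ i.i.d.\ centered Bernoulli, independence gives $\log\EE(e^{sZ})=N\psi(s/N)$, where $\psi(t)=\log((1-p)+pe^{t})-pt$. Substituting $u=s/N$, the target bound is equivalent to the $N$-free single-trial inequality $\psi(u)\le \tfrac12\,p(1-p)\,u^2$ for all $u\in\RR$. One computes $\psi(0)=\psi'(0)=0$ and $\psi''(0)=p(1-p)$, so the claim is exactly that the cumulant generating function of a centered Bernoulli is globally dominated by its osculating parabola $\tfrac12\psi''(0)u^2$.

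The hard part is precisely this last inequality, and I expect it to be the main obstacle. A Taylor-remainder attack writes $\psi(u)=\int_0^u(u-w)\psi''(w)\,dw$ with $\psi''(w)=q(w)(1-q(w))$ the variance of the exponentially tilted Bernoulli, $q(w)=pe^{w}/((1-p)+pe^{w})$. The difficulty is that $q(w)(1-q(w))$ is \emph{not} bounded by $p(1-p)$: as $w$ sweeps $\RR$, $q(w)$ passes through $1/2$ and the tilted variance rises to $1/4$. Bounding it crudely by $1/4$ only reproduces Hoeffding's lemma, $\psi(u)\le u^2/8$, which is weaker than the variance-level constant demanded here whenever $p\neq 1/2$. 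Closing the gap therefore requires a sharp, global convexity comparison between $\psi$ and the parabola rather than a pointwise curvature bound; concretely I would study the sign of $g(u)=\tfrac12 p(1-p)u^2-\psi(u)$ through its derivatives, exploiting $g(0)=g'(0)=g''(0)=0$ and the structure of $\psi'''$. The symmetric case $p=\tfrac12$ is transparent, since there $\psi(u)=\log\cosh(u/2)$ and the elementary bound $\cosh(x)\le e^{x^2/2}$ gives the inequality at once; carrying the same global comparison through for asymmetric $p$ is the technical crux where the sharp sub-Gaussian constant of a Bernoulli variable must be controlled.
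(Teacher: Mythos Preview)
Your reduction is clean and correct up to the final target inequality, but the difficulty you flag in the last paragraph is not a ``technical crux'' to be overcome: it is an impossibility. The inequality $\psi(u)\le\tfrac12\,p(1-p)\,u^2$ for all $u\in\RR$ is \emph{false} whenever $p\notin\{0,\tfrac12,1\}$. For instance, with $p=0.1$ and $u=5$ one has $\psi(5)=\log(0.9+0.1e^{5})-0.5\approx 2.26$, whereas $\tfrac12\cdot 0.09\cdot 25=1.125$. Equivalently, a centered Bernoulli is \emph{not} strictly sub-Gaussian: its optimal variance proxy is the logarithmic-mean expression $\sigma^2(p)=(p-q)/\bigl(2(\ln p-\ln q)\bigr)$ of Buldygin--Moskvichova, which the paper itself quotes in its appendix, and which strictly exceeds the variance $p(1-p)$ for asymmetric $p$. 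Hence Fact~\ref{fact:bin} as literally stated---that any $\sigma^2\ge\Var(Z)$ is a valid sub-Gaussian parameter---cannot be proved, because it is wrong for $N=1$ and $p\ne 1/2$.

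The paper offers no proof of this Fact; it is simply asserted. Its downstream uses, however, never invoke the sharp constant $\Var(Z)=p(1-p)/N$: they only need $\sigma_0^2=\c^2/(4N)$, i.e.\ the $p$-uniform bound $1/(4N)$ for the GATES circuit (Definition~\ref{lemma:cf} and the derivation in the appendix). That weaker statement is exactly Hoeffding's lemma, which you already recovered in your argument by bounding $\psi''\le 1/4$. So your ``crude'' Hoeffding step is not a concession but the correct endpoint; the stronger claim you set out to prove is not available.
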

\begin{fact}
Subgaussian random variables obey the following tail bounds \cite{Buldygin1980}:
\begin{enumerate}
\item If $X \sim \subg(\sigma^2)$ then for every $t > 0$,
$$\prob(X \ge t) \le \exp\bigg(-\frac{t^2}{2\sigma^2}\bigg).$$
\item
\label{fact:tail}
Let $X_{i} \sim \subg(\sigma^2),\ i\in[n]$ be independent random variables.
Then for any $a\in\RR^d$ the random variable $Y=\sum_{i=1}^{n}a_{i}X_{i}$ satisfies
\begin{equation}\label{eq:tail}
\Prob\big(Y \geq t\big)\leq \exp\bigg(-\frac{t^2}{2\sigma^2\|a\|^2}\bigg).
\end{equation}
\end{enumerate}
\end{fact}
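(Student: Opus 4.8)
The plan is to establish both tail bounds by the exponential Markov (Chernoff) method, obtaining the second bound from the first after showing that subgaussianity is preserved under linear combinations of independent variables.

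For part 1, I would fix $t>0$ and, for an arbitrary $s>0$, apply Markov's inequality to the nonnegative random variable $e^{sX}$ to get $\prob(X \ge t) = \prob(e^{sX} \ge e^{st}) \le e^{-st}\,\EE(e^{sX})$. Inserting the subgaussian bound $\EE(e^{sX}) \le e^{s^2\sigma^2/2}$ yields $\prob(X \ge t) \le \exp(\tfrac{1}{2}s^2\sigma^2 - st)$ for every $s>0$. I would then minimize the exponent over $s>0$: the quadratic $s\mapsto \tfrac{1}{2}s^2\sigma^2 - st$ attains its minimum at $s=t/\sigma^2$, giving exponent $-t^2/(2\sigma^2)$, which is the asserted bound.

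For part 2, I would first verify that $Y=\sum_{i=1}^n a_i X_i$ is subgaussian with parameter $\sigma^2\|a\|^2$. Since each $\EE(X_i)=0$, linearity gives $\EE(Y)=0$; and by independence the moment generating function factorizes, so using the defining subgaussian property of each $X_i$ at the rescaled argument $sa_i$,
\begin{equation*}
\EE(e^{sY}) = \prod_{i=1}^n \EE\big(e^{(sa_i)X_i}\big) \le \prod_{i=1}^n e^{(sa_i)^2\sigma^2/2} = \exp\Big(\tfrac{1}{2}s^2\sigma^2\textstyle\sum_{i=1}^n a_i^2\Big) = \exp\Big(\tfrac{1}{2}s^2\sigma^2\|a\|^2\Big)
\end{equation*}
for all $s\in\RR$, where the last equality uses $\|a\|^2=\sum_i a_i^2$. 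Hence $Y\sim\subg(\sigma^2\|a\|^2)$, and applying part 1 with $\sigma^2\|a\|^2$ in place of $\sigma^2$ gives $\Prob(Y\ge t)\le\exp(-t^2/(2\sigma^2\|a\|^2))$.

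I do not expect a genuine obstacle: both parts are routine once the Chernoff template is in place. The only points requiring care are that the subgaussian MGF bound must be invoked at the scaled argument $sa_i$ (legitimate because the hypothesis holds for \emph{all} real arguments), that independence of the $X_i$ is precisely what licenses the factorization of the joint MGF into a product, and that the optimization over $s$ in part 1 must be confined to $s>0$ since the exponentiation step underlying Markov's inequality requires a positive multiplier.
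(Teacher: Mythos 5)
Your argument is correct and complete: the Chernoff--Markov bound with the optimal choice $s=t/\sigma^2$ for part 1, and the factorization of the moment generating function under independence to show $Y\sim\subg(\sigma^2\|a\|^2)$ for part 2, is exactly the standard derivation. The paper itself states this as a Fact and defers to the cited reference (Buldygin--Kozachenko) rather than proving it, so there is no in-paper proof to diverge from; your write-up supplies precisely the textbook argument that the citation stands in for, and the points you flag (invoking the MGF bound at the scaled argument $sa_i$, and restricting the optimization to $s>0$) are the right ones to be careful about.
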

This immediately leads to the following assertion.
\begin{lem}
\label{lemma:tail}
Let $Y$ be defined as in Fact ~\ref{fact:tail}.
For any $0<\delta\leq1$,
\begin{equation}
\prob(Y \ge t) \le \delta
\end{equation}
holds whenever
\begin{equation}
t\geq\sigma\norm{a}\kappa(\delta),\nonumber
\end{equation}
where
\begin{equation}
\kappa(\delta) = \sqrt{2\log\bigg(\frac{1}{\delta}\bigg)}.
\end{equation}
\end{lem}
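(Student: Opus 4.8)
The plan is to read off the result as a one-line inversion of the subgaussian tail bound already established in Fact~\ref{fact:tail}. First I would recall that, for $Y=\sum_{i=1}^{n}a_i X_i$ with the $X_i\sim\subg(\sigma^2)$ independent, inequality \eqref{eq:tail} states $\Prob(Y\ge t)\le \exp\!\big(-t^2/(2\sigma^2\|a\|^2)\big)$ for every $t>0$. The key structural observation is that the right-hand side is a strictly decreasing function of $t$ on $(0,\infty)$; hence it suffices to identify the threshold at which this exponential first drops to $\delta$, and the desired inequality $\Prob(Y\ge t)\le\delta$ will then hold automatically for all larger $t$ as well.

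Second, I would solve $\exp\!\big(-t^2/(2\sigma^2\|a\|^2)\big)\le\delta$ for $t$. Taking logarithms (legitimate since both sides are positive and $\log$ is increasing) turns this into $t^2\ge 2\sigma^2\|a\|^2\log(1/\delta)$, i.e.\ $t\ge\sigma\|a\|\sqrt{2\log(1/\delta)}=\sigma\|a\|\kappa(\delta)$. It is precisely the hypothesis $0<\delta\le 1$ that makes $\log(1/\delta)\ge 0$, so that $\kappa(\delta)$ is real and the threshold $\sigma\|a\|\kappa(\delta)$ is well-defined and nonnegative. Combining with the monotonicity remark, any $t\ge\sigma\|a\|\kappa(\delta)$ yields $\Prob(Y\ge t)\le\exp\!\big(-t^2/(2\sigma^2\|a\|^2)\big)\le\delta$.

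Third, I would dispose of the degenerate boundary case. If the threshold $\sigma\|a\|\kappa(\delta)$ is strictly positive, then every admissible $t$ is itself positive and Fact~\ref{fact:tail} applies verbatim. If instead the threshold equals $0$ (which occurs at $\delta=1$, or trivially when $\sigma=0$ or $a=0$), then the only extra point to check is $t=0$, where $\Prob(Y\ge 0)\le 1=\delta$ is immediate, while $t>0$ is again covered by Fact~\ref{fact:tail}.

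I do not anticipate any genuine obstacle: the statement is a cosmetic repackaging of the subgaussian tail bound into the ``how large must $t$ be to guarantee failure probability at most $\delta$'' form, which is the shape needed later when a target reliability $1-\delta$ is converted into a shot budget $N$. The only points demanding a sentence of care are the monotonicity argument (so the bound is obtained for the whole ray $t\ge\sigma\|a\|\kappa(\delta)$, not just at the endpoint) and the sign condition on $\log(1/\delta)$ coming from $\delta\le 1$.
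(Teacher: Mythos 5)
Your proposal is correct and is exactly the argument the paper has in mind: the paper presents Lemma~\ref{lemma:tail} as an immediate inversion of the tail bound in Fact~\ref{fact:tail}, which is precisely your monotonicity-plus-logarithm computation. The extra care you take with the boundary case $\delta=1$ (where the threshold is $t=0$ and the bound holds trivially) is a harmless refinement the paper leaves implicit.
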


\subsection{Support Vector Machines for Classification}
\label{sec:svm}
A function $f: \X \subset \RR^d \to \{-1,1\}$ is called a binary classifier, which takes an \emph{observation} $\bx$ and outputs a label $y$.
The problem of estimating $f$ from a training dataset
\begin{equation} \label{eq:data}
\D_{\train} = \left\{\left(\bx_{i},y_{i}\right) \left| \bx_i \in \X \subset \RR^d,~ y_i \in \{-1,1\},~ i\in [m]\right.\right\}
\end{equation}
consisting of $m$ i.i.d.\ (\emph{observation},\ \emph{label}) pairs drawn from a distribution $\P$ is of great interest in machine learning.
The aim is to obtain an $f$ such that the generalization error, $\Prob\left(f(\bx) \ne y\right)$, is small for $(\bx,y) \sim \P$.

Support Vector Machines (SVM) are widely used classifiers \cite{Cervantes2020comprehensive}, which have been extremely successful in practical applications.
For a dataset $\D_{\train}$, a SVM kernel classifer is
$$f(\bx| \alpha, b; K ) = \sign\bigg(\sum_{i = 1}^{m} \alpha_i y_i K(\bx_i,\bx) + b\bigg),$$
where the $\alpha =[\alpha_1,\ldots,\alpha_m]^\top$ is obtained by solving the problem
\begin{align*}
\max_{\alpha} &\left(\sum_{i=1}^m \alpha_i - \frac{1}{2} \sum_{i,j} \alpha_i\alpha_jy_iy_j K(\bx_{i},\bx_{j})\right), \tag{\mbox{\texttt{DUAL}}}\label{eq:svm}\\
 \textrm{s.t.} &\sum_{i=1}^{m}\alpha_{i} y_{i} = 0 \ \textrm{and} \ \ 0\leq\alpha_{i}\leq C,
 \end{align*}
for some $C\geq0$.

This optimization problem can be solved efficiently using convex quadratic programming \cite{ScholkopfBook}.
The kernel function $K: \X \times \X \to \RR$ is \emph{positive semidefinite}, and plays the role of a dot product in a suitably defined Reproducing Kernel Hilbert Space (RKHS).
One of the most interesting properties of the kernel function is it implicitly defines an embedding of the observation that does not need to be explicitly computed.
For a comprehensive exposition, see Ref.\ \cite{VapnikBook}.
For brevity we introduce the kernel matrix $\bK \in \psdset$, whose entries are $\K_{ij} = K(\bx_{i},\bx_{j})$ for any $i,j \in [m]$.
Since the kernel function is positive semidefinite, the kernel matrix is also positive semidefinite.

To help develop our ideas, we work with an equivalent setup.
We consider classifiers of the form
\begin{equation}\label{eq:classifier}
f(\bx| \beta, b; K ) = \sign\bigg(\sum_{i = 1}^{m} \beta_i  K(\bx_i,\bx) + b\bigg),
\end{equation}
where $\beta=[\beta_1,\ldots,\beta_m]^\top \in \RR^m,b\in \RR$.
The parameters $\beta,b$ of the classifier are obtained by minimizing
\begin{equation}
\min_{\beta \in \RR^m, b \in \RR} J(\beta, b;\bK),  \tag{\texttt{PRIMAL}} \label{opt:p}
\end{equation}
where
\begin{equation}\label{eq:obj}
\begin{aligned}
J(\beta, b;\bK)=&C \sum_{i=1}^m  \max\left(1 - y_i(\sum_{j=1}^m \beta_j\bK_{ij} + b),0\right) \\
 &+ \frac{1}{2}\sum_{ij}\beta_i\beta_j\bK_{ij}
\end{aligned}
\end{equation}
for some $C \ge 0$ (same as in \ref{eq:svm}) that sets the relative weights of the two terms.
The first term is known as the hinge loss, and the second term is called the regularization term.
Since the kernel function is positive semidefinite, the resulting problem
is thus again a convex quadratic optimization problem.

\subsection{Quantum Embedding Kernels}

Kernel methods have a natural extension into the quantum setting.
A quantum computer can be used to embed the data into a high dimensional Hilbert space and the kernel function can be computed by estimating the state overlaps.
This also offers a path to obtain quantum advantage in machine learning, since certain type of kernel functions are (conjectured to be) hard to evaluate classically but easy to do so quantum mechanically \cite{Havlicek2019,Liu2021rigorous,Kubler2021inductive}.
A quantum embedding circuit takes the classical input datapoint $\bx\in \mathcal{X}\subset \mathbb{R}^{d}$, and maps it to a quantum state $\ket{\phi(\bx)}\in \mathcal{H}$, the {\em computational} Hilbert space.
A valid kernel function $K: \mathcal{X}\times \mathcal{X} \to \mathbb{R}$ can be chosen as
\begin{equation}
\label{eq:QEK}
K(\bx,\bx') \equiv \Tr{\rho(\bx) \rho(\bx')} =  | \braket{\phi(\bx)|\phi(\bx')}|^2,
\end{equation}
where $\rho(\bx) \equiv \ket{\phi(\bx)}\bra{\phi(\bx)}$ is the density matrix that plays the role of a vector in the feature Hilbert space $\mathcal{H}\otimes \mathcal{H}^{*}$.
Such kernels are known as \emph{quantum embedding kernels} (QEK) \cite{Schuld2019,Schuld2021}.

The mapping into the computational Hilbert space $\ket{\phi(\bx)} = U(\bx) \ket{0}^{n}$ is achieved by a unitary operator $U(\bx)$ using a quantum circuit dependent on $\bx$.
For example, we may use the elements of $\bx$ as the angles of rotation in the various
$1$- and $2$-qubit gates used to construct the quantum circuit.
Such an angle embedding circuit is shown in appendix, and is one of the simplest ways to encode the input data $\bx$ into a quantum state $\ket{\phi(\bx)}$.
Given such a circuit, we can map each datapoint to a quantum density matrix, $\rho(\bx) = U(\bx) \ket{0\ldots0}\bra{0\ldots0} U^\dagger(\bx)$.
The kernel can then be expressed as the Hilbert-Schmidt inner product $K(\bx,\bx') = \Tr{\rho{(\bx)}\rho{(\bx')}}$, which can be evaluated using either the SWAP test or the GATES circuit that we describe in the next Section.


\section{Sampling Noise in Quantum Kernels}
\label{sec:q_samp_correction}


Let $K^*:\X\times\X\to\RR$ denote the ideal quantum kernel function and $K^{(N)}$ be the stochastic kernel function estimated from $N$ quantum measurements. The problem of unreliable classification is highlighted and formalized in Sec.\ \ref{sec:prob_statement}.
The quantum circuits for evaluating the kernel function are discussed in Sec.\ \ref{sec:q_circuits}.
We show that these circuits are affected differently by the sampling noise and in Definition \ref{lemma:cf}, we quantify this fact by introducing a circuit factor.
In Theorem \ref{thm:N_bound}, we derive a sufficient condition for $N$ which ensures reliable classification of a single datapoint.

\subsection{Problem of Unreliable Classification}
\label{sec:prob_statement}
\begin{figure}
    \centering
         \centering
         \includegraphics[width=0.5\textwidth]{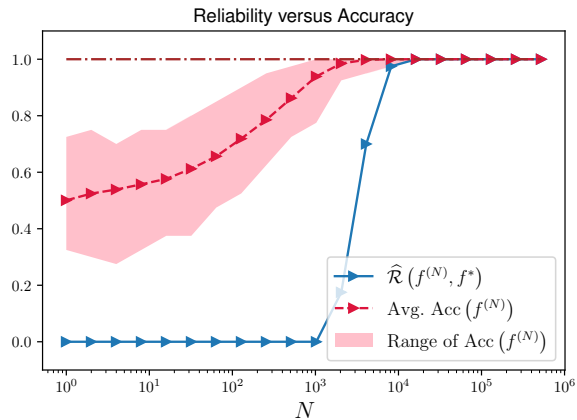}

\captionsetup{justification=raggedright}
\caption{{\bf Reliability and Accuracy.} Performance of \ref{eq:skc*} is shown as a function of $N$, calculated over $N_{\trials}=200$ independent trials. \ref{eq:ekc*} is thus
The observed accuracies $\Acc\left(f^{(N)}\right)$ of \ref{eq:skc*} can show large variations over the different trials
(red shaded area shows the range). Reliability $\widehat{\R}(f^{(N)})$ measures the fraction of data points which are reliably classified (definition \ref{def:emp_rel}).
Notice that even when the average accuracy is high ($>85\%$ at $N=512$), the reliability can be very low ($0\%$ at $N=512$).}
\label{fig:prob_statement}
\end{figure}
We highlight the main problem of quantum kernel classifiers, i.e., they do not reliably reproduce the labels of an ideal quantum classifier even with reasonable values of $N$ (see Fig.\,\ref{fig:prob_statement}). We compare the two classfiers below.
\begin{definition}[{\bf Exact Kernel Classifier}]
\label{def:ekc}
The exact kernel classifier or the true classifier is
\begin{equation}
 f^{*}(\bx) = f\left(\bx|\beta^{*},b^{*};K^{*}\right), \tag{EKC}
\label{eq:ekc*}
\end{equation}
where $\beta^{*},b^{*}$ denote the optimal coefficients minimizing the \ref{opt:p} objective $J(\beta,b;\K^{*})$.
\end{definition}

\begin{definition}[{\bf Stochastic Kernel Classifier}]
\label{def:skc}
Let $N$ denote the number of measurements performed to estimate the kernel function $K(\bx,\bx_{i})$, for each $i\in[m]$. Then
\begin{equation}
f^{(N)}(\bx) = f\left(\bx|\beta^{*},b^{*};K^{(N)}\right), \tag{SKC}
\label{eq:skc*}
\end{equation}
where, as in \ref{eq:ekc*}, $\beta^{*},b^{*}$ denote the optimal coefficients minimizing the \ref{opt:p} objective $J(\beta,b;\K^{*})$.
\end{definition}

Naturally,
\begin{equation}
\lim_{N\to\infty} f^{(N)}(\bx) = f^{*}(\bx).
\end{equation}
But the convergence can be slow in practice, and would depend on the dataset and choice of kernel function $K^{*}$.

Classifier performance over a dataset is generally quantified by its accuracy.
\begin{definition}[{\bf Accuracy}]
\label{def:accuracy}
The accuracy of a classifier $f(\bx)$ over a dataset $\D = \{(\bx_{i},y_{i}),\ i\in[M]\}$,
\begin{equation}
\textrm{Acc}\left(f\right) = \frac{1}{M}\sum_{i=1}^{M}\mathbbm{1}_{f(\bx_{i}) = y_{i}},
\label{eq:acc}
\end{equation}
is the fraction of datapoints classified correctly.
\end{definition}
For a stochastic kernel classifier $ f^{(N)}(\bx)$, the classification of any datapoint $\bx\in\X$ depends upon the evaluation of kernels $K^{(N)}(\bx,\bx_{i}), i\in[m]$
which are random variables. The accuracy (\ref{eq:acc}) is thus also a random variable and may show large variation.

Given some datapoint $\bx \in \X$, we would like the classifier \ref{eq:skc*} to return the same label as \ref{eq:ekc*} with high probability. This is captured by its reliability.

\begin{definition}[{\bf Reliability}]
The reliability of the quantum classifier $f^{(N)}$, at a datapoint and over a dataset, respectively,
are defined as follows.
\label{def:reliability}
\begin{enumerate}
\item {\bf Reliability at a datapoint} $\bx \in \X$ is the probability with which it agrees with the ideal quantum classifier $f^*$.
\begin{equation}
\R(\bx;f^{(N)},f^{*}) \equiv \Prob\left(f^{(N)}(\bx)=f^{*}(\bx)\right).
\label{success_prob}
\end{equation}
\item {\bf Reliability over a dataset} $\D$ of size $M=|\D|$ is the fraction of points in the dataset that
are {\bf reliably} classified
\begin{equation}
\label{rel_dataset}
{\R_{\delta}(f^{(N)},f^*) = \frac{1}{M} \sum_{s=1}^{M}\mathbbm{1}_{{\R}\left(\bx_{s};f^{(N)},f^{*}\right)\geq1-\delta}},
\end{equation}
where $0<\delta\ll1$ denotes a small probability.
\end{enumerate}
\end{definition}
Thus, we say that a point $\bx\in\X$ is {\em reliably classified} if its reliability is very close to unity.
We say that \ref{eq:skc*} {\em reliably reproduces} \ref{eq:ekc*} when {\em all points} in the dataset are reliably classified, i.e.,
with high probability $1-\delta$.

We provide the following empirical metric to quantify the reliability, (\ref{success_prob}) and (\ref{rel_dataset}) respectively, over a number of trials.
\begin{definition}[{\bf Empirical Reliability}]
\label{def:emp_rel}
Let $f^{(N)}{\{1\}}, f^{(N)}{\{2\}}, \ldots, f^{(N)}{\{N_{\trials}\}}$ denote the specific instantiations of a stochastic classifier $f^{(N)}$ over different trials. Then the {\bf empirical reliability}
\begin{enumerate}
\item {\bf at a datapoint} $\bx\in\X$ is the empirical probability that the quantum classifier $f^{(N)}$
agrees with the exact classifier $f^{*}$:
\begin{equation}
\widehat{\R}\left(\bx;f^{(N)},f^{*}\right) = \frac{1}{N_{\trials}}\sum_{k=1}^{N_{\mathrm{trials}}} \mathbbm{1}_{f^{(N)}{\{k\}}(\bx) = f^{{}^*}(\bx)},
\end{equation}
\item {\bf over a dataset} $\mathcal{D}$ is the fraction of points which are classified reliably ($|\D|=M$):
\begin{equation}
\label{reliability}
\widehat{\R_{\delta}}\left(f^{(N)},f^{*}\right) = \frac{1}{M} \sum_{s=1}^{M}\mathbbm{1}_{\widehat{\R}\left(\bx_{s};f^{(N)},f^{*}\right)\geq1-\delta},
\end{equation}
where $0<\delta\ll1$ is a small probability.
\end{enumerate}
\end{definition}
We find that choosing $\delta=0$ hardly affects our results for empirical reliability. With this choice, we drop the subscript $\delta$ henceforth from the empirical reliability (\ref{reliability}).
In our numerical experiments, we set $N_{\trials}=200$.

{\bf Problem of Unreliable Classification:}
An illustration of our numerical experiments is shown in Fig.\ \ref{fig:prob_statement}.
We observe that a large number of measurements are needed for \ref{eq:skc*} to
reliably reproduce the labels {\em even for the training data}.
Notice:
\begin{enumerate}
\item A high accuracy on {\em average} carries little information regarding how {\em reliably} an observation $\bx$ is classified by \ref{eq:skc*}.
For example, the average accuracy $>85\%$ for $N=512$, but {\em none} of the points are classified reliably.

\item A wide range of the observed accuracies over different instantiations of \ref{eq:skc*} indicates unreliable classification.
\end{enumerate}
As a function of $N$, $\widehat{\R}(f^{(N)},f^{*})$ is $0$ for a large range of $N$ values where the average accuracies can be high in Fig.\ \ref{fig:prob_statement}. In this sense, the reliability is indicative of the worst instantiation of SKC: if the worst performing instantiation of SKC is able to reproduce the output of \ref{eq:ekc*} over a dataset, then we can say that SKC is reliable.

\subsection{Quantum sampling noise is circuit-dependent}
\label{sec:q_circuits}
We shall use the shorthand notation $K^{*}_{ij} = K^{*}(\bx_{i},\bx_{j})$ and $K^{(N)}_{ij} = K^{(N)}(\bx_{i},\bx_{j})$. We denote their difference at finite sampling as
\begin{equation}\label{DeltaK}
\Delta{K}^{(N)} _{ij} = K^{*}_{ij} - K^{(N)}_{ij}, \ \textrm{for any}\ \bx_i,\bx_j\in\X.
\end{equation}
In what follows, we demonstrate that the sampling noise depends on the quantum logic circuit used to evaluate the kernel function.

We consider the two different ways in which the overlaps $K_{ij} = |\bra{\phi(\bx_{i})}\phi(\bx_{j})\rangle|^{2}$ can be evaluated \cite{Havlicek2019}.
The data $\bx_{i}\in\mathcal{X}$ is transformed into a state in the computational Hilbert space $\mathcal{H}$ by starting with all $n$ qubits in the $\ket{0}$ state and using the unitary operator
\begin{equation}
\ket{\phi(\bx_{i})} = U(\bx_{i}) \ket{0}^{\otimes n}.
\end{equation}
The first method is to express the kernel as $K(\bx_{i},\bx_{j}) = |\bra{0}^{\otimes n} U^{\dagger}(\bx_{i})U(\bx_{j})\ket{0}^{\otimes n}|^{2}$, which can be evaluated by applying $U^{\dagger}(\bx_{i})U(\bx_{j})$ to the initial state $\ket{0}^{\otimes n}$ and then estimating the probability that the final state is also $\ket{0}^{\otimes n}$.
We refer to this as the GATES test, shown in fig.\ \ref{fig:circuit}(a).
It has also been called the inversion test \cite{Lloyd2020} and the adjoint test \cite{Hubregtsen2021}.
The value of the kernel function  $K^{(N)}_{ij} = K^{(N)}(\bx_{i},\bx_{j})$ is inferred
using $N$ independent Bernoulli trials, each of which gives the final state $\ket{0}^{\otimes n}$ with probability $p = K^*_{ij}$.
Hence,
\begin{equation}\tag{GATES}
K^{(N)}_{ij} \sim \frac{1}{N}\ \bin\bigg(N, p = K^{*}_{ij}\bigg).
\label{gates_dist}
\end{equation}

An alternate way to measure fidelity or state overlap is the SWAP test \cite{NielsenChuangBook}, shown in fig.\ \ref{fig:circuit}(b).
In this case, the ancilla bit is read out, and the probability of obtaining the state $\ket{0}$ can be written in terms of the state overlap
\begin{align}
\Prob\big(\textrm{ancilla} = \ket{0}\big) =& \frac{1}{2}\big(1 + |\bra{\phi(\bx_{i})}\phi(\bx_{j})\rangle|^{2}\big)\\
 = &\frac{1}{2} \left(1 + K_{ij}^{*}\right).
\end{align}
We infer $K_{ij}^{(N)}$ through $N$ independent Bernoulli trials that measure the state of the ancilla bit, and therefore obtain
\begin{equation}\tag{SWAP}
K^{(N)}_{ij} \sim \frac{2}{N}\ \bin\bigg(N, p =\frac{1}{2} (1+ K^*_{ij})\bigg) - 1.
\label{swap_dist}
\end{equation}

It is obvious that in both (\ref{gates_dist}) and (\ref{swap_dist}) cases
$K^{(N)}_{ij}$ is an \emph{unbiased estimator} of $K^*_{ij}$,
\begin{equation}
\EE[K^{(N)}_{ij}] = K^{*}_{ij}\ \ \forall \bx_{i},\bx_{j} \in \X.
\end{equation}
The variance of the estimator, however, is linked to the specific circuit and the number of measurements $N$.
\begin{figure}
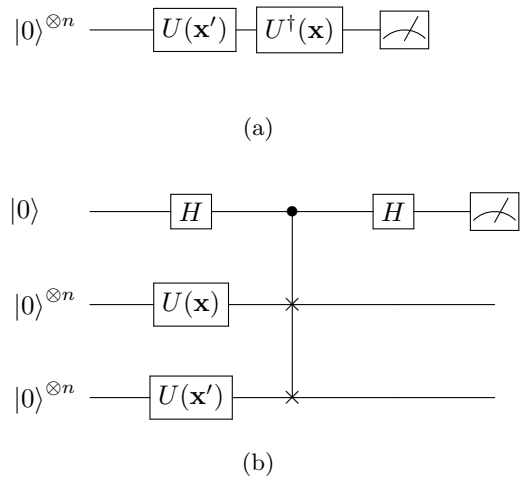

     \centering
     \begin{subfigure}[hb]{0.5\textwidth}
         \centering
         \includegraphics[width=\textwidth, trim = 2.2cm 24cm 10cm 5cm]{gates.pdf}
         \caption{}
         \label{fig:gates}
     \end{subfigure}

\begin{subfigure}[hb]{0.5\textwidth}
         \centering
         \includegraphics[width=\textwidth, trim = 2.2cm 22cm 10cm 4cm]{swap.pdf}
         \caption{}
         \label{fig:swap}
     \end{subfigure}
\captionsetup{justification=raggedright}
\caption{(a) GATES circuit measures the probability of obtaining the final state $\ket{0}^{\otimes n}$. The resulting distribution for $K(\bx,\bx')$ is given by (\ref{gates_dist}).
(b) SWAP circuit measures the probability of obtaining the ancillary qubit (top line) in the $\ket{0}$ state. The resulting kernel estimate is given by (\ref{swap_dist}).}
\label{fig:circuit}
\end{figure}

\begin{definition}[Circuit Factor]
\label{lemma:cf}
The variance of estimator $K_{ij}^{(N)}$ is upper-bounded by $\sigma_0^2$, where
\begin{equation}
\label{eq:cf}
\sigma_0 = \frac{\c}{2\sqrt{N}}, \quad
\begin{aligned}
\c &= \begin{cases}
 1 &  (\textbf{GATES}),\\
 2 & (\textbf{SWAP}).
    \end{cases}
\end{aligned}
\end{equation}
The {\em circuit factor} $\c$ is a circuit-dependent constant.
\end{definition}
For a derivation of the $\sigma_0$ value, see Appendix \ref{sec:subgaussian}.
For both the circuits under consideration, $K^{(N)}_{ij}$ is an affine function of a {\tt Binomial} random variable.
Direct application of Fact\ \ref{fact:bin} then yields
\begin{equation}
\label{eq:um}
\tag{\texttt{UM}}
\Delta K^{(N)}_{ij} \sim \subg(\sigma_0^2).
\end{equation}

We study the problem of kernel based classification under this uncertainty model, and only consider the fundamental sampling noise.
We however note the following extension in presence of other sources of noise, and discuss the modification to the \ref{eq:um} in Appendix \ref{sec:datasets}.
\begin{remark}
\label{rem:gen_um}
The uncertainty model \ref{eq:um} subsumes all other sources of unbiased noise.
\end{remark}
\begin{proof}
Since all quantum kernel distributions $0\leq K(\bx,\bx')\leq1$ have finite support, they are subgaussian due to Hoeffding's lemma (e.g.\ see 2.6 in \cite{Massart2007}).
\end{proof}

We note that the kernel estimate given by (\ref{gates_dist}) has a lower variance compared to the estimate given by (\ref{swap_dist}).
This implies that, purely from the perspective of sampling noise, the GATES test is preferable for any $N$.
The GATES test requires a circuit whose depth is twice that of the SWAP test, but the SWAP test requires twice the number of qubits as the GATES test and additional controlled SWAP operations.
The practical choice of the test to use for estimating QEKs can therefore depend on other factors like the coherence times and the quality of the gates available.


\subsection{Ensuring a Reliable Classification}

When classifying a datapoint $\bx\in\X$ using quantum kernels, we seek $N$ that gives high reliability ($\delta\ll 1$) in Eq.\ (\ref{success_prob}). This immediately leads to the following lower bound on $N$.
\begin{thm}
\label{thm:N_bound}
Let $\gamma(\bx) = y\left(\sum_{j}\beta^{*}_{j} K^{*}(\bx,\bx_{j}) + b^*\right)$ represent the margin of classification of the ideal classifier \ref{eq:ekc*} for the point $(\bx,y)$. Then \ref{eq:skc*} has a reliability of at least $1-\delta$, whenever
\begin{equation}
N\geq \frac{1}{2}\left(\frac{\c \norm{\beta^{*}}}{\gamma(\bx)}\right)^{2} \log\left(\frac{1}{\delta}\right),
\end{equation}
where $\c$ is the circuit factor.
\end{thm}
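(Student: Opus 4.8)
\noindent\emph{Proof strategy.} The plan is to reduce the reliability of \ref{eq:skc*} at $\bx$ to a single application of the subgaussian tail bound in Lemma~\ref{lemma:tail}. Introduce the pre-sign decision functions $g^{*}(\bx) = \sum_{j}\beta^{*}_{j}K^{*}(\bx,\bx_{j}) + b^{*}$ and $g^{(N)}(\bx) = \sum_{j}\beta^{*}_{j}K^{(N)}(\bx,\bx_{j}) + b^{*}$, so that $f^{*}(\bx) = \sign g^{*}(\bx)$, $f^{(N)}(\bx) = \sign g^{(N)}(\bx)$, and $|\gamma(\bx)| = |g^{*}(\bx)|$ since $y\in\{-1,1\}$. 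The first step is to observe that a label disagreement forces a deviation at least as large as the margin: if $g^{*}(\bx) > 0$ then $f^{(N)}(\bx)\neq f^{*}(\bx)$ implies $g^{(N)}(\bx)\le 0$ and hence $g^{*}(\bx) - g^{(N)}(\bx) \ge |\gamma(\bx)|$; if $g^{*}(\bx) < 0$ the disagreement implies $g^{(N)}(\bx) - g^{*}(\bx) \ge |\gamma(\bx)|$ instead (and if $\gamma(\bx) = 0$ the asserted bound is vacuous). So $\{f^{(N)}(\bx)\neq f^{*}(\bx)\}$ is contained, for the appropriate sign, in the event $\{\pm(g^{*}(\bx)-g^{(N)}(\bx)) \ge |\gamma(\bx)|\}$.

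Next I would identify the deviation as a weighted sum of independent subgaussians. With $\Delta K^{(N)}(\bx,\bx_{j}) := K^{*}(\bx,\bx_{j}) - K^{(N)}(\bx,\bx_{j})$ we have $g^{*}(\bx) - g^{(N)}(\bx) = \sum_{j}\beta^{*}_{j}\,\Delta K^{(N)}(\bx,\bx_{j})$; by the uncertainty model \ref{eq:um} each summand is $\subg(\sigma_{0}^{2})$, and they are independent across $j$ because the kernel estimates for distinct support vectors come from independent circuit runs. Hence $Y := g^{*}(\bx) - g^{(N)}(\bx)$ — and likewise $-Y$, which is again subgaussian with the same parameter — has exactly the form of the random variable in Fact~\ref{fact:tail} with coefficient vector $a = \beta^{*}$, so $\norm{a} = \norm{\beta^{*}}$.

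Applying Lemma~\ref{lemma:tail} then gives, for the relevant sign, $\Prob(\pm Y \ge |\gamma(\bx)|) \le \delta$ whenever $|\gamma(\bx)| \ge \sigma_{0}\norm{\beta^{*}}\,\kappa(\delta)$ with $\kappa(\delta) = \sqrt{2\log(1/\delta)}$; combined with the containment from the first step this yields $\Prob(f^{(N)}(\bx)\neq f^{*}(\bx)) \le \delta$, i.e.\ reliability at least $1-\delta$, under the same condition. Substituting the circuit-factor relation $\sigma_{0} = \c/(2\sqrt{N})$ from Definition~\ref{lemma:cf} and rearranging $|\gamma(\bx)| \ge \frac{\c\norm{\beta^{*}}}{2\sqrt{N}}\sqrt{2\log(1/\delta)}$ for $N$ gives $N \ge \tfrac{1}{2}\big(\c\norm{\beta^{*}}/\gamma(\bx)\big)^{2}\log(1/\delta)$, as claimed.

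I do not expect a serious obstacle. The one place needing care is the sign bookkeeping in the first step: handling the two cases directionally, rather than bounding $\Prob(|Y| \ge |\gamma(\bx)|)$ by a union bound, is what keeps the constant sharp (a union bound would insert a spurious $\log 2$), and one must note that $-Y$ inherits the subgaussian parameter of $Y$ so that Lemma~\ref{lemma:tail} legitimately applies to it. The independence of the $\Delta K^{(N)}(\bx,\bx_{j})$ over $j$ is the other ingredient worth stating explicitly, since it is what licenses the weighted-sum tail bound of Fact~\ref{fact:tail}.
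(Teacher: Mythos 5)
Your proof is correct and follows essentially the same route as the paper: both reduce the disagreement event to a one-sided deviation of $\sum_{j}\beta^{*}_{j}\Delta K^{(N)}(\bx,\bx_{j})$ exceeding the margin and apply Lemma~\ref{lemma:tail} with $a=\beta^{*}$ and $\sigma_{0}=\c/(2\sqrt{N})$; the paper simply absorbs your two-case sign bookkeeping into a single step by writing the event as $g^{*}(\bx)\sum_{j}\beta^{*}_{j}\Delta K^{(N)}(\bx,\bx_{j})>g^{*}(\bx)^{2}$. Your explicit remarks on independence across $j$ and on avoiding the two-sided union bound are points the paper leaves implicit.
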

\begin{proof}
This bound is a direct application of Lemma \ref{lemma:tail}, along with the uncertainty model \ref{eq:um}.
\end{proof}
Theorem \ref{thm:N_bound} immediately implies that \ref{eq:skc*} can
reliably classify any point: $\R(\bx;f^{(N)},f^{*})= 1-\delta$ can be made arbitrarily close to 1 by increasing $N$ by at most a logarithmic factor $\log(1/\delta)$.

In NISQ devices, the effect of noise on the results is extremely important and has received wide attention \cite{Bharti2022noisy}.
Recent works in quantum algorithms thus place special emphasis on the sampling noise, and treat the number of measurements $N$ as a resource that should be used minimally \cite{Kubler2020frugal,Arrasmith2020frugal,Gu2021}.
Our observations from Fig.\ \ref{fig:prob_statement} and the constraints of near-term devices raise the following questions:
\begin{enumerate}
\item Can we derive a generic bound for $N$ that would ensure {\em reliable} classification over any dataset?
\item Can we construct a reliable classifier which uses fewer $N$?
\end{enumerate}
We now answer these questions in Sections \ref{sec:expt_1} and \ref{sec:expt_2} respectively.

\section{Bounds on the Number of Measurements}
\label{sec:expt_1}
We here derive bounds for the number of measurements $N$ which can ensure a small classification error, and perform numerical experiments showing the validity and usefulness of the bounds.
In Theorem \ref{thm:emp_risk}, we first derive a theoretical bound $N_{\theory}$ which ensures, with high probability, that the stochastic kernel classifier \ref{eq:skc*} has a classification error smaller than the margin error of the exact kernel classifier \ref{eq:ekc*}. We then provide an empirical bound $N_{\practical}$ that is tighter and discuss their relationship.

\begin{definition}[{\bf $\gamma$-margin error}]
\label{margin_violation}
Given a dataset $\D\sim\P$, and a margin $\gamma\geq0$, the set of points in $\D$ which are $\gamma$-margin-misclassified by the classifier $f(\bx) = f(\bx|\beta,b;K)$ (\ref{eq:classifier}) is
\begin{equation}
\myS_{\gamma}\left[f\right] =  \left\{(\bx_i,y_i) \in \mathcal{D}~\left|~y_i\bigg(\sum_{j = 1}^{m} \beta_j K(\bx_j,\bx_{i}) + b\bigg)  < \gamma  \right.~ \right\}.
\nonumber
\end{equation}
The $\gamma$-margin error of the classifier $f$ over the dataset $\D$ is defined as
\begin{equation}
\label{eq:epsilon}
\epsilon_{\gamma}(f) = \frac{\big|\myS_{\gamma}[f]\big|}{|\D|}.
\end{equation}
\end{definition}

With $\gamma =0$ in the above expression, the set $\myS_{0}$ includes all the misclassifications made by the classifier over the dataset $\mathcal{D}$.
Eq.\ (\ref{eq:epsilon}) thus gives the \emph{classification error}
\begin{equation}
\epsilon_{0}\left(f\right) = 1- \textrm{Acc}\left(f\right)
\end{equation}
of the classifier over dataset $\D$.
On the other hand if $\gamma = 1$, the set $\myS_{1}$ includes all those points $(\bx_{i},y_{i})$, which lie on the wrong side of the two margins $y_{i}\left(\sum_{j = 1}^{m} \beta_j K(\bx_j,\bx_{i}) + b\right) = 1$ that define the standard \emph{margin error} $\epsilon_{1}(f)$ of the SVM classifier.
These are the points with nonzero contributions to the hinge loss term in (\ref{eq:obj}).

The main result of this Section is that the classification error $\epsilon_{0}\left({f}^{(N)}\right)$ of \ref{eq:skc*} can be guaranteed,
with high probability, to be smaller than the $\gamma$-margin error $\epsilon_{\gamma}\left(f^{*}\right)$ of \ref{eq:ekc*}, when
$N$ is sufficiently large.

\begin{thm}
Let $\mathcal{D}\sim \P$ denote a dataset of size $M=|{\mathcal{D}}|$.
The classification error of the stochastic kernel classifier $f^{(N)}(\bx)= f\left(\bx|\beta^*,b^*;K^{(N)}\right)$ is bounded by the $\gamma$-margin error of the true classifier $f^{*}(\bx)=f\left(\bx|\beta^{*},b^{*};K^{*}\right)$,
\begin{equation}
\label{eq:margin_bound}
\epsilon_{0}\left({f}^{(N)}\right) \le  {\epsilon}_{\gamma} \left(f^*\right),
\end{equation}
with probability at least $1-\delta$, whenever
\begin{equation}
\label{eq:N_bound}
N \geq  \frac{\c^{2}}{2\gamma^2}\norm{\beta^{*}}^{2} \log \frac{M}{\delta}.
\end{equation}
\label{thm:emp_risk}
\end{thm}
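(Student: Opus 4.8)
The plan is to show that if the ideal classifier $f^*$ places a point $(\bx_i,y_i)$ strictly outside the $\gamma$-margin (i.e. $(\bx_i,y_i)\notin\myS_\gamma[f^*]$), then the stochastic classifier $f^{(N)}$ classifies it correctly with high probability, and then to union-bound over the dataset. Concretely, for a fixed point $(\bx_i,y_i)\notin\myS_\gamma[f^*]$ we have $y_i\big(\sum_j\beta^*_j K^*(\bx_j,\bx_i)+b^*\big)\ge\gamma$. The event that $f^{(N)}$ misclassifies this point is contained in the event that the stochastic score differs from the ideal score by at least $\gamma$; using $\Delta K^{(N)}_{ij}=K^*_{ij}-K^{(N)}_{ij}$, the difference of scores is $\sum_j\beta^*_j\,y_i\,\Delta K^{(N)}_{ij}$, an affine combination of the independent subgaussian variables supplied by the uncertainty model \ref{eq:um}, with coefficients $a_j=y_i\beta^*_j$ so that $\|a\|=\|\beta^*\|$.

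Next I would invoke Lemma \ref{lemma:tail} (or equivalently Fact \ref{fact:tail}) with $\sigma=\sigma_0=\c/(2\sqrt N)$, $\|a\|=\|\beta^*\|$, and $t=\gamma$: the probability that this affine combination is $\ge\gamma$ is at most $\exp\!\big(-\gamma^2/(2\sigma_0^2\|\beta^*\|^2)\big)=\exp\!\big(-2N\gamma^2/(\c^2\|\beta^*\|^2)\big)$. Setting this $\le\delta/M$ is exactly the condition $N\ge\frac{\c^2}{2\gamma^2}\|\beta^*\|^2\log\frac{M}{\delta}$, which is the claimed bound \eqref{eq:N_bound}. (One should also handle the boundary case where the stochastic score lands exactly at zero, but this is absorbed into the same tail event by using a non-strict inequality, or noted to have measure zero.)

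Finally I would take the union bound over all $M$ points of $\D$: with probability at least $1-M\cdot(\delta/M)=1-\delta$, every point lying outside the ideal $\gamma$-margin is correctly classified by $f^{(N)}$. Hence the set of misclassifications of $f^{(N)}$ is contained in $\myS_\gamma[f^*]$, which gives $\epsilon_0(f^{(N)})\le\epsilon_\gamma(f^*)$ with probability at least $1-\delta$, as required. The only mildly delicate point is that the union bound must be applied to the \emph{complement} — it suffices to control misclassification on the good points, since the $\gamma$-margin-violating points are simply allowed to be wrong — so no control is needed there, and the probabilities combine cleanly; the rest is a direct application of the already-established subgaussian tail bound and the uncertainty model, so I do not expect a substantive obstacle beyond bookkeeping.
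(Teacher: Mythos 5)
Your proposal is correct and follows essentially the same route as the paper's own proof: establish the containment $\myS_{0}[f^{(N)}]\subseteq\myS_{\gamma}[f^{*}]$ by bounding, for each point outside the ideal $\gamma$-margin, the probability that the score perturbation $y_i\sum_j\beta^*_j\Delta K^{(N)}_{ij}$ exceeds $\gamma$ via Lemma \ref{lemma:tail} with $\sigma_0=\c/(2\sqrt{N})$ and $\|a\|=\|\beta^*\|$, then union-bound over the $M$ points with failure probability $\delta/M$ each. The arithmetic recovering \eqref{eq:N_bound} checks out, so no gap remains.
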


\begin{figure*}
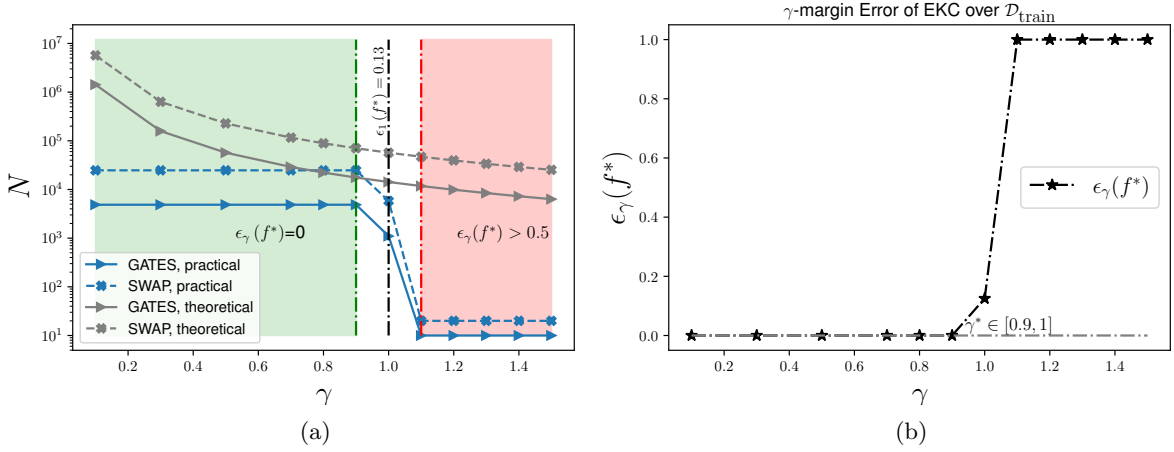

\centering
      \begin{subfigure}[hb]{0.48\textwidth}
         \centering
         \includegraphics[width=\textwidth,trim = 0cm 0cm 0cm 0cm]{self_N_prac_circles_C1e3.pdf}
          \caption{}
                    \label{fig:N_prac}

      \end{subfigure}
\hspace{-0.05\textwidth}
      \begin{subfigure}[hb]{0.48\textwidth}
         \centering
         \includegraphics[width=\textwidth,trim = 0cm 0cm 0cm 0cm]{self_margin_error_circles.pdf}
         \caption{}
              \label{fig:margin_error}

      \end{subfigure}

\captionsetup{justification = raggedright}
\caption{(a) Practical and theoretical bounds on $N$ as a function of the margin $\gamma$.
The green shading represents the region $\epsilon_{\gamma}\left(f^{*}\right)=0$ where the stochastic kernel function \ref{eq:skc*} is expected to have perfect accuracy, with $\delta_{\target}=0.01$.
The red shading represents the region with large margin error, $\epsilon_{\gamma}\left(f^{*}\right)>0.5$, and low accuracy.
The theoretical bound is smooth as a function of $\gamma$, but the practical bound shows a rapid change in $N$ around $\gamma=1$.
The positions of the vertical green and red boundaries are expected to be problem dependent. The theoretical bound on $N$
to obtain an accurate classifier \ref{eq:skc*} is optimal when we choose the largest $\gamma$ in the green region.
(b) The $\gamma$-margin error of the Exact Kernel Classifier. $\epsilon_{\gamma}(f^{*})=0$ for $\gamma>0$
implies that \ref{eq:ekc*} makes no errors over the dataset, i.e., $\epsilon_{0}(f^{*})=0$. It can be seen that $\gamma^{*} = \argmax_{\gamma} \{\epsilon_{\gamma}(f^{*})=0\}$ lies in the interval $[0.9,1]$.}
\end{figure*}

This theorem essentially extends the bound of Theorem\ \ref{thm:N_bound} to any dataset of size $M$.
Theorem \ref{thm:emp_risk} can be seen as a statement about the reliability of \ref{eq:skc*} over a dataset:
\begin{cor}
\label{cor:Reliability}
Let \ref{eq:ekc*} have an accuracy of $1-\epsilon_{0}$
over dataset $\D$ of size $M$ (we denoted $\epsilon_{0}\left(f^*\right)=\epsilon_{0}$). Let
\begin{equation}
\label{eq:max_gamma}
\gamma^{*} = \argmax_{\gamma} \{\epsilon_{\gamma}\left(f^{*}\right)= \epsilon_{0}\}.
\end{equation}
Then \ref{eq:skc*} reliably reproduces \ref{eq:ekc*} over $\D$
\begin{equation}
\R_{\delta}\left(f^{(N)},f^{*}\right) = 1,
\end{equation}
i.e., each datapoint in $\D$ has been classified with a reliability of at least $1-\delta$ whenever
\begin{equation}
N\geq \frac{\c^{2}}{2{\gamma^{*}}^2}\norm{\beta^{*}}^{2} \log \frac{1}{\delta}.
\end{equation}
\end{cor}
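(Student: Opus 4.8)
The plan is to read the corollary off Theorem~\ref{thm:N_bound} applied \emph{separately} to each datapoint of $\D$. Doing it pointwise rather than via a union bound is precisely what keeps the factor $\log(1/\delta)$ (as in Theorem~\ref{thm:N_bound}) instead of producing the $\log(M/\delta)$ of Theorem~\ref{thm:emp_risk}: to certify $\R_\delta(f^{(N)},f^{*})=1$ we only need each of the $M$ reliabilities $\R(\bx_s;f^{(N)},f^{*})$ to individually clear $1-\delta$, not all of them to succeed jointly. So the whole task reduces to producing a uniform lower bound $\gamma(\bx)\ge\gamma^{*}$ on the per-point margin $\gamma(\bx)=y\big(\sum_j\beta^{*}_j K^{*}(\bx,\bx_j)+b^{*}\big)$ over the relevant points of $\D$; the claimed bound on $N$ is then immediate since $\tfrac12(\c\norm{\beta^{*}}/\gamma(\bx))^{2}\log(1/\delta)$ is decreasing in $\gamma(\bx)$.

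To get the margin bound I would unpack the definition $\gamma^{*}=\argmax_\gamma\{\epsilon_\gamma(f^{*})=\epsilon_0\}$. Writing $\myS_\gamma[f^{*}]=\{(\bx_i,y_i)\in\D:\gamma(\bx_i)<\gamma\}$ as in Definition~\ref{margin_violation}, the map $\gamma\mapsto\epsilon_\gamma(f^{*})=|\myS_\gamma[f^{*}]|/M$ is nondecreasing and equals $\epsilon_0=\epsilon_0(f^{*})$ at $\gamma=0$, so $\gamma^{*}\ge0$. Since $\gamma^{*}\ge0$ we have $\myS_0[f^{*}]\subseteq\myS_{\gamma^{*}}[f^{*}]$, and as these finite sets have equal cardinality ($\epsilon_0 M=\epsilon_{\gamma^{*}}(f^{*})M$) they coincide, $\myS_{\gamma^{*}}[f^{*}]=\myS_0[f^{*}]$. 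Hence any $\bx\in\D$ that $f^{*}$ classifies correctly ($\bx\notin\myS_0[f^{*}]$) also lies outside $\myS_{\gamma^{*}}[f^{*}]$, i.e.\ $\gamma(\bx)\ge\gamma^{*}$; equivalently, $\gamma^{*}$ is exactly the smallest margin $f^{*}$ attains on the points of $\D$ it gets right (so the bound is non-vacuous precisely when $\gamma^{*}>0$). Applying Theorem~\ref{thm:N_bound} at each such point, its reliability is at least $1-\delta$ once $N\ge\tfrac12(\c\norm{\beta^{*}}/\gamma(\bx))^{2}\log(1/\delta)$, which, since $\gamma(\bx)\ge\gamma^{*}>0$, is implied by $N\ge\tfrac{\c^{2}}{2{\gamma^{*}}^{2}}\norm{\beta^{*}}^{2}\log(1/\delta)$; under this $N$ every indicator in~(\ref{rel_dataset}) equals one and $\R_\delta(f^{(N)},f^{*})=1$.

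The step needing the most care is the handful of points $f^{*}$ misclassifies ($\bx\in\myS_0[f^{*}]$): there $\gamma(\bx)<0$ with $|\gamma(\bx)|$ possibly arbitrarily small, so Theorem~\ref{thm:N_bound} supplies no useful shot count for them, and the statement ``$\R_\delta=1$ over all of $\D$'' as written then needs a qualification. I would handle this by counting reliability only over the subset of $\D$ that $f^{*}$ labels correctly (the labels SKC is meant to reproduce), and note that in the regime $\epsilon_0=0$ — the case of the figures and the natural one of interest — the issue disappears outright, since then $\myS_0[f^{*}]=\emptyset$ and the argument above already covers every point of $\D$ verbatim.
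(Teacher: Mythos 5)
Your proof is correct and follows essentially the same route as the paper: the paper's own proof simply replaces $\delta/M$ by $\delta$ in the per-point tail bound from the proof of Theorem~\ref{thm:emp_risk}, which is exactly your observation that dataset reliability is a pointwise requirement needing no union bound, combined with the uniform margin bound $\gamma(\bx)\ge\gamma^{*}$ on correctly classified points. Your closing caveat is well taken — the paper's one-line proof likewise only controls points outside $\myS_{\gamma^{*}}[f^{*}]$, so the claim $\R_{\delta}\left(f^{(N)},f^{*}\right)=1$ over all of $\D$ is strictly justified only when $\epsilon_{0}=0$ (as holds for every dataset the paper considers), and your explicit identification of $\gamma^{*}$ as the smallest margin attained on the correctly classified points is a useful step the paper leaves implicit.
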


We make an important remark regarding Theorem \ref{thm:emp_risk}.
\begin{remark}
Given any dataset $\D\sim\P$ of size $M$, the number of measurements,
\begin{equation}
N = \Big{\lceil}{\frac{\c^{2}}{2\gamma^2} m_{\sv}C^{2} \log \frac{M}{\delta}}\Big{\rceil},
\end{equation}
is sufficient to ensure that (\ref{eq:margin_bound}) holds with probability at least $1-\delta$.
Here $m_{\sv}$ is the number of support vectors in the training set, and $C$ is the SVM regularization parameter \ref{opt:p}.
\label{remark:emp_risk}
\end{remark}
\begin{proof}
The optimal primal and dual coefficients are related by $\beta^{*}_{i} = y_{i}\alpha^{*}_{i}$ \cite{ScholkopfBook}. Since $0\leq \alpha_{i}\leq C$, we have $\norm{\beta^{*}}^{2} \leq m_{\sv}C^{2}$, and the proof follows.
\end{proof}
This remark provides guidance for choosing the kernel function.
A good choice of kernel $K^{*}$ should classify accurately with a large
margin $\gamma$, a small regularization parameter $C$, and few support vectors $m_{\sv}$.

\begin{definition}[{\bf Theoretical and practical bounds on $N$}]
\label{def:N_prac}
Let the margin $\gamma\geq0$ be specified for a dataset $\D\sim\P$ of size $M=|\D|$.
\begin{enumerate}
\item We define the theoretical bound as the subgaussian bound derived in (\ref{eq:N_bound}),
\begin{equation}
\label{eq:N_theo}
N_{\theory} = \Big\lceil\frac{\c^{2}}{2\gamma^2}\norm{\beta^{*}}^{2} \log \frac{M}{\delta_{\target}}\Big\rceil,
\end{equation}
where $\delta_{\target}\ll1$ is set to a small value.
\item We find the practical bound $N_{\textrm{practical}}$ by calculating the empirical probability of satisfying (\ref{eq:margin_bound}),
\begin{equation}
\label{eq:emp_prob}
1-\delta_{\emp} = \frac{1}{N_\trials} \sum_{s=1}^{N_\trials} \mathbbm{1}_{\epsilon_{0}\left(f^{(N)}\{s\}\right) \leq \epsilon_{\gamma}\left(f^{*}\right)},
\end{equation}
starting with some small $N$ and increasing it to $N_{\practical}$ when we obtain $\delta_{\emp}<\delta_{\target}$.
An explicit algorithm is given in Appendix \ref{alg:N_prac}.
\end{enumerate}
\end{definition}

\begin{table}
  \centering
  \renewcommand{\arraystretch}{1.2}
\begin{tabular}{|p{1.9cm}|p{1.15cm}|p{1.62cm}|p{1.15cm}|p{1.3cm}|p{0.73cm}|}
\hline
\multirow{3}{*}{\thead{Dataset}} & \multicolumn{4}{c|}{$N$ needed for $\epsilon_{0}\left(f^{(N)}\right)=0$, $\delta_{\target}=0.01$} & %
     \multirow{3}{*}{\thead{$\epsilon_{1}(f^{*})$}}\\
\cline{2-5}
 & \multicolumn{2}{c|}{\textbf{GATES}} & \multicolumn{2}{c|}{\textbf{SWAP}} & \\
\cline{2-5}
 & {theory} & {\textbf{practical}} & {{theory}} & {\textbf{practical}} & \\
\hline
Circles & $1.8\times 10^4$ & $\mathbf{4.9\times10^3}$ $\ddagger$ & {$7\times10^4$} & $\mathbf{2.5\times 10^4\ddagger}$ & 0.13 \\
\hline
Havlicek & 330 & {\bf 110} & 1310 & {\bf 540} & 0.13 \\
\hline
Two Moons & $2\times10^4$ & $\mathbf{2.5\times10^3}$ & $7.8\times10^4$ & $\mathbf{9.7\times10^{3}\ddagger}$ & 0.14 \\
\hline
Checkerboard & $4.5\times10^4$ & $\mathbf{7.1\times10^3\ddagger}$ & $1.8\times10^5$ & $\mathbf{3.3\times10^{4}\ddagger}$ & 0.12 \\
\hline
\end{tabular}

\captionsetup{justification=raggedright}
\caption{The number of measurements $N$ needed to ensure that \ref{eq:skc*} {\em reliably} reproduces \ref{eq:ekc*} over the training set itself, with probability of at least $99\%$.
These numbers correspond to the green vertical line in Fig.\ \ref{fig:N_prac}.
The practical bound on $N$ is significantly smaller than the bound implied by Theorem \ref{thm:emp_risk}.
The GATES test requires an $N$ which is about 4 times smaller than the SWAP test, as implied by the circuit factor \ref{lemma:cf}.
The right-most column shows the standard margin error for \ref{eq:ekc*}.
$\ddagger$ indicates that $N$ exceeds the IBMQ default value $4000$.}
\label{tab:N_prac}
\end{table}

\begin{cor}
\label{cor:max_gamma}
Over any dataset $\D$ of size $M$, let $$\gamma^{*} = \argmax_{\gamma} \{\epsilon_{\gamma}(f^{*})=0\}.$$
If $\gamma^*>0$ exists, it
denotes the maximum margin for which the \ref{eq:ekc*} classifies all datapoints correctly. Then
the \ref{eq:skc*} also classifies all the datapoints correctly, with probability $1-\delta_{\target}$, whenever
$$N\geq N_{\theory}=\Big\lceil\frac{\c^{2}}{2{\gamma^{*}}^2}\norm{\beta^{*}}^{2} \log \frac{M}{\delta_{\target}}\Big\rceil.$$
\end{cor}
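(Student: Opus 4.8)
The plan is to read Corollary \ref{cor:max_gamma} off Theorem \ref{thm:emp_risk} by specializing that theorem to the margin value at which the $\gamma$-margin error of \ref{eq:ekc*} vanishes.

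First I would pin down $\gamma^{*}$ and check it is well-defined. Writing $\gamma(\bx_{i}) = y_{i}\big(\sum_{j}\beta^{*}_{j}K^{*}(\bx_{j},\bx_{i}) + b^{*}\big)$ for the ideal margin at the $i$-th datapoint, Definition \ref{margin_violation} gives $\myS_{\gamma}[f^{*}] = \{(\bx_{i},y_{i})\in\D : \gamma(\bx_{i}) < \gamma\}$, so $\epsilon_{\gamma}(f^{*}) = 0$ precisely when $\gamma(\bx_{i}) \ge \gamma$ for every point, i.e. when $\gamma \le \min_{i}\gamma(\bx_{i})$. Hence $\{\gamma \ge 0 : \epsilon_{\gamma}(f^{*}) = 0\}$ is the closed interval $[0,\min_{i}\gamma(\bx_{i})]$ when $\min_{i}\gamma(\bx_{i}) \ge 0$ and is empty otherwise; in the case covered by the hypothesis its maximum is attained, $\gamma^{*} = \min_{i}\gamma(\bx_{i}) > 0$, and by construction $\epsilon_{\gamma^{*}}(f^{*}) = 0$.

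Next I would invoke Theorem \ref{thm:emp_risk} with $\gamma = \gamma^{*}$ and $\delta = \delta_{\target}$. The hypothesis $N \ge \frac{\c^{2}}{2{\gamma^{*}}^{2}}\norm{\beta^{*}}^{2}\log\frac{M}{\delta_{\target}}$ required by that theorem is implied by $N \ge N_{\theory}$, since $N_{\theory}$ is merely the ceiling of that right-hand side. The theorem then yields, with probability at least $1-\delta_{\target}$, the bound $\epsilon_{0}(f^{(N)}) \le \epsilon_{\gamma^{*}}(f^{*}) = 0$. Because $\epsilon_{0}(\cdot)$ is a fraction of a dataset and therefore nonnegative, on that same event the inequality can only hold with $\epsilon_{0}(f^{(N)}) = 0$, i.e. $\Acc(f^{(N)}) = 1$; so \ref{eq:skc*} classifies every datapoint of $\D$ correctly with probability at least $1-\delta_{\target}$, which is the claim.

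I do not expect a genuine obstacle: the substance is entirely in Theorem \ref{thm:emp_risk}. The two points deserving a line of care are (i) that $\gamma^{*}$ is attained, handled by the step-function description of $\epsilon_{\gamma}(f^{*})$ above, and (ii) turning the bound $\epsilon_{0}(f^{(N)}) \le 0$ into the equality $\epsilon_{0}(f^{(N)}) = 0$ via nonnegativity of $\epsilon_{0}$. As a consistency check I would also note that this corollary is exactly the $\epsilon_{0} = 0$ instance of Corollary \ref{cor:Reliability}.
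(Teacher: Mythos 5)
Your proposal is correct and follows exactly the route the paper intends: Corollary \ref{cor:max_gamma} is stated without a separate proof precisely because it is the specialization of Theorem \ref{thm:emp_risk} to $\gamma=\gamma^{*}$ and $\delta=\delta_{\target}$, with $\epsilon_{\gamma^{*}}(f^{*})=0$ forcing $\epsilon_{0}(f^{(N)})=0$ by nonnegativity. Your added care about $\gamma^{*}$ being attained (the set $\{\gamma:\epsilon_{\gamma}(f^{*})=0\}$ being a closed interval) is a sound, if implicit in the paper, verification.
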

We are interested in obtaining reliable reproduction of \ref{eq:ekc*} from \ref{eq:skc*}.
Note that $\epsilon_{0}(f^{*})=0$ for all the datasets we have considered.
Figure \ref{fig:N_prac} shows the theoretical and practical bounds on $N$ as a function of the margin $\gamma$ for the Circles dataset.
The green shading in it represents the region where $\epsilon_{0}\left(f^{(N)}\right)=0$ is satisfied with $\delta_{\target}=0.01$.
It thus leads to SKC which reliably reproduces EKC.
In this region, the theoretical bound grows as $\gamma\to0$ with an inverse quadratic dependence. The practical bound is insensitive to $\gamma$, since it only
depends on the value of the $\gamma$-margin error $\epsilon_{\gamma}(f^{*})$ and not the margin $\gamma$ itself.
The two bounds are closest to each other at the right edge of the green region, and these values have been tabulated in Table \ref{tab:N_prac}.
We notice that the practical bounds in the table are about $3-5$ times smaller, for both circuits.
This looseness of the theoretical bound is expected, since we used a subgaussian bound for the binomial distribution and not its complete information.
The GATES circuit provides bounds (both theoretical and practical) which are about $4$ times smaller than those for the SWAP circuit, consistent with the Definition \ref{lemma:cf}.

Very importantly, the bound $N_{\theory}$, needed for reliably reproducing \ref{eq:ekc*}, is logarithmic in the size $M$ of the dataset and linear in the number of support vectors $m_{\sv}$ of the training set.
Often the number of support vectors is significantly smaller than the size $m$ of the training set, $m_{\sv}\ll m$ \cite{Steinwart2003sparseness}. Furthermore, $\gamma^{*}$ in Corollary \ref{cor:Reliability} is not vanishingly small and can be close to unity for a well-chosen kernel function. This constitutes a dramatic improvement in the scaling for $N$ that would otherwise hold for kernel estimations with specified precision $\varepsilon$ and is detailed in Appendix \ref{sec:precise}.

We now show that it is possible to do better than $N_{\practical}$ by constructing a Robust Classifier using chance constraint programming.

\section{Chance Constraint Programming}
\label{sec:expt_2}

The optimization problem over training data is a convex quadratic program since $\K^*\in\psdset$.
In reality, the kernel matrix over a training dataset suffers from the same underlying noise which affects the classifier \ref{eq:skc*}.
If we measure the kernel matrix $\K^{(N)}$ with $N$ measurements per kernel entry, the resulting optimization problem \ref{opt:p} would generally not be convex.
Such situations are encountered in optimization problems and can be handled by introducing a chance constraint \cite{Charnes1958,Prekopa1970}.
We show here how we may use the noise information (\ref{eq:um}) of the training kernel matrix to construct more reliable classifiers.

An optimization problem with stochastic terms can be turned into a chance constraint problem as follows.
Let $\K$ denote the stochastic training kernel matrix.
\begin{equation}
\label{eq:c}
\begin{aligned}
&\min_{\beta \in \RR^m, b,\tau \in \RR} \quad \tau\ \\
\textrm{s.t.\ }\Prob&\left(J(\beta, b;\bK) \le \tau\right) \ge 1 - \delta ,
\end{aligned}
\end{equation}
where $\delta \ll1$ ensures that for most instantiations of the stochastic kernel matrix $\K$, i.e.\ with high probability,
the primal objective $J\left(\beta,b;\bK\right)$ is smaller than $\tau$.
This $\tau$ is the new objective which we wish to minimize.
It is in fact a dummy variable that is the means to minimize the (stochastic) primal objective with high probability.
The feasible set of a chance constraint is generally non-convex and makes the optimization problem intractable (although it is convex for some families of distributions \cite{PrekopaBook}).
The natural strategy in such cases is to construct convex approximations to the chance constraint \cite{Nemirovski2007}.

{\em Relation to previous work.} The chance constraint method has been applied to the dual SVM formulation by Bhadra et.\ al.\ \cite{Bhadra2010}, where convex approximations were derived for the chance constraint for a few different noise distributions.
In the present article, we derive a Robust Program using the chance constraint approach in the primal SVM formulation.
The \ref{opt:p} SVM objective has two terms which are stochastic, namely, the hinge loss term and the regularization term, given in the two lines of Eq.\ (\ref{eq:obj}) respectively. This SVM formulation is better suited to derive
a bound on $N$ since the margin constraints appear explicitly (see Definition \ref{margin_violation}). The hinge loss term in the \ref{opt:p} objective implicitly contain these constraints as shown below.
\subsection{Chance Constraint Approach to the SVM Primal}
\label{sec:chance}
We first rewrite the SVM Primal objective by introducing two dummy variables
(a) slack variables $\xi$ and (b) a variable $t$ for the quadratic term in the objective function
\begin{equation}
\label{eq:obj2}
\min_{\xi,t,\beta,b} J(\xi,t,\beta,b) := C\sum_{i}\xi_{i} + t
\end{equation}
subject to the constraints $\xi_{i}\geq0\ \ \ \forall i\in[m]$, and also
\begin{equation}\tag{constraint-1}
\label{eq:c1}
\xi_{i}\geq 1 - y_{i}\left(\sum_{j}\K_{ij}\beta_{j} +b \right),
\end{equation}
together with
\begin{equation}\tag{constraint-2}
\label{eq:c2}
t\geq \frac{1}{2} \beta^{\T}\K\beta.
\end{equation}
It is easy to see that the above formulation is completely equivalent to \ref{opt:p}.
The two constraints are dependent on the training kernel matrix $\K$ which
is stochastic. We shall ensure that they hold true with high probabilities $1- \delta_{1}$ and $1-\delta_{2}$
respectively. Note that \ref{eq:c1} is a set of $m$ individual constraints and we shall invoke the union bound
and ensure that {\em all} the constraints are satisfied with a probability of at least $1- \delta_{1}$.
\begin{equation}\tag{chance-1}
\label{eq:cc1}
\Prob\bigg(\xi_{i}\geq 1 - y_{i}\big(\sum_{j}\K_{ij}\beta_{j} +b \big)\bigg)\geq1-\frac{\delta_{1}}{m}.
\end{equation}
Similarly, we impose
\begin{equation}\tag{chance-2}
\label{eq:cc2}
\Prob\left(t\geq \frac{1}{2} \beta^{\T}\K\beta\right)\geq 1-\delta_{2}
\end{equation}
We now derive convex approximations to the chance constraints above. The resulting convex program would then be robust to the sampling noise by construction.
\begin{thm}(Shot-frugal and Robust Program)
\label{thm:rob}
For a stochastic training kernel matrix $\K^{(N)}\in\RR^{m}\times\RR^{m}$ inferred from $N$ measurements per entry,
the optimal \ref{opt:p} objective $J^{*}\big(\K^{(N)}\big) := \min_{\beta,b} J\big(\beta,b;\K^{(N)}\big)$ is upper bounded by
\begin{equation}
\label{eq:whp_rob}
J^{*}\big(\bK^{(N)}\big)\leq J^{*}_{\rob},
\end{equation}
with probability at least $1-\delta_{1}-\delta_{2}$. Here
\begin{equation} \label{eq:rob_primal}
J^{*}_{\rob} = \min_{\beta,b,\xi,t} J_{\rob}\left(\beta,b,\xi,t; \bK^*| N, \delta_{1},\delta_{2} \right) \tag{\tt{ShofaR}}
\end{equation}
is the optimal value of a Second-Order Cone Program (SOCP), whose objective is given by
\begin{equation}
\label{eq:J_rob}
J_{\rob}\left(\beta,b,\xi,t; \bK^*| N,\delta_{1},\delta_{2}\right) := C \sum_{i}\xi_{i}+t
\end{equation}
subject to the constraints $\xi_{i}\geq0 \ \ \ \forall i \in[m]$, and also
\begin{equation}\tag{suff-chance-1}
\label{eq:sc1}
\xi_{i}\geq 1 + \frac{\c}{2\sqrt{N}} \kappa{\bigg(\frac{\delta_{1}}{m}\bigg)}\norm{\beta} - y_i(\sum_{j=1}^m \beta_{j}\bK^{*}_{ij} + b),
\end{equation}
together with
\begin{equation}\tag{suff-chance-2}
\label{eq:sc2}
t\geq\frac{1}{2} \beta^{\top}\left(\bK^* + \frac{\c}{2\sqrt{N}}\kappa(\delta_{2})  \mathbf{I}_{m\times m}\right)\beta.
\end{equation}

\end{thm}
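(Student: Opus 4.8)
The plan is to replace each probabilistic constraint \ref{eq:cc1} and \ref{eq:cc2} by a deterministic, convex \emph{sufficient} condition built from the subgaussian uncertainty model \ref{eq:um} and Lemma \ref{lemma:tail}, and then to show that on a single high-probability event the optimizer of the resulting deterministic program is feasible for the reformulation (\ref{eq:obj2})--(\ref{eq:c2}) of \ref{opt:p} with kernel matrix $\bK^{(N)}$; since that reformulation is equivalent to \ref{opt:p} and carries the same objective $C\sum_i\xi_i+t$, feasibility of the optimizer forces $J^{*}(\bK^{(N)})\le J^{*}_{\rob}$. Throughout I would write $\bK^{(N)}_{ij}=\bK^{*}_{ij}-\Delta K^{(N)}_{ij}$ with $\Delta K^{(N)}_{ij}\sim\subg(\sigma_0^2)$, $\sigma_0=\c/(2\sqrt N)$, and treat distinct matrix entries as independent.

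First, for the affine chance constraint \ref{eq:cc1}: substituting the decomposition, the inequality inside \ref{eq:cc1} becomes $Y_i\le\xi_i-1+y_i(\sum_j\bK^{*}_{ij}\beta_j+b)$ with $Y_i:=\sum_j(y_i\beta_j)\,\Delta K^{(N)}_{ij}$, a weighted sum of independent subgaussians whose coefficient vector has norm $\|\beta\|$ (since $y_i=\pm1$). Lemma \ref{lemma:tail} applied with failure probability $\delta_1/m$ shows that \ref{eq:sc1}, namely $\xi_i\ge 1+\sigma_0\kappa(\delta_1/m)\|\beta\|-y_i(\sum_j\beta_j\bK^{*}_{ij}+b)$, implies \ref{eq:cc1}; this is an ``affine $\ge$ Euclidean-norm-of-$\beta$'' constraint, i.e.\ a second-order cone constraint. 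Next, for the quadratic chance constraint \ref{eq:cc2}: write $\tfrac12\beta^{\top}\bK^{(N)}\beta=\tfrac12\beta^{\top}\bK^{*}\beta-W$ with $W:=\tfrac12\beta^{\top}\Delta K^{(N)}\beta=\tfrac12\langle\beta\beta^{\top},\Delta K^{(N)}\rangle_{F}$, which is again a weighted sum of independent subgaussians, now with coefficient vector $\tfrac12\,\mathrm{vec}(\beta\beta^{\top})$ of norm $\tfrac12\|\beta\|^2$ (using $\|\beta\beta^{\top}\|_{F}=\|\beta\|^2$). Since a subgaussian is symmetric, its one-sided tail bound holds two-sidedly; applying Lemma \ref{lemma:tail} to $-W$ with failure probability $\delta_2$ gives that $t\ge\tfrac12\beta^{\top}\bK^{*}\beta+\tfrac12\sigma_0\kappa(\delta_2)\|\beta\|^2$ implies \ref{eq:cc2}, and rewriting the added term as $\tfrac12\sigma_0\kappa(\delta_2)\,\beta^{\top}\mathbf I\beta$ yields \ref{eq:sc2}. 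Because $\bK^{*}\succeq0$, the matrix $\bK^{*}+\sigma_0\kappa(\delta_2)\mathbf I$ is positive semidefinite, so \ref{eq:sc2} is a (rotated) second-order cone constraint; with the linear objective this establishes that \ref{eq:rob_primal} is an SOCP.

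Finally I would assemble the probability accounting. Each of the $m$ inequalities \ref{eq:sc1} forces the corresponding \ref{eq:c1} for $\bK^{(N)}$ to hold with probability at least $1-\delta_1/m$, so by the union bound all $m$ of them hold simultaneously with probability at least $1-\delta_1$; likewise \ref{eq:sc2} forces \ref{eq:c2} with probability at least $1-\delta_2$, and a further union bound produces an event $\mathcal E$ of probability at least $1-\delta_1-\delta_2$. On $\mathcal E$, every feasible point of \ref{eq:rob_primal}---in particular its optimizer $(\beta,b,\xi,t)$, which is deterministic because the program takes only the noiseless $\bK^{*}$ as input---satisfies $\xi_i\ge0$ together with \ref{eq:c1} and \ref{eq:c2} for $\bK^{(N)}$, hence is feasible for (\ref{eq:obj2})--(\ref{eq:c2}) with kernel matrix $\bK^{(N)}$. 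Equivalence of that system with \ref{opt:p} and equality of objectives then give $J^{*}(\bK^{(N)})\le C\sum_i\xi_i+t=J^{*}_{\rob}$ on $\mathcal E$, which is the claim.

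The step I expect to be the main obstacle is the quadratic chance constraint \ref{eq:cc2}: one must recognize $\tfrac12\beta^{\top}\Delta K^{(N)}\beta$ as a \emph{linear} functional of the independent subgaussian matrix entries so that Lemma \ref{lemma:tail} even applies, pin down the coefficient-norm constant $\tfrac12\|\beta\|^2$ exactly (this is where independence of the matrix entries is used---a symmetrized measurement model would enlarge this norm by a factor up to $\sqrt2$ and break the stated constraint), and check positive semidefiniteness of the shifted matrix so that the relaxed program remains convex. The remaining pieces---the affine constraint, the two union bounds, and the feasibility/objective comparison---are routine.
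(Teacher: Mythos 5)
Your proposal is correct and follows essentially the same route as the paper's proof: decompose $\bK^{(N)}=\bK^{*}-\Delta\K^{(N)}$, apply Lemma \ref{lemma:tail} to the linear functional $y_i\sum_j\beta_j\Delta K^{(N)}_{ij}$ (coefficient norm $\norm{\beta}$) and to $\tfrac12\beta^{\top}\Delta\K^{(N)}\beta$ viewed as a linear functional of the entries (coefficient norm $\tfrac12\norm{\beta}^2$), then combine via the union bound and compare optima. Your explicit flagging of the entrywise-independence assumption behind the $\tfrac12\norm{\beta}^2$ constant (and the $\sqrt{2}$ inflation under a symmetrized measurement model) is a fair caveat that the paper leaves implicit, but it does not change the argument.
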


A detailed proof of the above theorem is given in Appendix \ref{sec:proofs}. Note that the \ref{eq:rob_primal} objective is very similar to the \ref{opt:p} objective with two key differences:
\begin{enumerate}
\item The hinge-loss has an additional term whose origin is precisely the enforcement of $y_{i}\big(\sum_{j}\beta_{j}\K^{(N)}_{ij} + b\big) \geq 1$, with high probability, for all the $m$ datapoints.
Thus a condition against margin violation, which is {\em robust to stochasticity in the kernel matrix}, is built into the optimization problem \ref{eq:rob_primal} itself. (\ref{eq:sc1}) represents the sufficient
condition for which the chance constraint (\ref{eq:cc1}) is satisfied.
\item The kernel matrix in the regularization term is made larger by adding to its diagonal entries a term which arises precisely by requiring $\beta^{\top}\K^{(N)}\beta\leq\beta^{\top}\K^{*}\beta+t$ with probability $1-\delta_{2}$ for some $t>0$.
The large deviation bound given by Lemma \ref{lemma:tail} implies that $t\geq \frac{\c}{2\sqrt{N}}\norm{\beta}^{2}\kappa(\delta_{2})$ is sufficient. (\ref{eq:sc2}) represents the sufficient
condition for which the chance constraint (\ref{eq:cc2}) is satisfied.
\end{enumerate}

The points above suggest that the classifier constructed by solving \ref{eq:rob_primal} would be more robust to the stochasticity of the training kernel matrix.
We thus expect it to be a more reliable classifier at least over the training dataset.
Moreover, since the kernel evaluations in a test dataset suffer from the same underlying noise, we must expect it to be more reliable even over an independent test dataset (these results are shown in the Appendix \ref{sec:datasets}).
\begin{definition}
The robust stochastic kernel classifier is
\begin{equation}
h^{(N)}(\bx) = f\left(\bx|\beta^{*}_{\rob},b^{*}_{\rob};K^{(N)}\right), \tag{RSKC}
\label{eq:rskc*}
\end{equation}
where $\beta^{*}_{\rob},b^{*}_{\rob}$ are solutions to the optimization problem \ref{eq:rob_primal} for some fixed $N$, and $\delta_{1},\delta_{2}\ll1$.
\end{definition}
\begin{definition}
The robust exact kernel classifier is
\begin{equation}
h^{*}(\bx) = f\left(\bx|\beta^{*}_{\rob},b^{*}_{\rob};K^{*}\right), \tag{REKC}
\label{eq:rekc*}
\end{equation}
where $\beta^{*}_{\rob},b^{*}_{\rob}$ are solutions to the optimization problem \ref{eq:rob_primal}. $N$, and $\delta_{1},\delta_{2}\ll1$ are the same as in \ref{eq:rskc*}.
\end{definition}
Note that the classifier $h^{*}$ is dependent on $N,\delta_{1},\delta_{2}$ via the optimization problem \ref{eq:rob_primal}, but it is {\em not stochastic}.
\ref{eq:rekc*} employs the exact kernel function $K^{*}$ just as the true classifier \ref{eq:ekc*}.
\begin{figure}
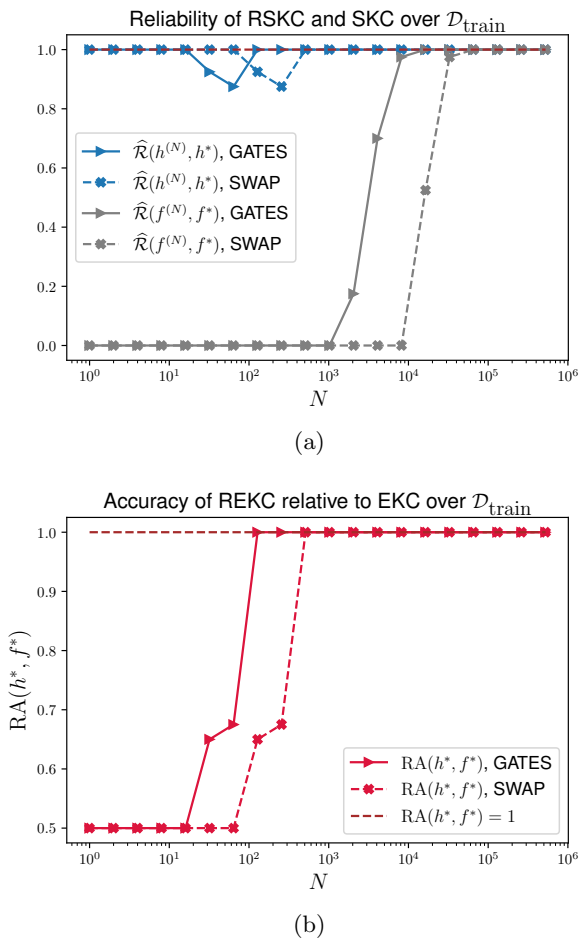

\centering
    \begin{subfigure}[hb]{0.48\textwidth}
    \centering
    \includegraphics[width=\textwidth,trim = 0cm 0cm 0cm 0cm]{self_rel_circles.pdf}
    \caption{}
    \label{fig:rel_compare}
    \end{subfigure}

    \begin{subfigure}[hb]{0.48\textwidth}
    \centering
    \includegraphics[width=\textwidth,trim = 0cm 0cm 0cm 0cm]{self_rel_acc_circles.pdf}
    \caption{}
    \label{fig:rel_acc}
    \end{subfigure}

\captionsetup{justification=raggedright}
\caption{(a) Comparison of the reliabilities of classifiers found using \ref{opt:p} and the robust formulation \ref{eq:rob_primal}, for the training data $\D_{\train}$ itself.
(b) The relative accuracy $\textrm{RA}(h^{*},f^{*})$ for the same dataset.
Looking at (a) and (b) together, (i) the relative accuracy of $h^{*}$, and (ii) the reliability $\widehat{\R}(h^{(N)},h^{*})$, become 1 much before the reliability $\widehat{\R}(f^{(N)},f^{*})$ reaches 1.}
\end{figure}

\subsubsection{Robust to Sampling Noise}
Based on the arguments presented, we expect the reliability $\R(h^{(N)},h^{*})$ to be higher than $\R(f^{(N)},f^{*})$ for the same value of $N$. 
Fig.\ \ref{fig:rel_compare} shows $\widehat{\R}(h^{(N)},h^{*})$ and $\widehat{\R}(f^{(N)},f^{*})$, as functions of $N$ for the Circles dataset:
it is clear that the robust classifier $h^{(N)}$ reproduces $h^{*}$ far more reliably than $f^{(N)}$ reproduces $f^{*}$, especially in the high noise regime (low $N$).
In particular, $\widehat{\R}(h^{(N)},h^{*})=1$ even with a single measurement shot $N=1$.

\subsubsection{Shot-Frugality}
Our original aim was to implement the true classifier $f^{*}$ with as few measurement shots per entry as possible, and not just to have a classifier robust to sampling noise.
\begin{definition}[{\bf Relative Accuracy}]
The relative accuracy of \ref{eq:rekc*} with respect to \ref{eq:ekc*} over any dataset is
\begin{equation}
\textrm{RA}\left(h^{*},f^{*}\right) = \frac{\textrm{Acc}\left(h^{*}\right)}{\textrm{Acc}\left(f^{*}\right)}.
\end{equation}
\end{definition}

\paragraph{Reliability-Accuracy trade-off.}
\cref{fig:rel_compare,fig:rel_acc} shows the reliability $\widehat{\R}(h^{(N)},h^{*})$ and the relative accuracy $\textrm{RA}(h^{*},f^{*})$, for the Circles dataset.
For small $N$, the classifier $h^{*}$ has low accuracy, but $h^{(N)}$ reproduces $h^{*}$ perfectly.
The low accuracy, for small $N$, can be understood from the expression for $J_{\rob}$ in Eq.(\ref{eq:J_rob}); the data-independent terms containing $\delta_1,\delta_2$ dominate over the data-dependent terms containing $\bK^*$, leading to fixed but inaccurate results.
With increasing $N$, increases in $\textrm{RA}(h^{*},f^{*})$ are accompanied by slight decreases in the reliability $\widehat{\R}(h^{(N)},h^{*})$, constituting a trade-off between the two quantities.
Finally, both these quantities reach their maximum value $1$.

\paragraph{Large savings in $N$.}
The smallest value of $N$ required to reliably agree with the results of \ref{eq:ekc*} is tabulated in Table \ref{tab:Reliability}, for both \ref{eq:skc*} and \ref{eq:rskc*}.
It is clearly seen that \ref{eq:rskc*} needs just a fraction of measurements (ranging from $\frac{1}{64}$ to $\frac{1}{4}$ for our datasets) to reliably reproduce \ref{eq:ekc*}.
Using \ref{eq:rob_primal} in place of the Naive SVM \ref{opt:p} leads to significant savings in $N$.

\ref{eq:rob_primal} is Second-Order Cone (SOC) representable (details in Appendix \ref{sec:socp}), and several efficient algorithms have been developed for such optimization problems \cite{Lobo1998socp}.
SOC programs have found applications in many engineering areas such as filter design, truss design and grasping force optimization in robotics (see\ Ref.\ \cite{Lobo1998socp}
for a survey of applications). SOC programs have also been very useful for many machine learning applications e.g.\ robust classification under uncertainty \cite{Lanckriet2003,Bhadra2010,Ben-Tal2011}.

%
\begin{table}
  \centering
  \renewcommand{\arraystretch}{1.2}
\begin{tabular}{|p{2cm}|p{1.2cm}|p{1.2cm}|p{1.2cm}|p{1.2cm}|}
\hline
\multirow{3}{*}{\thead{Dataset}} & \multicolumn{4}{c|}{\thead{$N$ needed for perfect reliability\\ with \ref{eq:ekc*}}} \\
\cline{2-5}
 & \multicolumn{2}{c|}{\textbf{GATES}} & \multicolumn{2}{c|}{\textbf{SWAP}} \\
\cline{2-5}
 & {\ref{eq:skc*}} & {\bf {\ref{eq:rskc*}}}& {\ref{eq:skc*}}& {\bf \ref{eq:rskc*}}\\
\hline
Circles & $2^{13}$ $\ddagger$ & $\mathbf{ 2^{8}}$ & $2^{16}$ $\ddagger$ & $\mathbf{ 2^{10}}$   \\
\hline
Havlicek \cite{Havlicek2019} & $2^7$ & $\mathbf{ 2^{4}}$ & $2^9$ & $\mathbf{ 2^{7}}$  \\
\hline
Two Moons & $2^{13}$ $\ddagger$ & $\mathbf{ 2^{6}}$ & $2^{15}$ $\ddagger$& $\mathbf{ 2^{8}}$  \\
\hline
Checkerboard \cite{Hubregtsen2021} & $2^{14}$ $\ddagger$ & $\mathbf{ 2^{8}}$ & $2^{16}$ $\ddagger$& $\mathbf{ 2^{10}}$  \\
\hline

\end{tabular}

\captionsetup{justification=raggedright}
\caption{The number of measurements $N$ at which perfect reliability\footnote{The values for \ref{eq:skc*} must coincide with $N_{\textrm{practical}}$ in Table\ \ref{tab:N_prac}, $N= 2^{n}$, $n = \lceil{\log_{2}(N_{\textrm{practical}})}\rceil$. The algorithm \ref{alg:N_prac} gives a more precise value for $N_{\practical}$ between $2^{n-1}$ and $2^{n}$.}
with respect to \ref{eq:ekc*} is achieved over the training dataset for classifiers \ref{eq:skc*} and \ref{eq:rskc*}.
The instantiations of the kernel function $K^{(N)}$ are the same for both.
The robust classifier reduces the number of circuit evaluations needed (per kernel entry) by a factor of $4$ to $64$.
$\ddagger$ denotes that $N$ exceeds the IBMQ default value $4000$.}
\label{tab:Reliability}
\end{table}

\subsection{Statistical estimate of the kernel matrix}
\label{sec:stat_est}

Until now, we worked under the assumption that the training kernel matrix was known exactly. In reality, however, it will
also have to be inferred from measurement statistics.
\begin{definition}[Estimated Kernel Matrix]
Over a training dataset of size $m$ the estimated kernel matrix $\widehat{\K}\in\RR^{m}\times\RR^{m}$
is a matrix, each of whose entries $\widehat{\K}_{ij} = K^{(T)}_{ij}$ are sampled once from
the distributions \ref{gates_dist} or \ref{swap_dist}, using $T$ measurements. $\widehat{\K}$ is then assumed to be fixed.
\end{definition}
Thus, the estimated kernel matrix is {\em fixed}
after a single estimation (using $T$ measurement shots) and is the same as a single instantiation of the stochastic kernel matrix entering Theorem \ref{thm:rob}. The latter is a random variable which we used to formulate the chance constraints.

\begin{remark}(Confidence Interval)
The true kernel matrix lies within the interval
$\K^{*}_{ij}\in [\widehat{\K}_{ij}-\Delta,\widehat{\K}_{ij} + \Delta]$  with probability of at least $1-2\delta$, where $\widehat{\K}_{ij} = K^{(T)}_{ij}$ denotes an estimated kernel matrix entry using $T$ measurement shots and
\begin{equation}
\label{eq:conf_int}
\Delta = \Delta(\delta,T) = \frac{c}{2\sqrt{T}}\kappa\left({\delta}\right)
\end{equation}
\end{remark}
\begin{proof}
Using \ref{fact:tail}, one can derive the double-sided bound
\begin{equation}\nonumber
\begin{aligned}
\Prob\left(|K^{*}_{ij}-K^{(T)}_{ij}|\geq\Delta\right)&\leq 2 \exp\bigg(-\frac{\Delta^2}{2\sigma^2}\bigg)\\
&\leq 2\delta
\end{aligned}
\end{equation}
and use the expression for $\sigma$ from Definition \ref{lemma:cf}.
Note that a subgaussian distribution has infinite support even though our quantum kernels
have support $[0,1]$.
\end{proof}

We derive the following Robust ``Estimate" Program from Theorem \ref{thm:rob} by using the estimated matrix $\widehat{\K}$ in place of $\K^{*}$.

\begin{thm}[Shot-frugal and Robust program using an Estimated kernel matrix]
\label{thm:rob_est}
Let $N,\delta_{1},\delta_{2}$ be fixed to the same values as in \ref{eq:rob_primal}.
Let $\widehat{\K}$ denote an Estimated Kernel Matrix using $T$ measurements.
The optimal \ref{eq:rob_primal} objective is upper bounded by
\begin{equation}
J^{*}_{\rob} \leq \widehat{J}^{*}_{\rob},
\end{equation}
with probability at least $1-\delta'_{1}-\delta'_{2}$.
Here
\begin{equation}\tag{\tt{ShofaR-Est}}
\label{eq:J_rob_l2_est}
\widehat{J}^{*}_{rob}= \min_{\beta,b,\xi,t} \widehat{J}_{\rob}(\beta,b,\xi,t;\widehat{\K}|T,\delta'_{1},\delta'_{2})
\end{equation}
is the optimal solution to the following optimization problem
with the objective
\begin{equation}
\widehat{J}_{\rob}(\xi,t,\beta,b;\widehat{\K}|T,\delta'_{1},\delta'_{2}):= C\sum_{i}^{m}\xi_{i}+t
\end{equation}
satisfying
$\xi_{i}\geq0 \ \ \forall i \in [m]$, and also
\begin{equation}
\label{eq:ssc1}
\begin{aligned}
\xi_{i}\geq 1 -  y_i\left(\sum_{j=1}^m \beta_{j} \widehat{\K}_{ij}+b\right) &+ \frac{\c}{2\sqrt{N}} \kappa\left(\frac{\delta_{1}}{m}\right) \norm{\beta}_{2}&\\
&+ \Delta\left(\frac{\delta'_{1}}{m},T\right)\norm{\beta}_{2}&
\end{aligned}
\end{equation}
and
\begin{equation}
\label{eq:ssc2}
t\geq\frac{1}{2} \beta^{\top}\left(\widehat{\bK} + \frac{\c}{2\sqrt{N}}\kappa(\delta_{2})  \mathbf{I}_{m\times m}+
\Delta(\delta'_{2},T)  \mathbf{I}_{m\times m}\right)\beta.
\end{equation}
\end{thm}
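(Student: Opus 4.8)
The plan is to reduce Theorem~\ref{thm:rob_est} to Theorem~\ref{thm:rob} by a feasible-set containment argument. Since \ref{eq:J_rob_l2_est} and \ref{eq:rob_primal} minimise the \emph{same} objective $C\sum_i\xi_i+t$, it is enough to exhibit an event of probability at least $1-\delta_1'-\delta_2'$, over the single draw of the estimated matrix $\widehat{\K}=\K^{(T)}$, on which every $(\beta,b,\xi,t)$ feasible for \ref{eq:J_rob_l2_est} is also feasible for \ref{eq:rob_primal}; minimising the common objective over the smaller feasible set then gives $J^{*}_{\rob}\le\widehat{J}^{*}_{\rob}$. The workhorse is that the error matrix $E:=\K^{*}-\widehat{\K}=\K^{*}-\K^{(T)}$ has, by the uncertainty model \ref{eq:um} applied with $T$ shots, entries $E_{ij}\sim\subg(\sigma_T^2)$ with $\sigma_T=\c/(2\sqrt T)$; this is precisely the content of the confidence interval~(\ref{eq:conf_int}).

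First I would treat the hinge constraint. The constraint \ref{eq:ssc1} differs from \ref{eq:sc1} only in that $\K^{*}_{ij}$ is replaced by $\widehat{\K}_{ij}$ and the extra term $\Delta(\delta_1'/m,T)\norm{\beta}$ is added (the terms $1$, $b$ and $\frac{\c}{2\sqrt N}\kappa(\delta_1/m)\norm{\beta}$ are common and cancel). Hence \ref{eq:ssc1} implies \ref{eq:sc1} as soon as $-y_i\sum_j\beta_j E_{ij}\le\Delta(\delta_1'/m,T)\norm{\beta}$ for every $i\in[m]$. For a fixed $\beta$, the quantity $\sum_j(-y_i\beta_j)E_{ij}$ is a weighted sum of independent subgaussians within row $i$ (distinct circuits for distinct pairs, and $E_{ii}=0$), so Lemma~\ref{lemma:tail} with coefficient vector of norm $\le\norm{\beta}$ and confidence level $\delta_1'/m$ delivers the row-$i$ bound with probability at least $1-\delta_1'/m$; a union bound over the $m$ rows produces an event $\mathcal E_1$ of probability $\ge1-\delta_1'$ on which \ref{eq:ssc1}$\Rightarrow$\ref{eq:sc1}. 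Next I would treat the regularisation constraint identically: \ref{eq:ssc2} differs from \ref{eq:sc2} only by the extra $\Delta(\delta_2',T)\mathbf I_{m\times m}$ inside the quadratic form, so \ref{eq:ssc2}$\Rightarrow$\ref{eq:sc2} whenever $\beta^{\top}E\beta\le\Delta(\delta_2',T)\norm{\beta}^2$; applying Lemma~\ref{lemma:tail} to $\sum_{ij}\beta_i\beta_j E_{ij}$ (coefficient vector of norm $\lesssim\norm{\beta}^2$, up to the symmetrisation $E_{ij}=E_{ji}$) with confidence $\delta_2'$ yields an event $\mathcal E_2$ of probability $\ge1-\delta_2'$ on which it holds. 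On $\mathcal E_1\cap\mathcal E_2$, of probability $\ge1-\delta_1'-\delta_2'$ by the union bound, feasibility for \ref{eq:J_rob_l2_est} implies feasibility for \ref{eq:rob_primal}, and the theorem follows; chaining with Theorem~\ref{thm:rob} additionally gives $J^{*}(\K^{(N)})\le\widehat{J}^{*}_{\rob}$ with probability $\ge1-\delta_1-\delta_2-\delta_1'-\delta_2'$.

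The main obstacle is that these tail bounds are stated for a \emph{fixed} $\beta$, whereas the point one ultimately wants to certify, the optimiser $(\widehat\beta,\widehat b,\widehat\xi,\widehat t)$ of \ref{eq:J_rob_l2_est}, is itself a function of the random $\widehat\K$ and hence correlated with $E$. To make the containment genuinely uniform in $\beta$ one must either strengthen $\mathcal E_1,\mathcal E_2$ to control $\max_i\norm{E_{i\cdot}}$ and the spectral norm $\norm{E}$ — obtainable from the same subgaussian tails at the cost of mild extra factors (equivalently, a modestly larger $T$) — or argue directly at the realised optimiser. A secondary, bookkeeping, point is the symmetry $\widehat\K_{ij}=\widehat\K_{ji}$ together with the constant diagonal $\widehat\K_{ii}=1$ (so $E_{ii}=0$), which must be accounted for when invoking independence in Lemma~\ref{lemma:tail}; this affects the constants only by absolute factors. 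Finally, since $\Delta(\delta,T)=\frac{\c}{2\sqrt T}\kappa(\delta)\to0$ as $T\to\infty$, \ref{eq:J_rob_l2_est} collapses to \ref{eq:rob_primal} in the limit of an exactly known training matrix, as it should.
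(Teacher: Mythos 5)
Your proposal takes essentially the same route as the paper: decompose $\K^{*}=\widehat{\K}+E$ inside the two \ref{eq:rob_primal} constraints, absorb the resulting error terms $-y_i\sum_j\beta_jE_{ij}$ and $\beta^{\top}E\beta$ into the extra $\Delta(\cdot,T)$ slacks via Lemma~\ref{lemma:tail} together with a union bound over the $m$ hinge constraints, and conclude $J^{*}_{\rob}\le\widehat{J}^{*}_{\rob}$ by feasible-set containment under the common objective $C\sum_i\xi_i+t$. The measurability caveat you flag --- that the tail bounds are established for a fixed $\beta$ while the certified point $\widehat\beta$ depends on the realised $\widehat{\K}$ --- is genuine, but the paper's own proof applies the pointwise bound in exactly the same way without comment, so your argument is no weaker than (and is more candid than) the original; your suggested fix via uniform control of $\max_i\lVert E_{i\cdot}\rVert$ and $\lVert E\rVert$ is the natural way to close it, at the cost of extra factors in $T$.
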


\paragraph{Ensuring Convexity.}
Note that above program may not be convex since $\widehat{\K}$ may have negative eigenvalues. Therefore our approach
is to make $\delta'_{2}$ small enough such that the matrix in the brackets of Eq.\ (\ref{eq:ssc2}) is always positive
semidefinite. This can be ensured for {\em any} value of $T$, however small [see Eq.\ (\ref{eq:conf_int})]. In our experiments, the parameteres $\delta_{1}=\delta_{2}=0.01$ in \ref{eq:rob_primal},
and we set $\delta'_{1} = 0.01$ in \ref{eq:J_rob_l2_est}.
The confidence interval entering the constraint (\ref{eq:ssc1}) is set to $ \Delta\left(\frac{\delta'_{1}}{m},T\right)=0.1$, and working backwards gives $T\sim 400$ for our training sizes.
 These are very reasonable values for $T$ (IBMQ default is 4000) and are a {\em one-time cost}. Note that the confidence interval depends weakly on the training size due to the logarithmic dependence. This parametrization of the program using the confidence interval makes it robust even in the presence of other sources of noise (see Appendix \ref{sec:depol}).

\paragraph{Assumption on Training Kernel Matrix in related work.}

Note that a positive semidefinite training kernel matrix is a requirement to solve the Naive SVM as well since it ensures a convex objective. The exact kernel matrix $\K^*$ is guaranteed to be positive semidefinite but a kernel matrix $\K^{(T)}$ over the training data constructed using $T$ samples per entry would generally not be positive semidefinite.
Previous studies \cite{Hubregtsen2021,Wang2021nisq} have circumvented this by either clipping the negative part of the spectrum of $\K^{(T)}$ or adding the largest negative eigenvalue to make it positive definite.
By contrast, the parameter $\delta'_{2}$ entering \ref{eq:J_rob_l2_est}
can always be made small enough to ensure a convex objective. It thus avoids the clipping of the negative spectrum in $\widehat{\K}$ and leads to a principled formulation.
\begin{figure*}
\centering
     \begin{subfigure}[hb]{0.48\textwidth}
         \centering
         \includegraphics[width=\textwidth,trim = 0cm 0cm 0cm 0cm]{rel_circles.pdf}
         \caption{}
              \label{supp_fig:rel_shofar}

     \end{subfigure}
    \begin{subfigure}[hb]{0.48\textwidth}
    \centering
    \includegraphics[width=\textwidth,trim = 0cm 0cm 0cm 0cm]{rel_acc_circles.pdf}
    \caption{}
              \label{supp_fig:rel_acc_shofar}

     \end{subfigure}
     \begin{subfigure}[hb]{0.48\textwidth}
         \centering
         \includegraphics[width=\textwidth,trim = 0cm 0cm 0cm 0cm]{rel_circles_est.pdf}
         \caption{}
              \label{supp_fig:rel_shofar_est}

     \end{subfigure}
    \begin{subfigure}[hb]{0.48\textwidth}
    \centering
    \includegraphics[width=\textwidth,trim = 0cm 0cm 0cm 0cm]{rel_acc_circles_est.pdf}
    \caption{}
              \label{supp_fig:rel_acc_shofar_est}

     \end{subfigure}

     \captionsetup{justification=raggedright}
     \caption{Comparison of the \ref{eq:rob_primal} (top panel) and \ref{eq:J_rob_l2_est} (bottom panel) Programs over a test set $\D_{\test}$. The top panel compares the (a) reliability, and (b) relative accuracies, of the stochastic kernel classifiers arising from the \ref{eq:rob_primal} and \ref{opt:p} Programs.
     The bottom panel compares the (c) reliability, and (d) relative accuracies, of the stochastic kernel classifiers arising from the \ref{eq:J_rob_l2_est} and \ref{opt:p} Programs. It is seen that both \ref{eq:rob_primal} and \ref{eq:J_rob_l2_est} lead to significant savings in $N$
     over the nominal SVM obtained by solving \ref{opt:p}. The measurement requirements of \ref{eq:J_rob_l2_est} are marginally higher than \ref{eq:rob_primal} to reproduce \ref{eq:ekc*}. \ref{eq:J_rob_l2_est} uses an estimated kernel matrix
     but yet significantly outperforms the nominal \ref{opt:p}, even though the latter has access to the exact kernel matrix
     over the training data.}
\end{figure*}
\paragraph{Performance of the ``Estimate" Program.}
The Definitions \ref{eq:rekc*} and \ref{eq:rskc*} are naturally extended for the Robust Program \ref{eq:J_rob_l2_est}
by specifying $T,\delta'_{1},\delta'_{2}$; the values of $N,\delta_{1},\delta_{2}$ are assumed to be the same as in \ref{eq:rob_primal}. Over an independent test set, figs.\ \ref{supp_fig:rel_shofar_est} and \ref{supp_fig:rel_acc_shofar_est} respectively show the reliability of \ref{eq:rskc*} and \ref{eq:skc*}, and the relative accuracy of \ref{eq:rekc*}, for the \ref{eq:J_rob_l2_est} Program.
Figs.\ \ref{supp_fig:rel_shofar} and \ref{supp_fig:rel_acc_shofar} show the corresponding figures for the \ref{eq:rob_primal} Program. Both the Robust Programs as well as the nominal \ref{opt:p} SVM have the same stochastic kernel instantiations.
The Estimate program \ref{eq:J_rob_l2_est} leads to marginally larger values of $N$ to reliably reproduce \ref{eq:ekc*} but still shows large savings over the nominal \ref{opt:p} SVM. The performance over an independent test set for the different datasets are tabulated in \ref{tab:compare_robust}.

Importantly, our results (see Table\ \ref{tab:compare_robust}) show that \ref{eq:J_rob_l2_est} has a far lower requirement on $N$ than the nominal SVM (\ref{opt:p}) even though the latter has an unfair advantage: it is given access to the exact training kernel matrix $\K^{*}$ whereas the former only works with an estimate $\widehat{\K}$. We only tabulate $N$ from stochastic kernels computed using \ref{gates_dist} circuit. In our simulations in the presence of depolarizing noise, in Appendix \ref{sec:depol}, we work with the estimate $\widehat{\K}$ for the nominal classifiers as well. This results in larger resource requirements for the nominal SVM classifiers and worsens its accuracy.

\begin{table*}
  \centering
  \renewcommand{\arraystretch}{1.2}
\begin{tabular}{|p{2.2cm}|p{2.5cm}|p{2.5cm}|p{2.5cm}|p{2.5cm}|p{2.5cm}|}
\hline
\multirow{3}{*}{{\thead\ \ \ \ \ DATASET  }}
& \multicolumn{5}{c|}{\thead{$N,m_{\sv}$\\ For the stochastic kernel classifier
to perform same as \ref{eq:ekc*}}} \\
\cline{2-6}
 & {\thead{\ref{opt:p} \\ (Nominal SVM) }}& \thead{\ref{eq:rob_primal}}& \thead{\ref{eq:J_rob_l2_est}}& \thead{\ref{eq:J_rob_l1}} & \thead{\ref{eq:J_rob_l1_est}}\\
\hline
\thead{Circles\\ $m=40$}& \thead{$2^{14},8$}  & {\thead{\cellcolor{violet!90}${ 2^{7},40}$}} & \thead{\cellcolor{violet!75}$2^{9},40$} & \thead{\cellcolor{violet!50}$ 2^{12},6$} & \thead{\cellcolor{violet!50}$2^{13},9$}\\
\hline
\thead{Havlicek \\ $m=40$}  & \thead{$2^9,9$} & \thead{\cellcolor{violet!75}${ 2^{5},39}$} & \thead{\cellcolor{violet!90}$2^{4},40$ } & \thead{\cellcolor{violet!50}${ 2^{9},6}$}  & \thead{$2^9,9$}\\
\hline
\thead{Two Moons \\ $m=50$} & \thead{$2^{12},7$} & \thead{\cellcolor{violet!90}$ 2^{6},48$} & \thead{\cellcolor{violet!75}$2^{7},48$} & \thead{\cellcolor{violet!75}$ 2^{10},7$} &
\thead{\cellcolor{violet!25}$2^{11},10$}\\
\hline
\thead{Checkerboard \\
$m=100$}& \thead{$2^{16},20$} & \thead{\cellcolor{violet!75}$ 2^{9},94$} & \thead{\cellcolor{violet!50}$2^{11},94$} & \thead{$2^{15},22$} & \thead{$2^{17},15$}\\
\hline

\end{tabular}
\captionsetup{justification=raggedright}
  \caption{$N$ at which perfect reliability, with respect to $f^{*}$, is achieved over the test dataset
   have been listed along with the number of support vectors $m_{\sv}$
 which each program results in.
 The estimated kernel for the Programs \ref{eq:J_rob_l2_est} and \ref{eq:J_rob_l1_est} have different confidence intervals $\Delta({\frac{\delta'_{1}}{m},T}) = 0.1\ \textrm{and}\ 0.01$
 respectively. The instantiations
  of the kernel function $K^{(N)}$ are the same for all the $5$ stochastic kernel classifiers.
  The total number of circuit evaluations needed to reliably classify a given datapoint $N_{\textrm{tot}} = m_{\sv}N$.
  The Robust Programs generally result in large savings in $N_{\textrm{tot}}$, when compared against the nominal SVM,
  and have been colored in 4 shades representing savings of at least $25-50\%$, $50-75\%$, $75-90\%$ and $>90\%$. Darker shade
  represents greater savings. \ref{eq:J_rob_l2_est} does far better than the nominal \ref{opt:p} SVM even though the latter has access to the exact training kernel matrix.
  }
\label{tab:compare_robust}
\end{table*}



\subsection{Sparsity of support vectors}

So far we only looked at $N$, which is the number of measurements needed to estimate {\em one} kernel function.
The total number of circuit measurements needed for the classification of a single datapoint is $m_{\sv}N$, since $m_{\sv}$
kernel functions have to be evaluated.
The L2-norms over $\beta$ appearing in Theorems \ref{thm:rob} and \ref{thm:rob_est} lead to a number of support vectors
$m_{\sv}\sim \OO(m)$ of the order of the training set. Using the fact that $\norm{\beta}_{2}\leq\norm{\beta}_{1}$,
a relaxation of the L2-norm condition
into an L1-norm leads to greater sparsity of the set of support vectors $m_{\sv}\ll m$.
It is thus a possibility
that the resulting L1-norm Programs lead to a greater savings in the {\em total} number of circuit measurements.
We therefore state the following Corrolaries to Theorems \ref{thm:rob} and \ref{thm:rob_est} respectively.

\addtocounter{thm}{-1}
\begin{cor}(L1-norm Shot-frugal and Robust Program)
\label{thm:rob_l1}
For a stochastic training kernel matrix $\K^{(N)}\in\RR^{m}\times\RR^{m}$ inferred from $N$ measurements per entry,
the optimal \ref{opt:p} objective $J^{*}\big(\K^{(N)}\big) := \min_{\beta,b} J\big(\beta,b;\K^{(N)}\big)$ is upper bounded by
\begin{equation}
\label{eq:whp_rob}
J^{*}(\bK)\leq J^{*}_{\rob}\leq J^{*}_{\L1-\rob},
\end{equation}
with probability at least $1-\delta_{1}-\delta_{2}$, where
\begin{equation} \label{eq:J_rob_l1}
J^{*}_{{\L1}-\rob} = \min_{\beta,b,\xi,t } J_{{\L1}-\rob}\left(\xi,t,\beta, b; \bK^*| N, \delta_{1},\delta_{2} \right) \tag{\tt{L1-ShofaR}}
\end{equation}
is the optimal value of the Cone Program, whose objective is given by
\begin{equation}
J_{{\L1}-\rob}\left(\xi,t,\beta, b; \bK^*| N,\delta_{1},\delta_{2}\right) := C \sum_{i}\xi_{i}+t
\end{equation}
subject to
\begin{equation}\tag{L1-suff-chance-1}
\label{eq:sc1_l1}
\xi_{i}\geq 1 + \frac{\c}{2\sqrt{N}} \kappa{\bigg(\frac{\delta_{1}}{m}\bigg)}\norm{\beta}_{1} - y_i(\sum_{j=1}^m \beta_{j}\bK^{*}_{ij} + b)\ \ \ \forall i \in[m]
\end{equation}
\begin{equation}
\xi_{i}\geq0 \ \ \ \forall i \in[m]
\end{equation}
and
\begin{equation}\tag{suff-chance-2}
\label{eq:sc2_l1}
t\geq\frac{1}{2} \beta^{\top}\left(\bK^* + \frac{\c}{2\sqrt{N}}\kappa(\delta_{2})  \mathbf{I}_{m\times m}\right)\beta
\end{equation}
\end{cor}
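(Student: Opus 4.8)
The plan is to obtain \ref{eq:J_rob_l1} purely as a consequence of Theorem~\ref{thm:rob}, using nothing beyond the elementary inequality $\norm{\beta}_{2}\le\norm{\beta}_{1}$, which works in our favour in two complementary ways. There are three things to establish: (i) the high-probability bound $J^{*}(\K^{(N)})\le J^{*}_{\L1-\rob}$; (ii) the sandwich $J^{*}_{\rob}\le J^{*}_{\L1-\rob}$; and (iii) that \ref{eq:J_rob_l1} is a convex cone program. The first step is to observe that the only difference between \ref{eq:J_rob_l1} and \ref{eq:rob_primal} is that the hinge constraint (\ref{eq:sc1_l1}) carries $\norm{\beta}_{1}$ where (\ref{eq:sc1}) carries $\norm{\beta}_{2}$; the quadratic constraint (\ref{eq:sc2_l1}) is literally (\ref{eq:sc2}). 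Since $\norm{\beta}_{2}\le\norm{\beta}_{1}$, every $(\beta,b,\xi,t)$ meeting (\ref{eq:sc1_l1}) also meets (\ref{eq:sc1}).

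For (i), recall from the proof of Theorem~\ref{thm:rob} that (\ref{eq:sc1}) is a \emph{sufficient} condition for the per-datapoint chance constraint (\ref{eq:cc1}) — this follows by applying Lemma~\ref{lemma:tail} to $y_{i}\sum_{j}\beta_{j}\Delta K^{(N)}_{ij}$, which by the model \ref{eq:um}, Fact~\ref{fact:tail}, and the circuit factor of Definition~\ref{lemma:cf} is $\subg\!\big((\c/2\sqrt{N})^{2}\norm{\beta}_{2}^{2}\big)$ — and that (\ref{eq:sc2}) is sufficient for (\ref{eq:cc2}). Hence (\ref{eq:sc1_l1}) and (\ref{eq:sc2_l1}) are themselves sufficient for (\ref{eq:cc1}) and (\ref{eq:cc2}). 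A union bound over the $m$ copies of (\ref{eq:cc1}) together with (\ref{eq:cc2}), carried out exactly as in Theorem~\ref{thm:rob}, then shows that at \emph{any} feasible point of \ref{eq:J_rob_l1} the event $J(\beta,b;\K^{(N)})\le C\sum_{i}\xi_{i}+t$ holds with probability at least $1-\delta_{1}-\delta_{2}$; minimising the certificate $C\sum_{i}\xi_{i}+t$ over the feasible set yields $J^{*}(\K^{(N)})\le J^{*}_{\L1-\rob}$ on that event.

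For (ii), since (\ref{eq:sc1_l1})$\Rightarrow$(\ref{eq:sc1}), (\ref{eq:sc2_l1}) coincides with (\ref{eq:sc2}), the constraint $\xi_{i}\ge0$ is common, and the objective $C\sum_{i}\xi_{i}+t$ is identical in both programs, the feasible set of \ref{eq:J_rob_l1} is contained in that of \ref{eq:rob_primal} with the same objective values; minimising over a smaller set cannot decrease the optimum, so $J^{*}_{\rob}\le J^{*}_{\L1-\rob}$, and chaining with $J^{*}(\K^{(N)})\le J^{*}_{\rob}$ from Theorem~\ref{thm:rob} (valid on the same $1-\delta_{1}-\delta_{2}$ event) gives (\ref{eq:whp_rob}). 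For (iii), $\norm{\beta}_{1}$ is polyhedral and hence second-order-cone representable, so (\ref{eq:sc1_l1}) is a conic constraint in $(\beta,b,\xi)$, while (\ref{eq:sc2_l1}) is a rotated second-order cone constraint because $\K^{*}+\tfrac{\c}{2\sqrt{N}}\kappa(\delta_{2})\mathbf{I}_{m\times m}\succeq0$. I expect no genuine obstacle: the only point needing care is the direction of the implications — a \emph{tighter} sufficient constraint simultaneously shrinks the feasible set (delivering the upper bound on $J^{*}_{\rob}$) and still implies the chance constraints (preserving the probabilistic certificate) — so nothing beyond $\norm{\beta}_{2}\le\norm{\beta}_{1}$ and the already-established Theorem~\ref{thm:rob} is needed.
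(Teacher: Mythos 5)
Your proposal is correct and follows the same route as the paper: the single observation $\norm{\beta}_{2}\le\norm{\beta}_{1}$ makes (\ref{eq:sc1_l1}) a sufficient condition for (\ref{eq:sc1}), so the L1 feasible set is contained in the \ref{eq:rob_primal} feasible set, giving both $J^{*}_{\rob}\le J^{*}_{\L1-\rob}$ and the inherited probabilistic certificate from Theorem~\ref{thm:rob}. The paper's proof is a two-line version of exactly this; your added detail on the union bound and cone representability is consistent with, and merely expands on, the argument already given for Theorem~\ref{thm:rob}.
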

\begin{proof}
(\ref{eq:sc1_l1}) is a sufficient condition to ensure (\ref{eq:sc1}) since
$\norm{\beta}_{2}\leq\norm{\beta}_{1}$. This also implies that $J_{\rob}\leq J_{L1-\rob}\ \ \forall\beta$.
\end{proof}
In the case where the kernel matrix is estimated using $T$ measurements per entry, we have the following result.
\addtocounter{thm}{1}
\addtocounter{cor}{-1}
\begin{cor}[L1-norm Shot-Frugal and Robust Program using an Estimated kernel matrix]
\label{thm:rob_est_l1}
Let $N,\delta_{1},\delta_{2}$ be fixed to the same values as in \ref{eq:rob_primal}.
Let $\widehat{\K}$ denote an Estimated Kernel Matrix using $T$ measurements.
The optimal \ref{eq:rob_primal} objective is upper bounded by
\begin{equation}
J^{*}_{{\L1}-\rob} \leq \widehat{J}^{*}_{{\L1}-\rob},
\end{equation}
with probability at least $1-\delta'_{1}-\delta'_{2}$.
Here
\begin{equation}\tag{\tt{L1-ShofaR-Est}}
\label{eq:J_rob_l1_est}
\widehat{J}^{*}_{{\L1}-rob}= \min_{\beta,b,\xi,t} \widehat{J}_{{\L1}-\rob}(\beta,b,\xi,t;\widehat{\K}|T,\delta'_{1},\delta'_{2})
\end{equation}
is the optimal solution to the following optimization problem
with the objective
\begin{equation}
\widehat{J}_{{\L1}-\rob}(\xi,t,\beta,b;\widehat{\K}|T,\delta'_{1},\delta'_{2}):= C\sum_{i}^{m}\xi_{i}+t
\end{equation}
satisfying
$\xi_{i}\geq0 \ \ \forall i \in [m]$, and also
\begin{equation}
\label{eq:ssc1_l1}
\begin{aligned}
\xi_{i}\geq 1 -  y_i\left(\sum_{j=1}^m \beta_{j} \widehat{\K}_{ij}+b\right) &+ \frac{\c}{2\sqrt{N}} \kappa\left(\frac{\delta_{1}}{m}\right) \norm{\beta}_{1}&\\
&+ \Delta\left(\frac{\delta'_{1}}{m},T\right)\norm{\beta}_{1}&
\end{aligned}
\end{equation}
and
\begin{equation}
\label{eq:ssc2_l1}
t\geq\frac{1}{2} \beta^{\top}\left(\widehat{\bK} + \frac{\c}{2\sqrt{N}}\kappa(\delta_{2})  \mathbf{I}_{m\times m}+
\Delta(\delta'_{2},T)  \mathbf{I}_{m\times m}\right)\beta.
\end{equation}
\end{cor}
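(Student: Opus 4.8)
The plan is to run the proof of Theorem~\ref{thm:rob_est} inside the $\ell_1$-relaxed programs, i.e.\ to compose that argument with the relaxation $\norm{\beta}_2\le\norm{\beta}_1$ that takes Theorem~\ref{thm:rob} to Corollary~\ref{thm:rob_l1}. Let $E=E_1\cap E_2$ be the ``good event'' of Theorem~\ref{thm:rob_est}: on $E_1$ the hinge deviation bound $|y_i\sum_j\beta_j(\widehat{\K}_{ij}-\K^{*}_{ij})|\le\Delta(\delta'_{1}/m,T)\norm{\beta}_2$ holds for all $i$ (from Lemma~\ref{lemma:tail} applied to the independent subgaussian estimation errors $\widehat{\K}_{ij}-\K^{*}_{ij}\sim\subg(\tfrac{\c^{2}}{4T})$, union-bounded over the $m$ hinge constraints at level $\delta'_{1}/m$), and on $E_2$ the quadratic deviation bound $\beta^{\top}(\K^{*}-\widehat{\K})\beta\le\Delta(\delta'_{2},T)\norm{\beta}^{2}$ holds; by construction $\Prob(E)\ge 1-\delta'_{1}-\delta'_{2}$.

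The core step is to show that on $E$ every $(\beta,b,\xi,t)$ feasible for \ref{eq:J_rob_l1_est} is feasible for \ref{eq:J_rob_l1}; the conclusion then follows at once, since the two programs carry the identical objective $C\sum_i\xi_i+t$, so minimizing it over the smaller (L1-ShofaR-Est) feasible set can only raise it, giving $J^{*}_{{\L1}-\rob}\le\widehat{J}^{*}_{{\L1}-\rob}$ on $E$. For the hinge constraints I would split $1-y_i\big(\sum_j\beta_j\K^{*}_{ij}+b\big)=1-y_i\big(\sum_j\beta_j\widehat{\K}_{ij}+b\big)+y_i\sum_j\beta_j(\widehat{\K}_{ij}-\K^{*}_{ij})$ and, on $E_1$, bound the last term by $\Delta(\delta'_{1}/m,T)\norm{\beta}_2\le\Delta(\delta'_{1}/m,T)\norm{\beta}_1$, the second inequality being $\norm{\cdot}_2\le\norm{\cdot}_1$ exactly as in Corollary~\ref{thm:rob_l1}; adding the common term $\tfrac{\c}{2\sqrt N}\kappa(\delta_{1}/m)\norm{\beta}_1$ to both sides then shows that (\ref{eq:ssc1_l1}) implies (\ref{eq:sc1_l1}). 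The quadratic constraint (\ref{eq:ssc2_l1}) is identical to its $\ell_2$ counterpart (\ref{eq:ssc2}), and likewise (\ref{eq:sc2_l1}) to (\ref{eq:sc2}), so the quadratic part of the Theorem~\ref{thm:rob_est} argument applies verbatim on $E_2$ and gives that (\ref{eq:ssc2_l1}) implies (\ref{eq:sc2_l1}).

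I do not anticipate a real obstacle: all the content already lives in Theorem~\ref{thm:rob_est} (the subgaussian tails, the confidence intervals, the union bound) and in Corollary~\ref{thm:rob_l1} (the relaxation $\norm{\beta}_2\le\norm{\beta}_1$), and this Corollary simply composes the two. The only points needing care are bookkeeping: using $\norm{\beta}_1$ consistently in the hinge slack of \emph{both} L1 programs so the feasible-set inclusion runs the right way; noting that the quadratic constraint has no genuine ``$\ell_1$ analogue'' and is inherited unchanged from Theorem~\ref{thm:rob_est}; and splitting the failure budget as $\delta'_{1}$ over the $m$ hinge constraints plus $\delta'_{2}$ for the quadratic term. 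Chaining once more with $\norm{\beta}_2\le\norm{\beta}_1$ and Theorem~\ref{thm:rob_est} additionally gives $J^{*}_{\rob}\le\widehat{J}^{*}_{\rob}\le\widehat{J}^{*}_{{\L1}-\rob}$ on $E$, linking the estimated-kernel L1 program back to the \ref{opt:p} objective through the chain of Corollary~\ref{thm:rob_l1}.
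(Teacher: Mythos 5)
Your proposal is correct and matches the paper's (largely implicit) argument for this corollary: the paper proves it by composing the proof of Theorem~\ref{thm:rob_est} (splitting $\K^{*}=\widehat{\K}+(\K^{*}-\widehat{\K})$ and absorbing the estimation error into the confidence-interval terms via Lemma~\ref{lemma:tail} and the union bound) with the relaxation $\norm{\beta}_{2}\le\norm{\beta}_{1}$ used for Corollary~\ref{thm:rob_l1}, exactly as you do. Your feasible-set-inclusion framing on the good event $E$ is just a cleaner packaging of the same sufficient-condition chain, and your closing chain $J^{*}_{\rob}\le\widehat{J}^{*}_{\rob}\le\widehat{J}^{*}_{{\L1}-\rob}$ is consistent with the paper's ordering of the programs.
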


The total number of measurements needed to reliably reproduce \ref{eq:ekc*} for the SKCs arising from the nominal \ref{opt:p} SVM and the $4$ ShofaR Programs,
namely, \ref{eq:rob_primal}, \ref{eq:J_rob_l2_est}, \ref{eq:J_rob_l1} and \ref{eq:J_rob_l1_est}, are given in Table\ \ref{tab:compare_robust}. We only list the values for the \ref{gates_dist} circuit since the corresponding values for
the \ref{swap_dist} are roughly 4 times larger (see Table \ref{tab:test_reliability}).

\subsection{Benchmarks in literature}
\label{sec:lit_compare}

Clearly stated benchmarks of $N$ are hard to come by in the existing literature.
The reason for this is two-fold: (a) The problem of unreliable classification has not been given due importance and
(b) there are other device noises which adds an overhead to measurement requirements, making it hard to state clear benchmarks.
While some studies have considered both device and sampling noise \cite{Havlicek2019,Hubregtsen2021,Wang2021nisq} in the training kernel matrix,
 the measurement requirements during the classification phase has not received much attention.
%
We discuss the bounds obtained by requiring precise kernel evaluations and their relation to refs.\ \cite{Havlicek2019,Hubregtsen2021,Wang2021nisq} in Appendix \ref{sec:precise}. In Appendix \ref{sec:depol}, we include simulations in the presence of the depolarizing noise. This demonstrates that the Robust Programs we derived outperform the nominal SVM classifiers in the presence of other sources of noise as well.

\section{Summary and Outlook}

We have studied the role of $N$, the number of measurements used to evaluate a quantum kernel, for a classification task.
Our considerations have focused on providing the same predictions as the ideal quantum kernel classifier ($N\to\infty$), {\em not} to closely approximate the kernel function itself.
We noted that the classification accuracy is a poor performance metric in the presence of noise, and defined an empirical reliability that meaningfully captures the effects of noise.
The circuit used to evaluate the quantum kernel plays an important role in the analysis, and we have shown that the GATES test is preferable to the SWAP test for any $N$ (Definition \ref{lemma:cf}).
We introduced a generic uncertainty model (\ref{eq:um}) that can handle any source of noise.
Our results in the article have considered only the fundamental sampling noise;
the modification to the uncertainty model in presence of other noise sources is given in Appendix \ref{sec:datasets}. We show simulations in the presence of the depolarizing noise in Appendix \ref{sec:depol}.

Using a subgaussian bound, we showed that $N_{\theory} \sim m_{\sv}\log M/\gamma^{2}$ measurements are sufficient for reliable reproduction of the ideal quantum classifier over any dataset of size $M$.
If all pairs of kernel entries of this dataset and the set of support vectors are evaluated to a precision of $\varepsilon$ (in Frobenius norm), then the bound scales as $N \sim m_{\sv}M/\varepsilon^{2}$ (see Appendix \ref{sec:precise} for the derivation and comparison to related work).
We noted that the parameter $\gamma$, representing the margin of classification of an ideal quantum classifier,
plays a role analogous to the precision $\varepsilon$ with which a kernel entry is evaluated. Crucially, $\gamma$ is not vanishingly small. We then defined a tighter empirical bound $N_\practical$ and provided an algorithm for its computation in Appendix \ref{alg:N_prac}.

In Section\ \ref{sec:expt_2}, we developed the chance constraint programming for the SVM primal formulation and gave a Shot-Frugal and Robust (\ref{eq:rob_primal}) Program in Theorem \ref{thm:rob}. We showed that the resulting classifier requires far fewer circuit evaluations than even the empirical bound $N_{\practical}$ to reliably reproduce the ideal quantum classifier. In Theorem \ref{thm:rob_est}, we derive an additional program \ref{eq:J_rob_l2_est} which takes into account the fact that we do not have the exact training kernel matrix. We showed that this still leads to large savings over $N_{\practical}$, even when compared with the nominal \ref{eq:rob_primal} SVM which has access to the exact training kernel matrix. 
In Corollaries \ref{thm:rob_l1} and \ref{thm:rob_est_l1}, we consider an L1-norm relaxation of the \ref{eq:rob_primal} and \ref{eq:J_rob_l2_est} Programs respectively in order to reduce the set of support vectors. Table\ \ref{tab:compare_robust} compares the measurement requirements of all the ShofaR Programs against the nominal SVM classifier.
%

Finally, we point out some open questions related to our results that are worth exploring,
especially in light of the limitations of near-term quantum hardware.
\paragraph{Problem dependence.} To find the tightest theoretical bound $N_{\theory}$ for an accurate stochastic classifier \ref{eq:skc*}, we need to maximize $\gamma$ while maintaining low $\gamma$-margin errors of the exact classifier \ref{eq:ekc*}, i.e.\ stay just left of the vertical green line in Fig.\ \ref{fig:N_prac}.
Such a value of $\gamma$ would be problem (dataset + kernel choice) dependent, but can we develop a criterion to identify it?
Additionally, finding the lowest $N$ for a robust stochastic classifier \ref{eq:rskc*} while ensuring high relative accuracy, we need to estimate where $\textrm{RA}(h^{*},f^{*})$ reaches $1$ in Fig.\ \ref{fig:rel_acc}.
This would also be problem dependent, but can we find a method to identify it?
\paragraph{Possible extensions.} A kernel approach to understanding training evolution during gradient descent for deep neural networks was put forth recently \cite{jacot2018neural}. This method has been extended into the quantum domain \cite{shirai2022kernel,liu2022kernel} to understand the training of quantum neural networks during gradient descent.
Indeed, it is pointed out in Ref.\ \cite{liu2022kernel} that it would be worth while to explore the robustness of these kernel methods against noise. On a related note, a robust version \cite{chen2021rshadow} of the classical shadow estimation scheme \cite{aaronson2018shadow,huang2020shadow} has been put forth to efficiently learn properties of a many-body quantum system.
It would be interesting to explore the use of chance constraint programming in these topics.


\section*{Acknowledgments}
Our simulations with quantum kernels are performed using the {\tt Pennylane} python package developed by Xanadu \cite{Pennylane}.
We thank Kanika Gandhi for assistance in simulating the depolarizing noise and the authors of Ref.\ \cite{Hubregtsen2021} for making this code available.

AS was supported by the National Postdoctoral Fellowship, Science and Engineering Research Board (SERB), Government of India.
The work of AJ and AP was supported in part by the Centre for Excellence in Quantum Technology funded by the Ministry of Electronics and Information Technology, Government of India.

\section*{Code Availability}
The code for the numerical simulations can be found at \url{https://github.com/abhayshastry/ReliableQuantumKernelClassifiers}.
Correspondence regarding the manuscript and code should be sent to AS.

\bibliography{refs}
\begin{appendix}
\section{Practical bound computation for a specified margin}
\label{alg:N_prac}
We follow the algorithm below to compute the practical bound on $N$, from the Definition \ref{def:N_prac}. Fig.\ \ref{fig:N_prac} shows that this practical bound on $N$ is always below the theoretical bound given by Theorem \ref{thm:emp_risk}.

\begin{algorithm}[H]
\caption{Computing $N_{\mathrm{practical}}$}

\begin{algorithmic}
\Require{$\mathcal{D}$, $K^*$, $\gamma$}
\State{/* Given: $\beta^{*}$, $b^{*}$ over training data $\mathcal{D}_{\train}$ */}
\State{Initialize $N_{\mathrm{start}}$, $N_{\mathrm{step}}$, $\delta_{\target}=0.01$, $N_{\trials}=200$}
\For{$i \gets 1$ ~to~ $|\mathcal{D}|$}
    \For{$j \gets 1$ to $|\mathcal{D}_{\train}|$}
        \State{$\bK[i,j] \gets K^*(\bx_i, \bx_j)$}
    \EndFor
\EndFor
\qquad
\State{/* Above $K^{*}_{ij}$ needed for computation of $f^{*}(\bx_{i})$ */}

\State{$\epsilon_{\gamma} \gets \sum_{i \in [|\mathcal{D}|] } \mathbb{1}_{y_i f^{*}(\bx_i) < \gamma}$ }
\State{$N \gets N_{\mathrm{start}}$}
\qquad
\While{ $\delta_{\emp} \geq \delta_{\target}$}
\State{$N \gets N + N_{\mathrm{step}}$}
\For{$t \gets 1$ to $N_{\trials}$}
\State{$ c \gets 0 $}
    \For{$i \gets 1$ to $|\mathcal{D}|$}
        \For{$j \gets 1$ to $|\mathcal{D}_{\train}|$}
        \State{$\bK[i,j] \gets K^{(N)}\{t\}\left(\bx_i, \bx_j\right)$}
        \EndFor
    \EndFor
\State{/* $K^{(N)}\{t\}(\bx_{i},\bx_{j})$ for computing $f^{(N)}\{t\}(\bx_{i})$ */}
    \State{$\epsilon_0 \gets \sum_{i \in [|\mathcal{D}|] } \mathbb{1}_{y_i {f}^{(N)}\{t\}(\bx_{i}) < 0}$ }
    \If{$\epsilon_0 > \epsilon_\gamma$}
    \State{ $c \gets c + 1$}
    \EndIf
\EndFor
    \State{$\delta_{\emp} \gets {c}/{N_{\trials}}$}
\EndWhile
\Return{$N$}
\end{algorithmic}

\end{algorithm}

\section{Datasets and Kernel Embedding choices}
\label{sec:datasets}
\paragraph{\bf Datasets.} Our simulations with quantum kernels are performed using the {\tt PennyLane} python package developed by Xanadu \cite{Pennylane}. We use three datasets to illustrate our results: {\tt make_circles} (Circles) and {\tt make_moons} (Moons) dataset from {\tt sklearn} and the Havlicek generated dataset. The latter is generated by using {\tt IQPEmbedding}
and following the procedure outlined in ref.\ \cite{Havlicek2019}. We then use the same embedding kernel to classify the generated dataset. Our choice of kernel functions
for the other two datasets (Circles and Moons) are based on a trial and error process where we tried different embedding kernels and chose the one that returned
the highest test accuracies. Our choice of the kernel function for the dataset Circles is the simple {\tt AngleEmbedding} circuit shown in fig.\ (\ref{fig:angle_embed}) in the main
article. The Moons dataset was found to give very high test accuracies by using a X-rotation ({\tt AngleEmbedding}) after the application of the {\tt IQPEmbedding}.  All of these kernels are defined over 2-qubits (all datasets here are 2-dimensional) with no repeating layers and no tunable circuit parameters. We have chosen a smaller number of training samples, keeping
in view the restrictions placed by near-term devices \cite{Wang2021nisq}.

Table\ \ref{tab:datasets} shows training and test sizes. For the Havlicek dataset we use $m=40$ (20 labels per class as used in ref.\ \cite{Havlicek2019}), and a equal test size $M=40$, exactly as used in their study. All the training and test datasets used here are balanced. These kernel choices were used in the main article and all of them lead to nearly perfect test accuracies for the exact kernel classifier \ref{eq:ekc*}.

\paragraph{\bf Quantum Kernel embeddings.}

It is easy to see that the kernel function is equivalent to an inner
product in the {\em feature} Hilbert space $\mathcal{H}\otimes \mathcal{H}^*$ since,
for every pair $\ket{\phi(\bx_i)}, \ket{\phi(\bx_j)}\in \mathcal{H}$, we have
$\ket{\phi(\bx_i)}\otimes \ket{\phi^{*}(\bx_i)}, \ket{\phi(\bx_j)}\otimes \ket{\phi^{*}(\bx_j)} \in\mathcal{H}\otimes \mathcal{H} $, whose inner product is the quantum
kernel $K(\bx_{i},\bx_{j}) = |\bra{\phi(\bx_{i})}\phi(\bx_{j})\rangle|^{2}$. One can
express this in terms of an outer product $\ket{\phi(\bx_j)}\otimes \bra{\phi(\bx_j)} \in\mathcal{H}\otimes \mathcal{H^{*}} $
where the embedding vectors are density matrices and the kernel can be described via the
Hilbert-Schmidt inner product as given in (\ref{eq:QEK}).

\begin{figure}[hb]
\centering
\includegraphics[width=0.5\textwidth, trim = 2cm 21cm 9cm 5cm]{rotation_embedding.pdf}
\captionsetup{justification=raggedright}
\caption{{\bf Example: Angle Embedding.} A simple quantum embedding circuit which is embedding the input vector $\bx = (x_{1},x_{2}) \in\RR^{2}$
    into a quantum state $\ket{\phi(\bx)}$ in the 2-qubit Hilbert space $\mathcal{H}$,
    by applying a rotation along the X-axis by angle $x_{i}$ over the corresponding qubit starting
    in the state $\ket{0}$.
    The transition probability between two such quantum states $K(\bx,\bx') = |\langle{\phi(\bx')}|\phi(\bx)\rangle|^2$ is a valid kernel function and is known as a Quantum Embedding Kernel (QEK).
}
\label{fig:angle_embed}
\end{figure}

Note that the {\tt AngleEmbedding} shown in fig.\ \ref{fig:angle_embed} is a
simple quantum embedding circuit which can in fact be simulated efficiently on a classical machine \cite{Goldberg2017, Havlicek2019}. However, our study does not directly
concern the debate regarding quantum advantage. Rather, our focus is on the number of measurements $N$ needed to reliably implement a quantum kernel classifier.

All the figures presented in the main article used the Circles dataset with the {\tt AngleEmbedding} kernel.

\paragraph{\bf Change of kernel embedding}. Our results derived in the main article over the bounds on $N$ as well as the improved performance of the robust classifiers hold regardless of the choice of kernel. For a poorly chosen kernel, the accuracy of \ref{eq:ekc*} will be low and there will not exist a $\gamma^*$ satisfying Corollary \ref{cor:max_gamma}. At best, Theorem \ref{thm:emp_risk} guarantees that \ref{eq:skc*} will have an accuracy which is not worse than that of \ref{eq:ekc*}.

\begin{table*}
\def\arraystretch{1.5}
\begin{tabular}{|c|c|c|c|c|}
\hline
 \thead{Dataset }& \thead{Kernel Embedding}
 & \thead{Training size\\ $m$} & \thead{Test size \\ $M$ } & Test Accuracy\\
 \hline
  Circles (\tt make_circles) & {\tt AngleEmbedding}
 & 40 & 360 & 100\% \\ \hline
 Havlicek \cite{Havlicek2019} & {\tt IQPEmbedding}
 & 40 & 40 & 100\%\\
 \hline
 Moons (\tt make_moons) & \thead{{\tt IQPEmbedding} + \\ {\tt AngleEmbedding} }& 50 & 350 & 100\% \\
 \hline
  Checkerboard \cite{Hubregtsen2021} & {{\tt IQPEmbedding} }& 100 & 300 & 100\% \\
 \hline
\end{tabular}
\captionsetup{justification=raggedright}
\caption{Summary of datasets and quantum kernel embedding choices. The Moons dataset is encoded using the {\tt IQPEmbedding}
followed by a X-rotation by an angle given by the corresponding values of the feature vector. All embedding circuits are defined
over 2 qubits with no layer repetitions and no tunable parameters.
These choices lead to very high test accuracy for the true kernel classifier \ref{eq:ekc*}.}
\label{tab:datasets}
\end{table*}

\begin{table}
  \centering
  \renewcommand{\arraystretch}{1.2}
\begin{tabular}{|p{2cm}|p{1.2cm}|p{1.2cm}|p{1.2cm}|p{1.2cm}|}
\hline
\multirow{3}{*}{\thead{Dataset}} & \multicolumn{4}{c|}{\thead{$N$ needed for perfect reliability\\ with \ref{eq:ekc*}}} \\
\cline{2-5}
 & \multicolumn{2}{c|}{\textbf{GATES}} & \multicolumn{2}{c|}{\textbf{SWAP}} \\
\cline{2-5}
 & {\ref{eq:skc*}} & {\bf {\ref{eq:rskc*}}}& {\ref{eq:skc*}}& {\bf \ref{eq:rskc*}}\\
\hline
Circles & $2^{14}$ $\ddagger$ & $\mathbf{ 2^{8}}$ & $2^{16}$ $\ddagger$ & $\mathbf{ 2^{10}}$  \\
\hline
Havlicek \cite{Havlicek2019} & $2^9$ & $\mathbf{ 2^{5}}$ & $2^{11}$  & $\mathbf{ 2^{8}}$  \\
\hline
Two Moons & $2^{12}$ $\ddagger$ & $\mathbf{ 2^{6}}$ & $2^{13}$ $\ddagger$& $\mathbf{ 2^{9}}$  \\
\hline
Checkerboard \cite{Hubregtsen2021} & $2^{14}$ $\ddagger$ & $\mathbf{ 2^{9}}$ & $2^{16}$ $\ddagger$& $\mathbf{ 2^{12}}$ $\ddagger$ \\
\hline

\end{tabular}
\captionsetup{justification=raggedright}
  \caption{$N$ at which perfect reliability, with respect to $f^{*}$, is achieved over the test dataset for classifiers \ref{eq:skc*} and \ref{eq:rskc*}. Training test is the same
  as considered in the main article. The instantiations
  of the kernel function $K^{(N)}$ are the same for both the stochastic kernel classifiers.
  The \ref{eq:rob_primal} Program reduces the number of circuit evaluations needed (per kernel evaluation) by a factor of $8$ to $64$ in the datasets considered here. $\ddagger$ denotes $N$ exceeding IBM default value of $4000$.
  }
\label{tab:test_reliability}
\end{table}

\paragraph{\bf Other sources of noise.}
Recently, kernel estimates have been introduced that correct for depolarizing error as well \cite{Hubregtsen2021}. These estimates are not unbiased, i.e.\ $\EE[K^{(N)}\left(\bx_{i},\bx_{j}\right)] \neq K^*(\bx_{i},\bx_{j})$, and requires separate treatment. In such a case, we may write the final kernel function $K_{ij}$
as
\begin{equation}
K_{ij} = \widehat{K}_{ij} + \Delta{K}^{(N)}_{ij} + \Delta{K}^{(\textrm{other})}_{ij},
\end{equation}
where $\EE[{K_{ij}}] = \widehat{K}_{ij} \neq K^{*}_{ij}$ which must be estimated using
a bias-correcting procedure (such as the one for depolarizing noise in ref.\ \cite{Hubregtsen2021}). All other noise contributions have been subsumed into $\Delta{K}^{(\textrm{other})}_{ij}$ whose expectation value is taken as zero without loss of generality. Assuming they are independent sources of noise, we would have
\begin{equation}
\sigma_{0}^{2} = \sigma^{2}(K_{ij}) = \frac{\c^{2}}{4N} + \sigma^{2}(\textrm{other}),
\end{equation}
and the methods we develop in the article can be applied with the above $\sigma_{0}$ in the uncertainty model \ref{eq:um}. However, if the noise biases the kernel value, we would need to compare all our results to $\widehat{K}$ instead of $K^{*}$. All our results in the main article were in the presence of the sampling noise alone to clarify its contribution.
In Appendix \ref{sec:depol}, we consider the device noise.

\paragraph{\bf Performance over independent test set.}
Table \ref{tab:test_reliability} shows the $N$ needed per kernel entry. We can see large savings due to the \ref{eq:rob_primal}
Program.

\section{Subgaussian  property of Bernoulli random variables}
\label{sec:subgaussian}
A random variable $\xi$ (centered, $\EE {\xi}=0$) is called subgaussian if there exists $a\in [0,\infty)$ such that
\begin{equation}
\EE {e^{\lambda\xi}}\ \leq e^{\frac{a^{2}\lambda^{2}}{2}},\ \forall \lambda\in\mathbb{R}
\end{equation}
and the smallest such $a$
\begin{equation}
\sigma(\xi) = \textrm{inf}\ \big\{ a\geq0 \big| \EE {\exp{ \{ \lambda\xi \}}} \leq \exp{\big\{{a^{2}\lambda^{2}}/{2}\big\}},\ \forall \lambda\in\mathbb{R} \big\}
\end{equation}
is called the {\em subgaussian norm} of $\xi$ (smallest such $a^{2} = \sigma^{2}(\xi)$ has been called the optimal variance proxy).
We note two straight-forward properties of the subgaussian norm \cite{Buldygin2013}: (1) \emph{Scaling}, i.e. $\sigma^{2}(\lambda X) = \lambda^{2}\sigma^{2}(X)$ for any $\lambda\in\mathbb{R}$,
and (2) \emph{Sub-additivity}, i.e. for any independent $X_{k},\ k = \{1,2, ..., N\}$, $\sigma^{2}(\sum_{k=1}^{N}(X_k)) \leq \sum_{k=1}^{N} \sigma^{2}(X_{k})$.
Equality holds for independent and identical random variables $\xi_{k},\ k = \{1,2, ..., N\}$, $\sigma^{2}(\sum_{k=1}^{N}\xi_{k}) = N \sigma^{2}(\xi_{1})$.

For both GATES and SWAP, the mean of the estimated kernel matches the true kernel $\EE[K^(N)(\bx_i, \bx_j)] = K^{*}(\bx_i, \bx_j)$.  The variances of these two estimates are given by
\begin{equation}
 \Var(K^{(N)}(\bx_i, \bx_j)) = \begin{cases} \frac{K^*_{ij} \big(1 - K^*_{ij}\big)}{N}\hspace{-0.1in}&\text{(GATES)} \\
 \frac{ \big(1 - (K^*_{ij})^2\big)}{N} ~~~ &\text{(SWAP)}.
 \end{cases}
 \label{var_gates_swap}
\end{equation}
In Eq.\ (\ref{var_gates_swap}) since $0\leq K^{*}_{ij}\leq1$, we have $K^*_{ij} (1 - K^*_{ij})\leq1 - (K^*_{ij})^2$, implying a lower variance for the GATES circuit for
the same $N$:
\begin{equation}
\label{ineq_gates_swap}
\Var\left(K^{(N)}(\bx_i, \bx_j)\Big{|}_{\textrm{\tiny SWAP}}\right)\geq\Var\left(K^{(N)}(\bx_i, \bx_j)\Big{|}_{\textrm{\tiny GATES}}\right)
\end{equation}

Thus the SWAP circuit is affected more by the sampling noise.
The maximum value of $K^*_{ij} (1 - K^*_{ij})$ is $0.25$ whereas $1 - (K^*_{ij})^2$ can take a maximum value of $1$.
This bounds the variance of $K^{(N)}_{ij}$ in Eq.\ (\ref{var_gates_swap}) (equivalently the variance
of $\Delta{K}^{(N)}_{ij}$).

We note that the distributions given by (\ref{gates_dist}) and (\ref{swap_dist}) are subgaussian.
Furthermore, we note that $\Delta{K}^{(N)}_{ij}$ in (\ref{DeltaK}) can be expressed as the sum of $N$ i.i.d.\ centered Bernoulli random variables $X_{k}(p)\sim B^{(0)}(1,p) = p - B(1,p),\ k=\{1,2, ..., N\}$ for both the
GATES and the SWAP tests
\begin{equation}
N\Delta{K}^{(N)}_{ij} = \begin{cases}
 \sum_{k=1}^{N}X_{k}(p),\ p = K^{*}_{ij} & (\textbf{GATES})\\
2\sum_{k=1}^{N}X_{k}(p),\ p = \frac{1+ K^{*}_{ij}}{2} & (\textbf{SWAP}).
\end{cases}
\label{DeltaK_ind}
\end{equation}
Stated another way, the prefactor of 2 above for the SWAP test would have the effect of increasing its variance and thus
makes it less reliable for the same number of measurements $N$.
Bernoulli random variables are subgaussian \cite{Buldygin2013,Arbel2019} and their properties are covered in section \ref{sec:subgaussian}. We use
these properties to derive tail bounds over the kernel distributions.

We omit the random variable $\xi\sim B^{(0)}(1,p)$ while denoting the subgaussian norm for a centered Bernoulli distribution
and simply
write $\sigma(p)$.
The optimal variance proxy for a centered Bernoulli distribution was derived in \cite{Buldygin2013} and is given by ($q = 1 - p$)
\begin{equation}
\label{eq:sg_norm_ber}
\sigma^{2}(p) = \begin{cases}
0 & \ \ p = \{0,1\}\\
0.25 & \ \ p = 0.5 \\
\frac{p-q}{2(\ln(p)-\ln(q))} & \ \ \textrm{otherwise}.
\end{cases}
\end{equation}
Its properties were derived in lemma 2.1 of \cite{Buldygin2013} from which we note that
$\sigma^{2}(p)$ is monotonically increasing for $p\in(0,1/2)$ and monotonically decreasing for $p\in(1/2,1)$. It is also symmetrical $\sigma^{2}(p) = \sigma^{2}(q)$ (where $q = 1-p$).
It takes on its maximum value of 1/4 when $p = q =  1/2$. Using the maximum value of the subgaussian norm for Bernoulli random variables above and
eq.\ (\ref{DeltaK_ind}) for the two different circuits gives us the bound used in (\ref{eq:cf}).

A recent study \cite{Nghiem2021} on the IBM quantum hardware has shown that SWAP test is a poor choice compared to the GATES test for estimating QEKs, owing to errors in the controlled SWAP implementation. While this may be a factor, it appears that the authors may have overlooked that a straightforward inequality (\ref{ineq_gates_swap}) underlies their observation.

\section{More on chance constraint programming}

\subsection{Chance constraint: Union bound}
\label{sec:union_bound}

When multiple chance constraints are present, we have invoked the well-known union bound to derive our results \cite{Boole2009}.
Suppose the chance constraints to be satisfied are some inequalities $A_{i},\ i\in[m]$, we denote its conjugate as $\bar{A}_{i}$ which is the violation of inequality $A_{i}$.
We would like all the constraints $A_{i}$, i.e.\ $\cap_{i}^{m}A_{i}$, to be satisfied and derive a sufficient condition for it. Since $\overline{\cap_{i}^{m}A_{i}}=\cup_{i}\bar{A}_{i}$,
we consider the probability of the union of all constraint violations
\begin{equation}
\begin{aligned}
\Prob\big(\bar{A}_{1}\cup\bar{A}_{2}\cup ...\bar{A}_{m}\big) &\leq\sum_{i=1}^{m}\Prob\big(\bar{A}_{i}\big),\\
&\leq\delta,
\end{aligned}
\end{equation}
for some small parameter $0<\delta\ll1$. This provides the sufficient condition
\begin{equation}
\begin{aligned}
\label{eq:ub}
\Prob(\bar{A}_{i})&\leq\frac{\delta}{m},\\
\Prob(A_{i})&\geq 1-\frac{\delta}{m},\ \ \forall i\in[m]
\end{aligned}
\end{equation}

\subsection{Sufficient condition in the Chance Contraint Program}
\label{sec:chance_primal}

We give a slightly more detailed explanation of the chance constraint approach in this Section, at the expense
of repetition from Section \ref{sec:chance}.
The kernel matrix $\K=\K^{(N)}= \K^{*} - \Delta\K^{(N)}$ is stochastic and entries of $\Delta{\K}^{(N)}$ satisfy
the uncertainty model (\ref{eq:um}) with the bound given by Eq.\ (\ref{eq:cf}).

The hinge loss given by
\begin{equation}
L(\beta,b;\bK) = \sum_{i=1}^{m} \max\bigg(1-y_{i}(\sum_{j}\bK_{ij}\beta_{j} + b), 0\bigg)
\end{equation}
is also subject to the stochasticity of the kernel matrix $\bK$. Whether the term is zero or positive depends on the particular instantiation of the kernel matrix.
This will have to be handled by a chance constraint approach. The same objective in the
SVM primal problem (\ref{eq:obj}) can be written as
\begin{equation}\tag{PRIMAL-2}
\label{eq:obj2}
\min_{t,\xi,\beta,b} J(t,\xi,\beta,b;\K):= C\sum_{i=1}^{m}\xi_{i} + t,
\end{equation}
with the hinge-loss term as the two constraints
\begin{equation}
\label{hl_con}
\IE_{i}:\ \xi_{i}\geq 1 - y_{i}\big(\sum_{j=1}^{m}\bK_{ij}\beta_{j} + b\big),
\end{equation}
and
\begin{equation}
\xi_{i}\geq 0,\ i\in[m].
\end{equation}
The quadratic term is translated to the constraint
\begin{equation}
\label{QK_con}
\frac{1}{2}\beta^{\T}\bK\beta \leq t.
\end{equation}

The constraints (\ref{hl_con}) and (\ref{QK_con}) are dependent on the particular instantiation of the stochastic kernel matrix $\bK_{N}$ and we thus advocate the strategy of satisfying them with high probability.
We denote the probability of violation of {\em any} one constraint $\IE_{i}$ in (\ref{hl_con}) by $\delta_{1}\ll1$ and the probability of violating the constraint (\ref{QK_con})
by $\delta_{2}\ll1$. The intersection of all the inequalities $\IE_{i}$ is satisfied with a high probability $1-\delta_{1}$, which gives us a sufficient condition
that each inequality be satisfied with the higher probability of $1 - \delta_{1}/m$ (union bound \ref{sec:union_bound})
\begin{equation}
\label{cc-1}
\Prob\bigg(\xi_{i}\geq 1 - y_{i}\big(\sum_{j}\bK_{ij}\beta_{j} + b\big)\bigg) \geq 1 - \frac{\delta_{1}}{m},
\end{equation}
and
\begin{equation}
\label{cc-2}
\Prob\bigg(\frac{1}{2}\beta^{\T}\bK\beta \leq t\bigg) \geq 1 - \delta_{2}.
\end{equation}
The sufficient conditions for the constraints (\ref{cc-1}) and (\ref{cc-2}) lead to Theorem \ref{thm:rob}, and is
covered in the Proofs Section.
At optimality, we shall denote $J(t,\xi,\beta,b;\K) = J^{*}(\K)$, dropping the optimal variables $t^{*},\xi^{*},\beta^{*},b^{*}$ from the
notation. 

\subsection{Second-order cone programming}
\label{sec:socp}

A second-order cone program (SOCP) involves a linear objective function with one or more second-order cone constraints
and any additional linear inequality constraints. An example of a second-order cone constraint in $d+1$ dimensions is $\norm{x}_{2}\leq t$, where $\bx\in\RR^{d},\ t\in\RR$.
The convex approximations we developed for the chance constraints
(\ref{cc-1}) and (\ref{cc-2}), i.e.\ (\ref{suff_cc-1}) and (\ref{suff_cc-2}) respectively, are second-order cone representable. This was shown for the chance
constraint approach involving the SVM Dual in Ref.\ \cite{Bhadra2010}. We show it for the SVM Primal \ref{eq:rob_primal} below.

The constraint (\ref{suff_cc-1}) can be represented as a second-order cone (SOC) constraint by simply introducing an
additional variable $r$ with the SOC constraint $\norm{\beta}\leq r$, and replacing $\norm{\beta}$ by $r$ in (\ref{suff_cc-1}) which is an additional linear inequality constraint.
Similarly, (\ref{suff_cc-2}) is a quadratic constraint which can be straight-forwardly expressed as a SOC constraint $\norm{A\beta}\leq t$, where $A\in\psdset$ is
the positive-definite square root of the matrix $A^{2}=\left(\K^{*} + \frac{\c}{2\sqrt{N}}\kappa(\delta_{2}) I  \right)\in\psdset$. The primal objective (\ref{eq:obj2})
along with the linear and SOC constraints
stated above forms a second-order cone program (SOCP) which can be solved efficiently \cite{Lobo1998socp}. We use the MOSEK solver for its numerical implementation in Python
using the convex programming package \texttt{cvxpy} \cite{diamond2016cvxpy,agrawal2018rewriting}.

\section{Proofs}
\label{sec:proofs}

\begin{proof}[Proof of Theorem \ref{thm:N_bound}]
Let
\begin{equation}
\begin{aligned}
g^{*}(\bx) &= \sum_{j}\beta^{*}_{j} K^{*}(\bx,\bx_{j}) + b^*\\
g^{(N)}(\bx)&= \sum_{j}\beta^{*}_{j} K^{(N)}(\bx,\bx_{j}) + b^*,
\end{aligned}
\end{equation}
so that the margin of classification $\gamma(\bx) = y g^{*}(\bx)$ for EKC at point $(\bx,y)$.

The reliability $\R(\bx;f^{(N)},f^{*})\geq1-\delta$ means that $g^{(N)}(\bx)$ and $g^{*}(\bx)$
have the same sign with probability of at least $1-\delta$. Equivalently, the probability that they have different signs
\begin{equation}
\Prob\left(g^{(N)}(\bx)g^{*}(\bx)<0\right)\leq\delta
\end{equation}
does not exceed $\delta$. Use $g^{(N)}(\bx) = g^{*}(\bx) - \sum_{j}\beta_{j}\Delta K(\bx,\bx_{j})$.
Rearranging,
$$\Prob\left(g^*(\bx)\sum_{j}\beta_{j}\Delta K(\bx,\bx_{j})> g^{*}(\bx)^{2} \right)\leq\delta.$$
Direct application of Lemma \ref{lemma:tail} using the uncertainty model \ref{eq:um}, and Definition \ref{lemma:cf} for the variance of $\Delta K(\bx,\bx_{j}) $, completes the proof.
\end{proof}

Theorem \ref{thm:emp_risk} follows a procedure similar to that of Theorem \ref{thm:N_bound} but additionally uses the Union bound
mentioned in Section \ref{sec:union_bound}.
\begin{proof}[Proof of Theorem \ref{thm:emp_risk}]
We demand $S_{0}[f^{(N)}]\subseteq S_{\gamma}[f^*]$ over the dataset $\D$, with high probability:
\begin{equation}
\label{eq:subset_orig}
\Prob\left(S_{0}[f^{(N)}]\subseteq S_{\gamma}[f^*]\right)\geq 1-\delta.
\end{equation}
$(\bx,y)\in S_{\gamma}[f^*]$ if $yg^{*}(\bx)-\gamma<0$ by Definition \ref{margin_violation}.
Similarly $(\bx,y)\in S_{0}[f^{(N)}]$ if $yg^{(N)}(\bx)<0$.

Equivalently, we bound the probability that $(\bx,y)\in S_{0}[f^{(N)}]$ given $(\bx,y)\in \overline{S_{\gamma}[f^*]}$.

$(\bx_{i},y_{i})\in S_{0}[f^{(N)}]$ gives
\begin{equation}
\begin{aligned}
&y_{i}g^{(N)}(\bx_{i})<0\\
&y_{i}g^{*}(\bx_{i}) - y_{i}\left(\sum_{j}\Delta K^{(N)}_{ij}\beta_{j}\right)<0,\\
\end{aligned}
\end{equation}
where $\Delta K^{(N)}_{ij} = K^{*}(\bx_{i},\bx_{j})-K^{(N)}(\bx_{i},\bx_{j})$.
Now, using $y_{i}g^{*}(\bx_{i})-\gamma>0$, i.e. $(\bx_{i},y_{i})\in \overline{S_{\gamma}[f^*]}$, the above equation gives
\begin{equation}\tag{$\bar{A}_{i}$}
\label{eq:A_i}
y_{i}\left(\sum_{j}\Delta K^{(N)}(\bx_{i},\bx_{j})\beta_{j}\right)>\gamma,
\end{equation}
whose probability we would like to make smaller than $\delta$ (for all datapoints $\bx_{i}\in\D$).

Using the union bound in \ref{sec:union_bound}, for all datapoints in $\bx_{i}\in\D$, we shall
make the combined probability of \ref{eq:A_i} small
\begin{equation}
\Prob(\cup_{i} \bar{A}_{i})\leq \sum_{i=1}^{M}\Prob(\bar{A}_{i})\leq \delta,
\end{equation}
which is satisfied if
\begin{equation}
\label{eq:A_i_2}
\Prob\left(y_{i}\left(\sum_{j}\Delta K^{(N)}_{ij}\beta_{j}\right)>\gamma\right)\leq\delta/M.
\end{equation}
Applying Lemma \ref{lemma:tail} to
and using the uncertainty model \ref{eq:um}, and Definition \ref{lemma:cf} for the variance of $\Delta K(\bx_{i},\bx_{j}) $, completes the proof.


%
\end{proof}

\begin{proof}[Proof of Corollary \ref{cor:Reliability}]
We have from Eq.\ (\ref{eq:max_gamma})
\begin{equation}
\epsilon_{\gamma*}\left(f^{*}\right) = \epsilon_{0}\left(f^*\right).
\end{equation}
Taking $\delta/M $ as $ \delta$ in Eq.\ (\ref{eq:A_i_2}) implies that
\begin{equation}
\R_{\delta}\left(f^{(N)},f^{*}\right) = 1.
\end{equation}
That is, each datapoint in $\D$ is reliably classified by \ref{eq:skc*}
with a probability of at least $1-\delta$.
\end{proof}
%

\begin{proof}[Proof of Theorem \ref{thm:rob}]
\label{proof:thm:rob}

The optimization problem (\ref{eq:obj2}) subject to the chance constraints (\ref{cc-1}) and (\ref{cc-2}) is in general nonconvex and we now derive
sufficient conditions for the chance constraints (\ref{cc-1}) and (\ref{cc-2}) using the uncertainty model \ref{eq:um}.
We first write the random kernel $K$ as the sum of its expected and stochastic parts $ K_{ij} = K^{*}_{ij} - \Delta K^{(N)}_{ij} $
in the chance constraint (\ref{cc-1})
\begin{equation}
\label{eq:cc_1_supp}
\Prob\bigg( y_{i}\sum_{j}\beta_{j}\Delta K^{(N)}_{ij}\leq h\bigg)\geq 1-\frac{\delta_{1}}{m},
\end{equation}
where $h =  y_{i}\big(\sum_{j} K^{*}_{ij}\beta_{j} + b\big) + \xi_{i} - 1$.
We derive a tail bound for the constraint violation probability for the random variable $X = y_{i}\sum_{j}\beta_{j}\Delta K^{(N)}_{ij}$
using lemma \ref{lemma:tail} and find the sufficient condition
\begin{equation}
\label{suff_cc-1}
\xi_{i}\geq 1 - y_{i}\bigg(\sum_{j} K^{*}_{ij}\beta_{j} + b\bigg) + \frac{\c}{2\sqrt{N}} \kappa\bigg(\frac{\delta_{1}}{m} \bigg) \norm{\beta},
\end{equation}
for satisfying the chance constraint (\ref{cc-1}).

We similarly derive a sufficient condition for the chance constraint (\ref{cc-2}). With the kernel matrix $\K_{ij}=\K^{*}_{ij} - \Delta \K^{(N)}_{ij}$,
we define the random variable
\begin{equation}
R = -\frac{1}{2}\beta^{\T}\Delta{\K}^{(N)}\beta =  -\frac{1}{2}\sum_{ij}\Delta{K}^{(N)}_{ij}\beta_{i}\beta_{j}.
\end{equation}
We rewrite (\ref{cc-2}) as $\Prob(R\leq t')\leq \delta_{2}$
and apply Lemma \ref{lemma:tail} to find the sufficient condition. We detail it below for clarity:
%
\begin{equation}
\label{suff_cc-2}
\begin{aligned}
t'&\geq \frac{\c}{4\sqrt{N}}\norm{\beta}^{2} \kappa(\delta_{2})\\
t&\geq \frac{1}{2}\beta^{\T}\K^{*}\beta + \frac{\c}{4\sqrt{N}}\norm{\beta}^{2} \kappa(\delta_{2})\\
&=\frac{1}{2}\beta^{\T}\bigg(\K^{*} + \frac{\c}{2\sqrt{N}}\kappa(\delta_{2}) I  \bigg)\beta
\end{aligned}
\end{equation}
for satisfying the chance constraint (\ref{cc-2}).

Use sufficient conditions (\ref{suff_cc-1}), (\ref{suff_cc-2}) for the objective
(\ref{eq:obj2}) and eliminate (dummy) variable $t$. Apply the union bound (\ref{sec:union_bound}) for conditions (\ref{cc-1}) and (\ref{cc-2})
to show $\forall\beta,b$:
\begin{equation}
\Prob\big(J(\beta,b;\K)\leq J_{\rob}(\xi,\beta,b;\K^{*})\big)\geq 1 - \delta_{1} - \delta_{2}.
\end{equation}
At optimality $J^{*}(\K) = \min_{\beta,b}J(\beta,b;\K)$ and $J^{*}_{\rob} = \min_{\xi,\beta,b} J_{\rob}(\xi,\beta,b;\K^{*})$ to get (\ref{eq:whp_rob}).
Eliminating $\xi\geq 0$ from the constraints (\ref{suff_cc-1}) give us the hinge-loss form of the objective used in Eq.\ (\ref{eq:J_rob}).
\end{proof}

\begin{proof}[Proof of Theorem \ref{thm:rob_est}]
The \ref{eq:rob_primal} Program has the conditions (\ref{suff_cc-1}) and (\ref{suff_cc-2}) which depend on the exact kernel matrix
$\K^{*}$. In the setting of \ref{eq:J_rob_l2_est}, we do not have access to the exact kernel matrix and instead work with an estimated kernel matrix $\widehat\K$. We write \ref{suff_cc-1} as
\begin{equation}
\begin{aligned}
\label{ssc_1}
\xi_{i}\geq 1 -& y_{i}\bigg(\sum_{j} \widehat{\K}_{ij}\beta_{j} + b\bigg) + \frac{\c}{2\sqrt{N}} \kappa\bigg(\frac{\delta_{1}}{m} \bigg) \norm{\beta}\\
-& y_{i}\sum_{j}\bigg({\K}^{*}_{ij}-\widehat{\K}_{ij} \bigg) \beta_{j}.
\end{aligned}
\end{equation}
Now, treating the ${\K}^{*}_{ij}-\widehat{\K}_{ij} = \Delta K^{(T)}_{ij}$ as a random variable allows us
to satisfy \ref{suff_cc-1} with high probability $1-\delta'_{1}/m$, whenever
\begin{equation}
\label{eq:ssc1_supp}
\begin{aligned}
\xi_{i}\geq 1 -  y_i\left(\sum_{j=1}^m \beta_{j} \widehat{\K}_{ij}+b\right) &+ \frac{\c}{2\sqrt{N}} \kappa\left(\frac{\delta_{1}}{m}\right) \norm{\beta}_{2}&\\
&+ \Delta\left(\frac{\delta'_{1}}{m},T\right)\norm{\beta}_{2}.&
\end{aligned}
\end{equation}
The above equation is sufficient to ensure that (\ref{suff_cc-1}) holds with high probability when
we are using the estimate in place of the exact kernel matrix. It also employed Lemma \ref{lemma:tail}
and the union bound identically as done previously from (\ref{eq:cc_1_supp}).

Similarly, (\ref{suff_cc-2}) is written as
\begin{equation}
t\geq\frac{1}{2}\beta^{\T}\bigg(\widehat{\K} + [\K^{*}-\widehat{\K}] + \frac{\c}{2\sqrt{N}}\kappa(\delta_{2}) I  \bigg)\beta.
\end{equation}
Now we know that the term in the square bracket can be bounded with high probability. Thus, (\ref{suff_cc-2})
is satisfied with a probability of $1-\delta'_{2}$, whenever
\begin{equation}
\label{eq:ssc2_supp}
t\geq\frac{1}{2} \beta^{\top}\left(\widehat{\bK} + \frac{\c}{2\sqrt{N}}\kappa(\delta_{2})  \mathbf{I}_{m\times m}+
\Delta(\delta'_{2},T)  \mathbf{I}_{m\times m}\right)\beta.
\end{equation}
(\ref{eq:ssc1_supp}) and (\ref{eq:ssc2_supp}) together with the union bound shows that
\begin{equation}
\Prob\left(J*_{\rob}\leq \hat{J}^{*}_{\rob} \right)\geq 1-\delta'_{1}-\delta'_{2}.
\end{equation}
\end{proof}
\section{Bounds from precise kernel evaluation}
\label{sec:precise}

\begin{thm}[Precise Kernel Estimation]
Let $m_{\sv}$ denote the number of support vectors of an SVM.
Over any dataset $\D$ of size $|\D|=M$, let $\K^{(N)}$ and $\K^*$ denote
the $m_{\sv}\times M $ size matrices $\K^{(N)}_{ij} = K^{(N)}_{ij}$, $\K^{*}_{ij} = K^{*}_{ij}$,
$i\in[m_{\sv}],\ j\in[M]$, whose entries are the stochastic kernel and the exact kernel functions respectively.
Then the number of measurements
\begin{equation}
\label{eq:precise}
N\geq \frac{\c^{2}}{2\varepsilon^{2}} m_{\sv}M\log\left(\frac{m_{\sv}M}{\delta}\right)
\end{equation}
is sufficient to ensure
\begin{equation}
\norm{\K^{(N)}-\K^{*}}_{F}\leq\varepsilon 
\end{equation}
holds with a probability of at least $1-\delta$.
\end{thm}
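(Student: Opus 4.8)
The plan is to reduce the Frobenius-norm bound to a collection of entrywise bounds and then invoke the subgaussian tail estimate together with a union bound. Write $\norm{\K^{(N)}-\K^{*}}_{F}^{2} = \sum_{i\in[m_{\sv}],\,j\in[M]} \big(\Delta K^{(N)}_{ij}\big)^{2}$, a sum of $m_{\sv}M$ nonnegative terms. A simple sufficient condition for $\norm{\K^{(N)}-\K^{*}}_{F}\le\varepsilon$ is that every entry obey $|\Delta K^{(N)}_{ij}|\le \varepsilon/\sqrt{m_{\sv}M}$, since then the sum is at most $m_{\sv}M\cdot\varepsilon^{2}/(m_{\sv}M)=\varepsilon^{2}$. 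So it suffices to control the probability that any one of these $m_{\sv}M$ entrywise events fails.

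Next I would apply the uncertainty model \ref{eq:um}: each $\Delta K^{(N)}_{ij}\sim\subg(\sigma_{0}^{2})$ with $\sigma_{0}=\c/(2\sqrt{N})$ from Definition \ref{lemma:cf}. Using the two-sided form of the subgaussian tail (Lemma \ref{lemma:tail} applied to $\pm\Delta K^{(N)}_{ij}$, i.e.\ a single-term sum with $a$ a unit scalar), $\Prob\big(|\Delta K^{(N)}_{ij}|\ge t\big)\le 2\exp\big(-t^{2}/(2\sigma_{0}^{2})\big)$. By the union bound (Appendix \ref{sec:union_bound}) it is then enough that each entry fail with probability at most $\delta/(m_{\sv}M)$, which holds once $t=\varepsilon/\sqrt{m_{\sv}M}\ \ge\ \sigma_{0}\sqrt{2\log(2m_{\sv}M/\delta)}$.

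Finally I would solve this inequality for $N$: substituting $\sigma_{0}=\c/(2\sqrt{N})$ and squaring gives $N\ \ge\ \frac{\c^{2}}{2\varepsilon^{2}}\,m_{\sv}M\,\log(2m_{\sv}M/\delta)$, which implies the stated bound (dropping the harmless factor of $2$ inside the logarithm, or equivalently using the one-sided tail over the $2m_{\sv}M$ signed deviations).

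There is no real obstacle here; the only points requiring a little care are the equipartition of the error budget $\varepsilon^{2}$ across the $m_{\sv}M$ entries — any allocation works, but the uniform one is natural — and the union-bound bookkeeping (and the factor of two in the logarithm). One could instead try to exploit concentration of the full sum $\sum\big(\Delta K^{(N)}_{ij}\big)^{2}$ directly via a Bernstein-type bound for sub-exponential variables, which might shave the logarithmic factor, but the entrywise route is cleaner and already matches the claimed scaling $N\sim m_{\sv}M/\varepsilon^{2}$ that the paper wishes to contrast with the margin-based bound $N_{\theory}\sim m_{\sv}\log M/\gamma^{2}$.
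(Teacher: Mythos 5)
Your proof is correct and follows essentially the same route as the paper: reduce the Frobenius bound to $m_{\sv}M$ entrywise bounds of size $\varepsilon/\sqrt{m_{\sv}M}$, apply the subgaussian tail with $\sigma_{0}=\c/(2\sqrt{N})$, and finish with a union bound. You are in fact slightly more careful than the paper about the two-sided nature of the tail event, which is why you pick up the (harmless) factor of $2$ inside the logarithm that the paper's own argument silently drops.
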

\begin{proof}
We would like for the Frobenius norm
\begin{equation}
\sqrt{\sum_{ij} \left(K^{(N)}_{ij}-K^{*}_{ij}\right)^{2}}\leq\varepsilon
\end{equation}
to be bounded with high probability. For the above condition, it is sufficient to
ensure that each term is bounded by
\begin{equation}
\IE_{ij}:\ |{\K^{(N)}_{ij}-\K^{*}_{ij}}|\leq \frac{\varepsilon}{\sqrt{m_{\sv}M}}.
\end{equation}
We would like {\em all} inequalities $\IE_{ij}$ to hold with high probability.
In other words, we shall ensure that its compliment $ \overline{\cap_{ij}\IE_{ij}} = \cup_{ij}\overline{\IE_{ij}} $
is bounded by a low probability
\begin{equation}
\begin{aligned}
\Prob\left(\overline{\cap_{ij}\IE_{ij}}\right) &= \Prob\left(\cup_{ij}\overline{\IE_{ij}}\right)\\
&\leq m_{\sv}M \Prob\left(\overline{\IE_{ij}}\right)\\
&\leq \delta
\end{aligned}
\end{equation}
Here, the second and third line together with Lemma \ref{lemma:tail} prove the result.
\end{proof}

Using the above Theorem, we can get the bound stated in Havlicek et.\ al.\ \cite{Havlicek2019},
by taking $m_{\sv}=M=m$. One can now see the main advantage of Theorem \ref{thm:emp_risk}: the margin $\gamma$ is
not vanishingly small like the precision $\varepsilon$. Additionally, we improve the dependence on $M$ exponentially.

\subsection{Bound from the empirical risk}
\addtocounter{thm}{-3}
\addtocounter{cor}{2}
\begin{cor}
\label{cor:emp_risk}
Over a training set of size $m$, the empirical risk of \ref{eq:skc*}
\begin{equation}
\epsilon_{0}\left(f^{(N)}\right)\leq\epsilon_{1}\left(f^*\right)
\end{equation}
is bounded by the {\em margin error} of \ref{eq:ekc*}, with probability $1-\delta$, whenever
\begin{equation}
\label{N_margin}
N\geq \frac{\c^{2}}{2}\norm{\beta^{*}}^{2} \log \frac{m}{\delta}.
\end{equation}
\end{cor}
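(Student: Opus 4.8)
The plan is to obtain this as an immediate specialization of Theorem \ref{thm:emp_risk}. Recall that Theorem \ref{thm:emp_risk} bounds the classification error $\epsilon_{0}(f^{(N)})$ of \ref{eq:skc*} by the $\gamma$-margin error $\epsilon_{\gamma}(f^{*})$ of \ref{eq:ekc*} over any dataset $\mathcal{D}$ of size $M$, with probability at least $1-\delta$, provided $N \geq \frac{\c^{2}}{2\gamma^{2}}\norm{\beta^{*}}^{2}\log\frac{M}{\delta}$. The corollary is precisely the case in which the ``dataset'' is taken to be the training set $\mathcal{D}_{\train}$ itself, so that $M=m$, with the margin fixed at $\gamma=1$.

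First I would instantiate $\mathcal{D}=\mathcal{D}_{\train}$ and $M=m$ in Theorem \ref{thm:emp_risk}, and then set $\gamma=1$. By the discussion following Definition \ref{margin_violation}, the set $\myS_{1}[f^{*}]$ consists precisely of the points on the wrong side of the SVM margins $y_{i}(\sum_{j}\beta_{j}^{*}K^{*}(\bx_{j},\bx_{i})+b^{*})=1$, i.e.\ the points contributing nonzero hinge loss, so $\epsilon_{1}(f^{*})$ is exactly the standard \emph{margin error} of \ref{eq:ekc*}. Substituting $\gamma=1$ and $M=m$ into the bound (\ref{eq:N_bound}) yields $N\geq\frac{\c^{2}}{2}\norm{\beta^{*}}^{2}\log\frac{m}{\delta}$, which is (\ref{N_margin}), and the conclusion $\epsilon_{0}(f^{(N)})\leq\epsilon_{1}(f^{*})$ follows with probability at least $1-\delta$; since the empirical risk of \ref{eq:skc*} over the training set is by definition $\epsilon_{0}(f^{(N)})$ evaluated on $\mathcal{D}_{\train}$, this is the claimed statement.

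There is essentially no obstacle here beyond bookkeeping: the only points requiring care are the identification of $\epsilon_{1}(f^{*})$ with $\epsilon_{\gamma}(f^{*})$ at $\gamma=1$, and of $M$ with $m$, both of which are already established in the main text. If one preferred a self-contained argument, one would simply re-run the proof of Theorem \ref{thm:emp_risk}: demand $\myS_{0}[f^{(N)}]\subseteq\myS_{1}[f^{*}]$ with probability $\geq 1-\delta$, reduce via the union bound over the $m$ training points to the per-point requirement $\Prob\big(y_{i}\sum_{j}\Delta K^{(N)}_{ij}\beta^{*}_{j}>1\big)\leq\delta/m$, and close using Lemma \ref{lemma:tail} together with the uncertainty model \ref{eq:um} and the circuit factor of Definition \ref{lemma:cf}.
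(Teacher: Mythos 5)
Your proposal is correct and matches the paper's intent exactly: the corollary is stated without a separate proof precisely because it is the immediate specialization of Theorem \ref{thm:emp_risk} to $\mathcal{D}=\mathcal{D}_{\train}$ (so $M=m$) with $\gamma=1$, where $\epsilon_{1}(f^{*})$ is the standard SVM margin error as noted after Definition \ref{margin_violation}. Your bookkeeping of the substitution into (\ref{eq:N_bound}) and the optional self-contained re-derivation via the union bound and Lemma \ref{lemma:tail} are both accurate.
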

The margin errors of \ref{eq:ekc*} are listed in the right-most column of Table \ref{tab:N_prac}.
The corresponding mark is a black (dash-dotted) vertical line in Figure \ref{fig:N_prac}.
We stress that these values of $N$ are quite small compared to what is needed to ensure reliable reproduction of \ref{eq:ekc*} (green region).
The corollary above ensures a small empirical risk for \ref{eq:skc*}.

Often $m_{\sv}\ll m $ can be taken as a constant not dependent on $m$.
When applied over the training set itself, the number of circuit evaluations per kernel entry scales as $N\sim \log m/\gamma^{2}$.
Demanding $\norm{\K^{(N)}-\K^{*}}\leq \epsilon$ in the operator or Frobenius distance gives $N \sim m^{2}/\varepsilon^{2}$ scaling \cite{Havlicek2019}. Note again, the parameter $\gamma$ in Theorem \ref{thm:emp_risk}
plays a role analogous to the precision $\varepsilon$, but is not vanishingly small.

\subsection{Benchmarks in literature}

As we mentioned in the main article, clearly stated benchmarks for $N$ are not easily found.
There exists also a large discrepancy in the number of measurements used in various studies of quantum classifiers \cite{Wang2021nisq,Hubregtsen2021,Havlicek2019}.
In the freely accessible IBM quantum processors, the default number of measurements per job is set to $1024$ (since 2023 it is $4000$, see \url{https://docs.quantum-computing.ibm.com/admin/faq-admin}).
While Refs.\ \cite{Wang2021nisq,Hubregtsen2021} work close to these restrictions, Ref.\ {\cite{Havlicek2019}} works with a much larger 50,000 measurements per kernel entry.
The error mitigation technique used in \cite{Havlicek2019} results in higher requirements.
Ref.\ \cite{Hubregtsen2021} used 175 measurements to obtain the training kernel matrix and get good accuracies over the test set, but it is not clear the number of measurements used for prediction.
Ref.\ \cite{Wang2021nisq} uses the same number of measurements for both the training and the testing phases. The range of observed accuracies in the latter is wide, indicating that the classification is unreliable.

In our simulations using the \ref{eq:J_rob_l2_est} Program, we use a confidence interval of $\Delta(T)=0.1$. ($ |K^{*}_{ij}-K^{(T)}_{ij}|\leq0.1$, with at least $99\%$ probability, for each matrix entry). This results in a modest shots requirement
of only about $T\approx400$. 
Our work indicates that the number of measurement shots is far more crucial during the classification phase (for prediction) than during the training phase.

\section{Device Noise}
\label{sec:depol}
\begin{table*}
\def\arraystretch{1.5}
\begin{tabular}{|c|c|c|c|c|c|}
\hline
Number &\thead{Training\\ Kernel\\ Matrix }& \thead{Optimization problem}
  & \thead{Noisy Classifier} & \thead{Ideal classifier} & \thead{Accuracy of Ideal \\
 Classifier over $\D_{\train}$}\\
 \hline
1&  $\hat{\K}^{+}$ & $\beta^{*},b^*$ from \ref{opt:p}
 & \thead{U-SKC\\ $f^{(N,\lambda)}_{u}(\bx) = f(\bx|\beta^{*},b^{*},K^{(N,\lambda)})$} & \thead{U-EKC\\ $f^{*}_{u}(\bx) = f(\bx|\beta^{*},b^{*},K^{*})$} & 80\% \\ \hline
2& $\hat{\K}_{\textrm{miti}}^{+}$ & $\beta^{*},b^*$ from \ref{opt:p}
 & \thead{M-SKC\\ $f^{(N,\lambda)}_{m}(\bx) = f(\bx|\beta^{*},b^{*},K^{(N,\lambda)}_{\textrm{miti}})$} &
 \thead{M-EKC\\ $f^{*}_{m}(\bx) = f(\bx|\beta^{*},b^{*},K^{*})$} & 97.5\%\\
 \hline
3& $\hat{\K}$ & $\beta^{*},b^*$ from \ref{eq:J_rob_l2_est}& \thead{U-RSKC\\ $h^{(N,\lambda)}_{u}(\bx) = f(\bx|\beta^{*},b^{*},K^{(N,\lambda)})$} & \thead{U-REKC\\$h^{*}_{u}(\bx) = f(\bx|\beta^{*},b^{*},K^{*})$} & 100\% \\
 \hline
4&$\hat{\K}_{\textrm{miti}}$ & $\beta^{*},b^*$ from \ref{eq:J_rob_l2_est}&
\thead{M-RSKC\\ $h^{(N,\lambda)}_{m}(\bx) = f(\bx|\beta^{*},b^{*},K^{(N,\lambda)}_{\textrm{miti}})$} &
\thead{M-REKC\\ $h^{*}_{m}(\bx) = f(\bx|\beta^{*},b^{*},K^{*})$} & 100\% \\
\hline
5&$\K^{*}$ & $\beta^{*},b^*$ from \ref{opt:p}&
\thead{---} &
\thead{EKC\\ $f^{*}(\bx) = f(\bx|\beta^{*},b^{*},K^{*})$} & 100\% \\
\hline
\end{tabular}
\captionsetup{justification=raggedright}
\caption{ Our experiment here studies how reliably each of these noisy classifiers reproduces its ideal counterpart. The ideal classifiers have access to the exact kernel function during the classification but have been trained using a (noisy) estimated training kernel matrix.
The training kernel matrix $\hat{K}_{ij} = K^{(T,\lambda)}_{ij} $ is measured using $T=400$ shots and is fixed. The depolarizing parameter was set to $\lambda=0.05$. It is error-mitigated for the classifiers (2) and (4) using the M-MEAN method in Ref.\ \cite{Hubregtsen2021}. Classifier (5) represents the ideal classifier which has been trained using the exact training kernel matrix $\K^{*}$ and has access to the exact kernel function $K^{*}$ during the classification.}
\label{tab:rob_zoo}
\end{table*}

\begin{figure*}
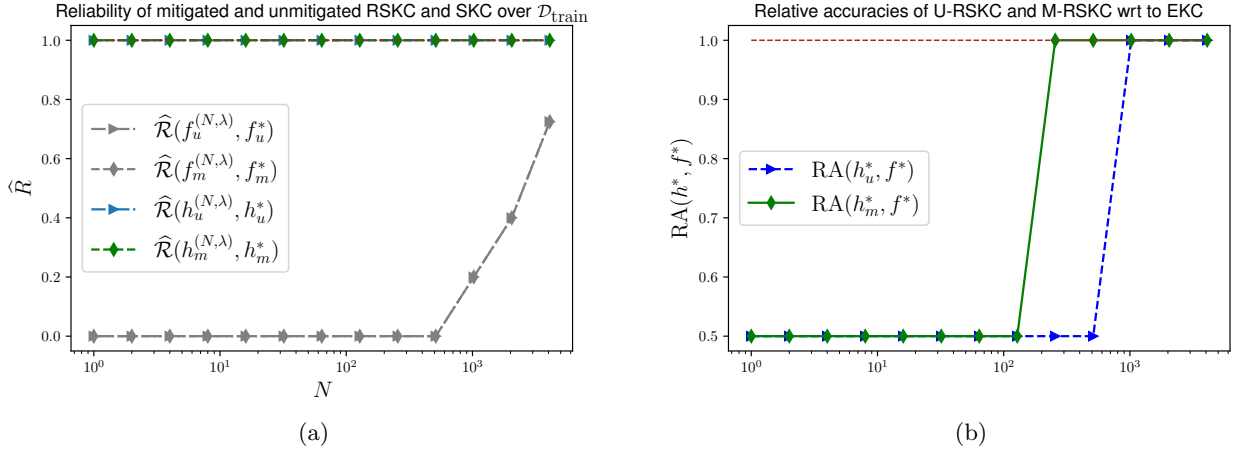

    \centering
\centering
    \begin{subfigure}[hb]{0.48\textwidth}
    \centering
    \includegraphics[width=\textwidth,trim = 0cm 0cm 0cm 0cm]{self_rel_circles_depol.pdf}
    \caption{}
    \label{fig:rel_compare_depol}
    \end{subfigure}
\hspace{-0.08cm}
    \begin{subfigure}[hb]{0.48\textwidth}
    \centering
    \includegraphics[width=\textwidth,trim = 0cm 0cm 0cm 0cm]{self_rel_acc_circles_depol.pdf}
    \caption{}
    \label{fig:rel_acc_depol}
    \end{subfigure}
\captionsetup{justification=raggedright}
\caption{(a) The robust classifiers $h^{(N,\lambda)}_{u}$ and $h^{(N,\lambda)}_{m}$ are far more reliable than the nominal SVM classifiers $f^{(N,\lambda)}_{u}$ and $f^{(N,\lambda)}_{m}$.
They reliably reproduce the output of their ideal counterparts $h^{*}_{u}$ and $h^{*}_{m}$ which have the same parameter values (i.e., $\beta^{*},b^*$) for all values of $N$. The number of independent trials $N_{\trials}=100$.
(b) Shows the relative accuracies of unmitigated ($h^{*}_{u}$) and mitigated ($h^{*}_{m}$) robust exact classifiers (REKC) as a function of $N$. The error mitigation helps the REKC reproduce the results of the ideal classifier $f^{*}$ for smaller $N$, but has no effect on the reliabilities of either the robust or nominal stochastic classifiers.}
\label{fig:rel_depol}
\end{figure*}

In this Section, we include a device noise using the depolarizing noise model and the error mitigation scheme employed in Ref.\ \cite{Hubregtsen2021}. 
The depolarizing noise is modeled using a quantum channel $D_{\lambda}: \mathcal{H}\otimes\mathcal{H^*}\to\mathcal{H}\otimes\mathcal{H^*} $ which maps a density matrix $\rho\in\mathcal{H}\otimes\mathcal{H^*}$ to
\begin{equation}
D_{\lambda}[\rho] = (1-\lambda) \rho + \frac{\lambda}{d}\mathbb{1},
\end{equation}
where $d^{2} = \textrm{dim}(\mathcal{H}\otimes\mathcal{H^*})$ is the dimension of the feature Hilbert space.
This leads to device kernel evaluations 
\begin{equation}
K^{(\textrm{dev})}_{ij} = (1-\lambda_{i}\lambda_{j})K_{ij} + \lambda_{i}\lambda_{j} \frac{1}{d},
\end{equation}
from which the noiseless kernel $K_{ij}$ will have to be inferred using one of the strategies put forth in Ref.\ \cite{Hubregtsen2021}. In our simulations,
we use the M-MEAN strategy.

In our simulations, we set $\lambda_{i}=\lambda=0.05$. The noisy kernel which is evaluated using $N$ measurement shots and a depolarizing parameter $\lambda$ is
denoted as $K^{(N,\lambda)}_{ij}= K^{(N,\lambda)}(\bx_{i},\bx_{j}) $. The training kernel matrix estimated using $T$ measurement shots and a depolarizing parameter $\lambda$ is denoted as
\begin{equation}
\widehat{\K}_{ij} = K^{(T,\lambda)}_{ij},\ \ i,j\in[m].
\label{eq:est_depol}
\end{equation}
After mitigating the error, we denote the training kernel matrix as $\hat{\K}_{\textrm{miti}}$.

Note that the matrix in (\ref{eq:est_depol}) will not be positive semidefinite. However, the nominal SVM optimization problem \ref{opt:p} requires a positive semidefinite matrix for it to be well posed. We shall denote its positive semidefinite part by $\widehat{\K}^{+}$ which
subtracts the largest negative eigenvalue $\kappa_{\min}$ to the spectrum. Here we refer to it as a spectral shift \cite{Wang2021nisq,Hubregtsen2021}:
\begin{equation}
\widehat{\K}^{+}= shift({\widehat{\K}}) := \begin{cases}
    \widehat{\K} - \kappa_{\min}\mathbb{1} & \text{if } \kappa_{\min}< 0\\
    \widehat{\K}             & \text{otherwise}.\end{cases}
\end{equation}
Similarly, the mitigated kernel matrix need not be positive semidefinite and we denote its positive semidefinite
part as
\begin{equation}
\widehat{\K}_{\textrm{miti}}^{+}= shift({\widehat{\K}_{\textrm{miti}}}).
\end{equation}

Note that the Robust Program (\ref{eq:J_rob_l2_est}) which we derived
using an estimated kernel matrix $\widehat{\K}$ does not need this ``fixing of the spectrum."
We now train the following classifiers:
\begin{enumerate}
\item Using the training kernel matrix $shift({\widehat{\K}})$, solve \ref{opt:p} to find the optimal parameters $\beta^{*},b^{*}$.
We then compare the resulting stochastic and exact kernel classifiers U-SKC and U-EKC, where U stands for ``unmitigated",
 respectively denoted by $f^{(N,\lambda)}_{u}$ and $f^{*}_{u}$.
\item Using the training kernel matrix $shift({\widehat{\K}}_{\textrm{miti}})$, solve \ref{opt:p} to find the optimal parameters $\beta^{*},b^{*}$. We compare the resulting stochastic and exact kernel classifiers as M-SKC and M-EKC, where M stands for ``mitigated", respectively denoted by $f^{(N,\lambda)}_{m}$ and $f^{*}_{m}$.
\item Using the training kernel matrix $\widehat{\K}$, and with specified values of parameters $N$ and $\delta_{1}= \delta_{2}= \delta_{2}'=0.01$, and $\Delta(\delta'_{1}/m,T)=0.1$,
solve (\ref{eq:J_rob_l2_est}) to find the optimal parameters $\beta^{*},b^{*}$.
We now compare the resulting robust stochastic and robust exact kernel classifiers as U-RSKC and U-REKC, respectively
denoted by $h^{(N,\lambda)}_{u}$ and $h^{*}_u$.
\item Using the training kernel matrix $\widehat{\K}_{\textrm{miti}}$, and with specified values of parameters $N$ and $\delta_{1}= \delta_{2}= \delta_{2}'=0.01$,
and $\Delta(\delta'_{1}/m,T)=0.1$, 
solve (\ref{eq:J_rob_l2_est}) to find the optimal parameters $\beta^{*},b^{*}$.
We now compare the resulting robust stochastic and robust exact kernel classifiers as M-RSKC and M-REKC, respectively
denoted by $h^{(N,\lambda)}_{m}$ and $h^{*}_{m}$.
\end{enumerate}
We now wish to learn the performance of the above classifiers over the training dataset $\D_{\train}$ itself.
The reliabilities of the above four stochastic classifiers are shown in \figurename\ \ref{fig:rel_compare_depol}. Even in the presence of the depolarizing noise, the robust classifiers far outperform the nominal classifiers on both the accuracy of EKC and the reliability of SKC. Note that the error mitigation has no effect on the reliability (see Fig.\ \ref{fig:rel_compare_depol}).

The error mitigation, however, helps the robust classifier $h^{*}_{m}$ reproduce the labels of EKC $f^{*}$ for a smaller value of $N$, as seen from \ref{fig:rel_acc_depol}.
Using the error mitigation and the robust version of the classifier (M-RSKC) reliably reproduces EKC for $N=2^{8}$, which is at par with the robust classifier when there is no depolarizing noise (RSKC, see Table \ref{tab:Reliability}). Without the error mitigation, the robust classifier needs $N=2^{10}$ measurements to reproduce EKC.

In this Section, the Robust Program \ref{eq:J_rob_l2_est} was used without any modification and the same confidence interval of $\Delta=0.1$ ensured its robustness even in the presence of the depolarizing noise.

\section{More on Accuracy and Reliability}
\label{sec:more_rel}
In this Section, we empirically justify the idea that the noise is far more relevant to the classification phase than for the training phase. To this end, we look at the training kernel matrix which is estimated using $T$ measurement shots per entry.
Following the notation in the preceding section, we train the SVM classifier using the matrix $\widehat{\K}^{+}$, where $\widehat{\K}_{ij} = K^{(T)}_{ij}$ is given by (\ref{eq:est_depol}) with no depolarizing noise, i.e., $\lambda=0$. We only take one instantiation of the training kernel matrix and fix it to $\widehat{\K} $, as done in the previous section and in Section \ref{sec:stat_est}.
We then represent the EKC and SKC using the notation $f^{*}_{T}$ and $f^{(N)}_{T}$ respectively, where the subscript $T$ denotes the shots used for estimating the training kernel matrix. By varying $T$, we compare their performance over an independent test set.
\begin{figure}
    \centering
\centering
    \begin{subfigure}[hb]{0.48\textwidth}
    \centering
    \includegraphics[width=\textwidth,trim = 0cm 0cm 0cm 0cm]{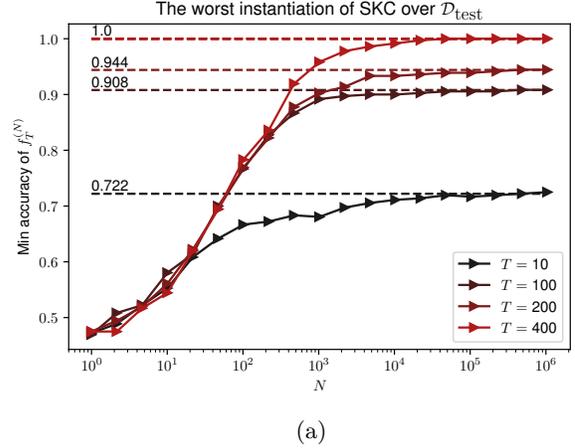}
    \caption{}
    \label{fig:worst_acc}
    \end{subfigure}
\hspace{-0.08cm}
    \begin{subfigure}[hb]{0.48\textwidth}
    \centering
    \includegraphics[width=\textwidth,trim = 0cm 0cm 0cm 0cm]{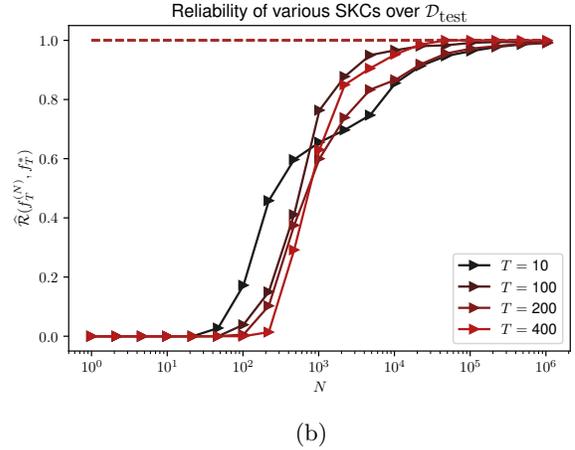}
    \caption{}
    \label{fig:T_rel}
    \end{subfigure}
\captionsetup{justification=raggedright}
\caption{(a) The accuracy of worst performing SKC instantiation, among $N_{\trials}=200$ independent trials, is shown as a function of $N$. The number of shots during training $T$ is varied. The horizontal dashed lines represent the accuracy of the corresponding EKC $f^{*}_{T}$. For large $N$, SKC has the same accuracy as the corresponding EKC.
(b) Shows the reliabilities of the SKCs $f^{(N)}_{T}$ with respect to their respective EKCs $f^{*}_{T}$. The noise during the training phase (as parametrized by $T$) does not affect the reliability of classification.}
\label{fig:more_rel}
\end{figure}

The accuracy of EKC for $T=10, 100, 200, 400$ are respectively 72.2\%, 90.8\%, 94.4\% and 100\% as indicated by the horizontal dashed lines in Fig.\ \ref{fig:worst_acc}. Here the minimum accuracy instantiation of SKC, over $N_{\trials}=200$ independent instantiations, is shown as a function of $N$.

In some sense, the reliability tells us about the worst-performing SKC instantiation: if the latter produces the same labels as EKC, then the SKC can be considered reliable. Noise during training is clearly seen to affect the accuracy of EKC but does not have an effect on the reliability of SKC (Fig.\ \ref{fig:T_rel}). In particular, EKC with $T=400$ has 100\% accuracy over the test dataset. However, the number of measurements $N$ needed for the nominal SKC to reliably reproduce EKC is $N=2^{14}\gg400$ (see Table  \ref{tab:test_reliability}).

\end{appendix}

\end{document}